\tikzset{wave/.style={decorate, decoration=snake}}
\definecolor{darkspringgreen}{rgb}{0.05, 0.5, 0.06}
\definecolor{MyBlue}{rgb}{0.25,0.25,0.75}
\definecolor{MyRed}{rgb}{0.75,0.25,0.25}
\colorlet{NextBlue}{MyBlue!20}
\colorlet{SecondBlue}{MyBlue!40}
 \definecolor{shadecolor}{rgb}{0.9, 0.9, 0.86}
\definecolor{purple}{rgb}{0.7,0,0.7}
\def\be{\begin{equation}}
\def\ee{\end{equation}}
\newcommand{\CS}{\mathcal{S}}
\newcommand{\CO}{\mathcal{O}}
\newcommand{\CB}{\mathcal{B}}
\newcommand{\CM}{\mathcal{M}}
\newcommand{\CX}{\mathcal{X}}
\newcommand{\CH}{\mathcal{H}}
\newcommand{\CN}{\mathcal{N}}
\newcommand{\CW}{\mathcal{W}}
\newcommand{\tSigma}{{\widetilde{\Sigma}}}
\newcommand{\fs}{{{\mathfrak{s}}}}
\newcommand{\IC}{\mathbb{C}}
\newcommand{\IP}{\mathbb{P}}
\newcommand{\IZ}{\mathbb{Z}}
\newcommand{\IF}{\mathbb{F}}
\newcommand{\IR}{\mathbb{R}}
\renewcommand{\-}{\text{-}}
\renewcommand{\(}{\left(}
\renewcommand{\)}{\right)}
\renewcommand{\ker}{\mathrm{ker}}
\newcommand{\Hom}{\mathrm{Hom}}
\newcommand{\diag}{\mathrm{diag}}
\newcommand{\arccosh}{\mathrm{arccosh\,}}
\newcommand{\A}{\mathrm{A}}
\newcommand{\fb}{\mathfrak{b}}
\newcommand{\CL}{\mathcal{L}}
\newcommand{\CR}{\mathcal{R}}
\newcommand{\CV}{\mathcal{V}}
\newcommand{\CP}{\mathcal{P}}
\newcommand{\res}{\mathrm{res}}
\newcommand{\Uij}{T}
\newcommand{\CU}{{Y}}
\renewcommand{\Im}{{\rm Im\,}}
\newcommand{\dd}{\mathrm{d}}
\newcommand{\tr}{\mathrm{tr}}
\newcolumntype{C}[1]{>{\centering\arraybackslash}m{#1}}
\newcommand{\R}{\mathcal{R}}
\newtheorem{remark}{Remark}
\newtheorem{conjecture}{Conjecture}
\newtheorem{proposition}{Proposition}
\newtheorem{theorem}{Theorem}
\newtheorem{corollary}{Corollary}
\newtheorem{definition}{Definition}
\newtheorem{assumption}{Assumption}
\title{Monodromies of Second Order $q$-difference Equations from the WKB Approximation}
\author[a]{Fabrizio  Del Monte}
\emailAdd{f.delmonte.mp@gmail.com}
\author[b,c]{Pietro Longhi}
\affiliation[a]{School of Mathematics and Statistics, University of Sheffield, Hounsfield Road, Sheffield S3 7RH, United Kingdom}
\affiliation[b]{Department of Physics and Astronomy, Uppsala University, Box 516, 751 20 Uppsala, Sweden}
\affiliation[c]{Department of Mathematics, Uppsala University, Box 480, 751 06 Uppsala, Sweden}
\affiliation[d]{Centre for Geometry and Physics, Uppsala University, Box 480, 751 06 Uppsala, Sweden}
\emailAdd{
pietro.longhi@physics.uu.se
}
\abstract{
This paper studies the space of monodromy data of second order $q$-difference equations through the  framework of WKB analysis. We compute the connection matrices associated to the Stokes phenomenon of WKB wavefunctions and develop a general framework to parameterize monodromies of $q$-difference equations. Computations of monodromies are illustrated with explicit examples, including a $q$-Mathieu equation and its degenerations. 
In all examples we show that the monodromy around the origin of $\mathbb{C}^*$ admits an expansion in terms of Voros symbols, or exponentiated quantum periods, with integer coefficients. 
Physically these monodromies correspond to expectation values of Wilson line operators in five dimensional quantum field theories with minimal supersymmetry. In the case of the $q$-Mathieu equation, we show that the trace of the monodromy can be identified with the Hamiltonian of a corresponding $q$-Painlev\'e equation.
}
\begin{document} 

\hfill UUITP-14/24

\maketitle
\flushbottom

\section{Introduction}

The study of monodromy data of differential equations is ubiquitous both in Physics and in Mathematics, with wide ranging applications across Quantum Field Theory and String Theory.
Recent developments at the intersection of String Theory, Algebraic Geometry, Integrable Systems and Conformal Field Theory have motivated an interest in the related, but less-studied, subject of $q$-difference equations.

Foundational results on (generalised) character varieties of $q$-difference equations
date back to works of Birkhoff and Carmichael in the early twentieth century \cite{Birkhoff1913,Carmichael1912}. However this subject is much less developed than its counterpart for differential equations, with interesting exceptions which include a recent approach based on   Riemann-Hilbert correspondences \cite{Ohyama2020,Joshi2023,Ramis2023}.%
\footnote{It should be noted that with such Riemann-Hilbert approach, the term ``monodromy" is a bit of a misnomer, as it is not related to multivaluedeness but rather to the connection between the singularity at zero and infinity on $\mathbb{C}^*$. With our approach, one has \textit{bona fide} monodromy and Stokes phenomenon, somewhat analogous to the case of differential equations. We leave the study of the relation between these two notions of monodromy spaces for $q$-difference equations to future work.}
In this paper we take a different approach, based on the WKB approximation, to the characterization of monodromy data of $q$-difference equations.

A motivation behind our approach is the connection between WKB analysis and cluster varieties in the context of differential equations \cite{FockGoncharovHigher,hikami2019note}. Applications of this relations have appeared in String Theory \cite{Gaiotto:2009hg}, two-dimensional Conformal Field Theory \cite{Gavrylenko:2020gjb}, Integrable Systems \cite{chekhov2017algebras,chekhov2017painleve} and Algebraic Geometry \cite{Bridgeland:2016nqw}. 
This connection has found many concrete realizations from a pure mathematical standpoint.
For example, in the context of second order differential equations it has been shown that Stokes phenomena of Borel summed Voros symbols (periods of WKB differentials) are governed by cluster transformations \cite{IwakiNakanishi1, Allegretti:2018kvc, nikolaev2023exact}. 

Analogous connections between $q$-difference equations and cluster theory have also been studied from several directions \cite{Franco:2011sz, DML2022,Bershtein2017,Bershtein2019,Mizuno2020}.
A concrete realization of this connection in the context of first order $q$-difference equations has been recently established in \cite{Alim:2021mhp,Alim:2022oll,Grassi:2022zuk}, where resurgent structures of quantum dilogarithms have been shown to govern Stokes phenomena.

In this paper we lay foundations for the program proposed in \cite[Section 3]{DML2022}, which aims to characterize monodromies of general $q$-difference equations by means of the WKB approximation. 
In this paper we focus on the setting of second order $q$-difference equations, which presents several new challenges compared to the first-order setting, from a technical standpoint. 
In the remainder of the introduction we give an overview of our approach and the main results.

\subsection*{Main results}

In Proposition \ref{prop:Involutive}, we show that any second-order $q$-difference equation ($q$DE) on $\mathbb{C}^*$ can be brought to the form
\begin{equation}\label{eq:CanForm}
    \psi(qx)+\psi(q^{-1}x)=2T(x,q)\psi(x),
\end{equation}
with the caveat that this can involve uplifting the equation to a cover of $\mathbb{C}^*$. When the singularities of the original equations are only $0,\infty$, this cover is trivial. With this in mind, we restrict throughout the paper to equations of this form. The WKB approximation is introduced by setting $q=e^{\hbar}$, and considering \eqref{eq:CanForm} as a singularly perturbed equation as $\hbar\rightarrow0$.

Theorem~\ref{thm:WKBSol} shows that the WKB approximation produces solutions as formal power series in $\hbar$, of the form 
\begin{equation}
	\psi_{\pm,n}(x)=e^{-\chi(x;\hbar)}\exp\left\{\frac{1}{\hbar}\int_{x_0}^x\log\R_{\pm,n}(x',\hbar)\frac{\dd x'}{x'} \right\}, \qquad n\in\mathbb{Z},
\end{equation}
with $\chi$ and $\log\CR$ explicitly determined as formal power series by a recursive formula. $\log\CR\frac{\dd x}{x}$ is a differential on a logarithmic cover of the WKB curve
\begin{equation}\label{eq:WKBCurveIntro}
    y+y^{-1}=2T(x,q=1)\subset\mathbb{C}^*_x\times\mathbb{C}^*_y.
\end{equation}

A consequence of this is that the WKB Stokes diagram consists of critical trajectories of the (leading order) WKB differential $\lambda_{\pm,n}=\lim_{\hbar\rightarrow 0}\log\CR(x,\hbar)\frac{\dd x}{x}$, and can be described by
\be\label{eq:Stokes-line-Im-zero}
	\Im\left[
	\hbar^{-1} \, \int^x\left(\lambda_{s_2,n_2}-\lambda_{s_1,n_1}\right)  
	\right] = 0.
\ee
The resulting Stokes diagram coincides with the geometric data underlying Exponential Networks \cite{Klemm:1996bj,Eager:2016yxd,Banerjee:2018syt,Banerjee:2019apt,Banerjee:2020moh}, a connection that played a central role in the context of first order $q$-difference equations \cite{Alim:2022oll,Alim:2021mhp,Grassi:2022zuk}. 
An advantage afforded by the relation to networks is that it can be adopted as a definition of the Stokes graph.
Stokes lines \eqref{eq:Stokes-line-Im-zero} can originate both at branch points of \eqref{eq:WKBCurveIntro} as in standard WKB theory, but also at logarithmic singularities whenever these appear at finite distance (we do not consider this case in this work).
As they evolve, Stokes lines typically intersect each other in nontrivial patterns. 
At intersections new Stokes lines can be generated. Whether this happens, and what kinds of new lines appear is established by rules motivated by physical and mathematical considerations in the context of exponential networks~\cite{Banerjee:2018syt}. 

A novel feature of Stokes lines for $q$-difference equations is that they carry an additional label  $\ell\in\mathbb{Z}$, related to the logarithmic nature of the WKB differential.
Working in a WKB basis of solutions, we find that Stokes matrices associated to Stokes lines have the following form
\begin{equation}\label{eq:Stokes-matirces-intro}
    S^{(\ell)}=\left( \begin{array}{cc}
        -\xi^\ell & i \\
        i & 0
    \end{array} \right),\qquad \xi:=\left(\frac{x}{\fb} \right)^{\frac{2\pi i}{\hbar}}.
\end{equation}
Here $\fb$ denotes the position of a branch point, which is the choice of basepoint that determines the normalization of WKB solutions.
The case $\ell=0$ coincides with the Stokes matrices familiar in the context of Voros' connection formulae \cite{Voros}. 

To compute monodromies and construct Voros symbols, it is essential to parameterize analytic continuation between solutions normalised at different branch points.
For branch points that share a common Stokes sector, we define the transport in terms of one of the following two types of matrices:
\begin{equation}\label{eq:transport-matirces-intro}
\left( \begin{array}{cc}
0 & i\CU_{\fb\fb'} \\
i\CU_{\fb\fb'} & 0
\end{array} \right), \qquad \text{ or }\qquad  
\left( \begin{array}{cc}
\CU_{\fb\fb'} & 0 \\
0 & \CU_{\fb\fb'}^{-1}
\end{array} \right).
\end{equation}

\begin{table}[h!]
\caption{$q$-difference equations, classical geometries, and monodromies.}  
\begin{center}
\begin{tabularx}{\textwidth}{|C{0.215\textwidth}|C{0.5\textwidth}|C{0.2\textwidth}|}
    \hline
    Geometry & $q$-difference equation & Monodromy \\
    \hline\hline
    \begin{tabular}{c}
        \includegraphics[width=0.15\textwidth]{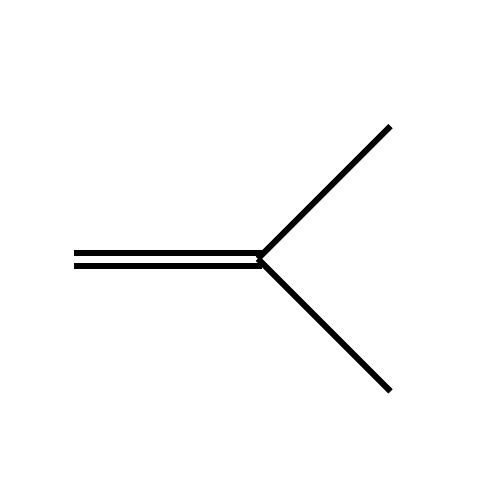} \\[2pt]
        $\IC\times (\IC^2/\IZ_2)$
    \end{tabular} &
    \begin{tabular}{c}
        $q$-Airy\\[4pt]
        $\psi(qx)+\psi(q^{-1}x)=2x\psi(x)$
    \end{tabular} &
    \begin{tabular}{c}
	\eqref{eq:TrAiry}
    \end{tabular} \\
    \hline
    \begin{tabular}{c}
        \includegraphics[width=0.15\textwidth]{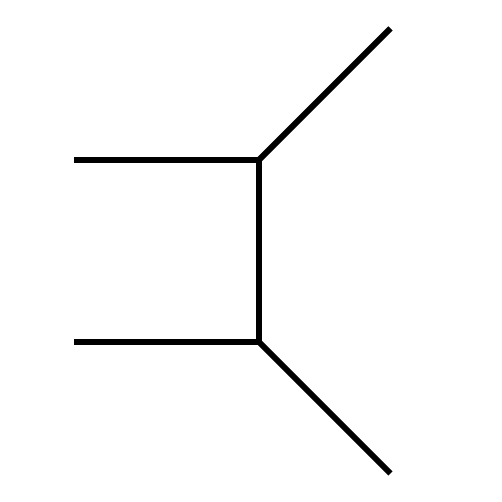} \\[4pt]
        $\CO(0)\oplus \CO(-2)$
    \end{tabular} &
    \begin{tabular}{c}
        $q$-Airy$_{\kappa}$\\[4pt]
        $\psi(qx)+\psi(q^{-1}x)=2(\kappa+x)\psi(x)$
    \end{tabular} &
    \begin{tabular}{c}
	\eqref{eq:half-geometry-monodromy}
    \end{tabular} \\
    \hline
    \begin{tabular}{c}
        \includegraphics[width=0.15\textwidth]{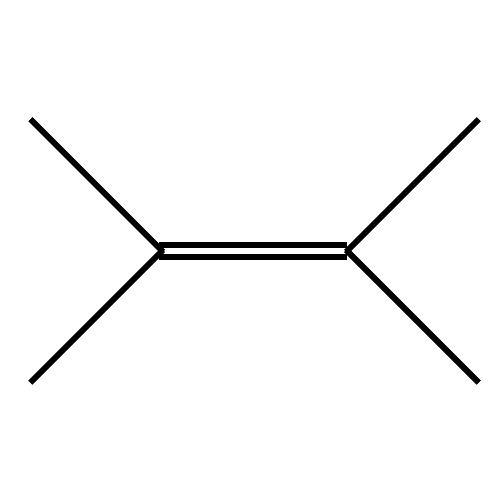} \\[4pt]
        Resolved conifold
    \end{tabular} &
    \begin{tabular}{c}
        $q$-hypergeometric (1,1)\\[4pt]
        $\psi(qx)+\psi(q^{-1}x) = 2Q^{-1}\left(x^{\frac{1}{2}}+x^{-\frac{1}{2}} \right)\psi(x)$
    \end{tabular} &
    \begin{tabular}{c}
	\eqref{eq:conifold-mondromy}, \eqref{eq:conifold-half-mondromy}
    \end{tabular} \\
    \hline
    \begin{tabular}{c}
        \includegraphics[width=0.15\textwidth]{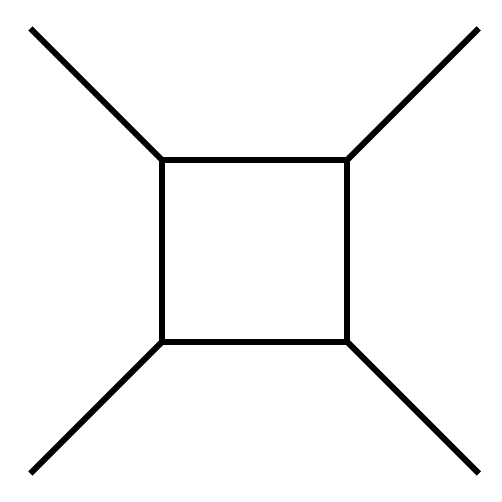} \\[4pt]
        Local $\mathbb{P}^1\times\mathbb{P}^1$
    \end{tabular} &
    \begin{tabular}{c}
        $q$-Modified Mathieu\\[4pt]
        $\psi(qx)+\psi(q^{-1}x)=\left(-\tau(x+x^{-1})+\kappa \right)\psi(x)$
    \end{tabular} &
	\eqref{eq:localF0-monodromy-final}
    \\
    \hline
\end{tabularx}
\end{center}
\label{tab:recap}
\end{table}

The Stokes matrices \eqref{eq:Stokes-matirces-intro} and the transport matrices \eqref{eq:transport-matirces-intro} provide the building blocks of a general formalism that allows to parametrize explicitly monodromy matrices of $q$-difference equations.\footnote{Analogous constructions are well-known in the context of differential equations, see for example \cite[Section 5.2]{Gavrylenko:2020gjb}.} 
In Section \ref{sec:Examples} we illustrate applications of this formalism with several examples, for which we compute the  of WKB solutions around the origin in $\IC^*$. 
The examples that we study are summarized in Table \ref{tab:recap}. The most interesting case is that of the $q$-Mathieu equation, whose monodromy has trace (in appropriate coordinates, whose meaning is spelled out in Section \ref{sec:local-F0})
\begin{equation}\label{eq:F0-monodromy-intro}
    \tr\,\mathbb{M}
	=
	\frac{1}{\sqrt{X_{{D2}_f}}}
	+ \sqrt{{X_{{D2}_f}}}
	+ \frac{\sqrt{X_{D2_f}}}{X_{D4}} 
	+ \frac{X_{D0}X_{D2_f}}{X_{{D2}_b}} \frac{X_{D4}}{\sqrt{X_{D2_f}}}
\end{equation}
Notably, this expression coincides with the Hamiltonian of the q-Painlevé equation describing the cluster integrable system associated to this WKB curve \cite{Bershtein2017} (surface type $A_7^{(1)'}$ in Sakai's classification \cite{Sakai2001}). We expect the q-Mathieu equation that we describe to be linked to the same q-Painlev\'e equation also in a different way, through the q-Painlev\'e-Heun correspondence \cite{Takemura2017,Sasaki2023}. This hints back to one of the motivations behind our approach, namely the connection to cluster varieties, whose development we leave to future work.

\paragraph{Some open questions.} 
There are several points that are raised throughout our work which deserve further attention.
One of these is the analysis of equations which feature singularities (punctures) at finite $x$ (other than $0,\,\infty$).
Another important matter is the parametrization of the connection matrix expressing the transport from $x=0$ to $x=\infty$, as this is crucial for a full characterization of the space of monodromies. 
Yet another natural question,
given the identification between our monodromy variables and cluster variables appearing in exponential networks,  is the study of the \textit{nonlinear} Stokes phenomenon for the monodromies, associated to jumps of the Stokes diagram itself.

\subsubsection*{Applications to Quantum Field Theory and String Theory}

Although our results solely rely on WKB analysis of $q$-difference equations, the motivations for this work and our interest in the subject come from the role played by $q$-difference equations in String Theory and in Geometry.
For the interested reader, here we comment on implications of our results in relation to these subjects.

Let $Y$ be a local Calabi-Yau threefold of the form
\begin{equation}\label{eq:MirrorCY3}
 \left\{uv=F(x,y)\right\}\subset \mathbb{C}^2_{u,v}\times(\mathbb{C}^*_{x,y})^2.
\end{equation}
The locus where the conic fibration $Y\rightarrow(\mathbb{C}^*_{x,y})^2$ degenerates is given by the algebraic curve $F(x,y)=0\subset(\mathbb{C}^*)^2$.
WKB curves of $q$-difference equations have precisely this form \eqref{eq:CanForm}, and can therefore be naturally associated with Calabi-Yau threefolds. 
Stokes diagrams, known as exponential networks in physics, arise in this context in the study of D3-branes wrapping special Lagrangian cycles on $Y$ \cite{Klemm:1996bj, Eager:2016yxd, Banerjee:2022oed}.
Stokes data encoded by matrices such as \eqref{eq:Stokes-matirces-intro} arises in physics in the construction of the 3d $tt^*$ connection \cite{Cecotti:2013mba}, whereby entries of the matrix are associated to counts of BPS states of 3d $\CN=2$ QFTs, known as \emph{kinky vortices} \cite{Banerjee:2018syt}.

Degenerations of the exponential network detect stable D3 branes. The spectrum of these BPS states corresponds to counts of special Lagrangian cycles in $Y$, and plays a central role in the (physical realization) of homological mirror symmetry. In fact the mirror of these are stable objects in the derived category of coherent sheaves on a mirror Calabi-Yau threefold $Y^\vee$. 
In string theory the curve $F(x,y)=0$ is naturally promoted to a $q$-difference operator due to worldsheet instanton corrections weighted by $g_s$ (with $q=e^{g_s}$) that deform the classical geometry of $Y$ \cite{Aganagic:2003qj}.

The $q$DEs that we study in this paper arise as examples of such quantum curves, and the corresponding exponential networks are studied extensively in \cite{Eager:2016yxd, Banerjee:2018syt, Banerjee:2019apt, Banerjee:2020moh, Longhi:2021qvz, Banerjee:2024smk}. The solutions to the $q$DEs can be thought as partition functions of five-dimensional gauge theories with insertion of a codimension-2 defect, or equivalently a toric brane insertion on an external leg of the toric diagram of the mirror geometry to \eqref{eq:MirrorCY3}, see \cite{Dimofte:2010tz} and references therein. 

Borel sums of Voros symbols, often called in this context \emph{quantum periods}, are then expected to parameterize the vacuum moduli space of a 5d $\CN=1$ QFT geometrically engineered by M theory on $Y^\vee\times T^2\times \IR^3$ \cite{Haghighat:2011xx, Alexandrov:2017mgi}.
This space is expected to be closely related to other types of moduli spaces that appear in the string theory literature, in particular hypermultiplet moduli spaces \cite{Alexandrov:2008gh}, the moduli space of doubly periodic monopoles \cite{Cherkis:2000cj}, and the moduli spaces of multiplicative Higgs bundles \cite{Elliott:2018yqm, Frassek:2018try}.
The quantum monodromies that we compute in this paper correspond in physics to expectation values of line operators in the 5d QFT, generalizing similar statements from the 4d setting obtained by reducing on $S^1$ \cite{Drukker:2009tz, Coman:2015lna}.
Moreover, monodromies can be expanded in the basis of quantum periods with integer coefficients, as expression \eqref{eq:F0-monodromy-intro} illustrates.
The coefficients of the expansion correspond to protected indices of \emph{Framed BPS states} \cite{Gaiotto:2010be}, and their computation by means of Stokes data is a generalization of similar constructions based on spectral networks \cite{Gaiotto:2011tf, Hollands:2013qza, Gabella:2016zxu, Gang:2017ojg}.

In the context of five-dimensional gauge theory, the WKB approximation relates to the so-called ``Nekrasov-Shatashvili'' limit of refined Topological Strings \cite{Aganagic:2011mi,Huang2013,Kashani-Poor2016}. An important observation in this regard is that, in order to obtain physically sensible results, it is essential to take into account certain additional nonperturbative corrections \cite{Kallen:2013qla,Grassi:2014zfa}. See \cite{Gu:2022fss} and references therein for some recent developments.

\subsection*{Acknowledgements}
We are grateful to Sibasish Banerjee, Tom Bridgeland, Harini Desiraju, Pavlo Gavrylenko, Alba Grassi, Kohei Iwaki, Marcos Mari\~no and Mauricio Romo for illuminating discussions. This research was partly conducted during FDM's visit to the Okinawa Institute of Science and Technology (OIST) through the Theoretical Sciences Visiting Program (TSVP), for the Thematic Program ``Exact Asymptotics: From Fluid Dynamics to Quantum Geometry''. FDM thanks the Galileo Galilei Institute for Theoretical Physics for the hospitality, during the program ``BPS Dynamics and Quantum Mathematics'', and the INFN for partial support during the completion of this work.
The work of PL is supported by the Knut and Alice Wallenberg Foundation grant KAW2020.0307 and by the VR grant to the Centre of Excellence in Geometry and Physics at Uppsala University.

\section{Exact WKB analysis of second order $q$-difference equations}\label{sec:2nd-order-qDE-WKB}

Consider a generic second order $q$-difference equation 
\begin{equation}\label{eq:2qDEgeneral}
\phi(qx)+a_1(x,q)\phi(x)+a_2(x,q)\phi(q^{-1}x)=0,
\end{equation}
where $a_i$ are rational functions of $(x,q)\in\mathbb{C}^*\times \mathbb{C}^*$ and possibly additional complex parameters.
\begin{remark}\label{rmk:$q$-periodicity}
	Due to linearity, any $q$-periodic function $\theta(qx)=\theta(x)$ gives rise to a new solution $\phi'(x) = \theta(x)\phi(x)$ to \eqref{eq:2qDEgeneral}. 
	It is therefore natural to consider the space of solutions to linear $q$-difference equations modulo $q$-periodic functions.
	Let $\mathcal{M}_q$ denote the field of $q$-periodic functions in $x$.
	We denote by $\CH$ the space of solutions to \eqref{eq:2qDEgeneral} as a vector space over $\mathcal{M}_q$.
\end{remark} 
Our first step will be to transform \eqref{eq:2qDEgeneral} into a more convenient form:
\begin{proposition}\label{prop:Involutive}
	Let $a_1(x),\,a_2(x)$ be rational functions in $x$. Through an explicit change of variables $\phi\rightarrow\psi$, the general second order $q$DE \eqref{eq:2qDEgeneral} can always be transformed in its \textit{involutive form}
	\begin{equation}
		\psi(qx)+\psi(q^{-1}x)=2T(x,q)\psi(x)
	\end{equation}
\end{proposition}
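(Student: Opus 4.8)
The plan is to realize the transformation as a gauge transformation $\phi(x)=f(x)\,\psi(x)$ and to use the freedom in $f$ to symmetrize the three-term recurrence. Substituting $\phi=f\psi$ into \eqref{eq:2qDEgeneral} and dividing by $f(qx)$ gives
\[
\psi(qx)+a_1(x)\,\frac{f(x)}{f(qx)}\,\psi(x)+a_2(x)\,\frac{f(q^{-1}x)}{f(qx)}\,\psi(q^{-1}x)=0 .
\]
The coefficient of $\psi(qx)$ is now normalized to $1$, so one lands on the involutive form precisely when $f$ is chosen so that the coefficient of $\psi(q^{-1}x)$ is also $1$, that is, so that $f$ solves the auxiliary \emph{first order} $q$-difference equation
\[
f(qx)=a_2(x)\,f(q^{-1}x),
\]
after which one simply reads off $T(x,q)=-\tfrac{1}{2}\,a_1(x)\,f(x)/f(qx)$. (We may assume $a_2\not\equiv 0$, else \eqref{eq:2qDEgeneral} is already first order.) Thus the whole content of the Proposition reduces to solvability of this auxiliary equation in a suitable class of functions, together with a specification of that class.

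To solve it I would exploit that the equation is multiplicative in $a_2$: if $f_i$ solves it with right-hand factor $a_2^{(i)}$, then $f_1f_2$ solves it with factor $a_2^{(1)}a_2^{(2)}$. Factoring the rational function as $a_2(x)=c\,x^{m}\prod_j(1-x/\rho_j)^{m_j}$ with $c\in\IC^*$ and $m,m_j\in\IZ$, it suffices to treat each elementary factor. Writing $q=e^{\hbar}$: for the constant $c$ one takes $f(x)=x^{\,\log c/(2\hbar)}$; for the monomial $x^{m}$ one takes $f(x)=\exp\!\left\{\tfrac{m}{4\hbar}(\log x)^2\right\}$; and for a factor $(1-x/\rho_j)^{m_j}$ one builds $f$ from a suitable integer power of a $q^2$-Pochhammer symbol $(q\,x/\rho_j;q^2)_\infty$ (equivalently a ratio of Jacobi theta functions), whose defining functional equation reproduces exactly the required factor. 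This last building block is meromorphic on $\IC^*$ when $|q|\neq 1$; in the regime $|q|=1$ relevant to the $\hbar\to 0$ WKB expansion one works instead with the corresponding formal (Borel-summable) solutions, or with theta functions. Multiplying the pieces yields $f$, hence the explicit change of variables $\phi\to\psi$ and the function $T$.

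Finally I would spell out when passing to a cover is unavoidable, which is also where I expect the only genuine subtlety to lie. The building blocks $x^{\,\log c/(2\hbar)}$ and $\exp\{\tfrac{m}{4\hbar}(\log x)^2\}$ are single-valued on $\IC^*$ only for special values of $\hbar$ and of the exponents; in general they force passage to a logarithmic cover of $\IC^*$ (this is precisely why the involutive forms in Table~\ref{tab:recap} contain factors such as $x^{1/2}$), whereas the $q^2$-Pochhammer factors are honestly defined on $\IC^*$. In particular, when $0,\infty$ are the only singularities of \eqref{eq:2qDEgeneral} and the relevant exponents are integral, the cover is trivial and the transformation stays on $\IC^*$. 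The crux of the argument is thus the auxiliary equation: solving it amounts to extracting a "$q$-square root" of $a_2$ — the shifts $f(qx)$ and $f(q^{-1}x)$ differ by a factor $q^2$ rather than $q$ — and it is exactly this mismatch that produces the half-integer exponents and the logarithmic-cover behaviour, to be handled case by case among theta functions, $q$-Pochhammer symbols, and their logarithmic-cover counterparts.
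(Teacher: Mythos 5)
Your proof is correct, and it reaches the involutive form by the same basic mechanism as the paper -- a scalar rescaling $\phi=f\psi$ whose multiplier is assembled factor-by-factor from a power of $x$, a Gaussian in $\log x$, and $q$-Pochhammer symbols -- but the auxiliary equation you impose on $f$ is set up differently, and in a way that is arguably more direct. The paper passes to the associated $2\times2$ system, observes $\det A(x)=a_2(qx)$, and normalizes the determinant by demanding $f(qx)=\sqrt{a_2(qx)}\,f(x)$, a first-order $q$-difference equation whose right-hand side involves a square root of $a_2$; the companion-form step that turns the unit-determinant system back into a symmetric three-term scalar recurrence is left implicit. You instead substitute $\phi=f\psi$ directly into \eqref{eq:2qDEgeneral} and read off the exact condition for the coefficient of $\psi(q^{-1}x)$ to become $1$, namely $f(qx)=a_2(x)\,f(q^{-1}x)$, which is an honest first-order $q^2$-difference equation with rational (no square root) right-hand side; the half-integer exponents responsible for the covering phenomenon then reappear through the $1/(2\hbar)$ and $1/(4\hbar)$ in your elementary solutions rather than through $\sqrt{a_2}$. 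Your building blocks check out ($f(qx)/f(q^{-1}x)$ equals $c$, $x^m$, and $(1-x/\rho_j)^{m_j}$ for the three pieces respectively, using $(z;q^2)_\infty=(1-z)(q^2z;q^2)_\infty$), and your closing discussion of when the construction forces a logarithmic cover of $\IC^*$ matches the caveat stated below \eqref{eq:CanForm}. The one point worth flagging in your favour: the paper's condition $f(qx)=\sqrt{a_2(qx)}\,f(x)$ gives $f(qx)/f(q^{-1}x)=\sqrt{a_2(qx)a_2(x)}$, which coincides with the $a_2(x)$ required by the scalar substitution only up to a further elementary adjustment, so your formulation of the auxiliary equation is the one that literally symmetrizes the recurrence.
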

\begin{proof}
The $q$-difference equation \eqref{eq:2qDEgeneral} can be written as a $2\times2$ linear system
\begin{equation}
	\Phi(qx)=A(x)\Phi(x),\qquad \Phi(x):=\left( \begin{array}{cc}
		\phi_1(x) & \phi_2(x) \\
		\phi_1(qx) & \phi_2(qx)
	\end{array} \right) \qquad A(x)=\left( \begin{array}{cc}
			0 & 1 \\
			-a_2(qx) & -a_1(qx)
		\end{array} \right)
\end{equation}
where $\phi_1,\,\phi_2$ are linearly independent solutions of \eqref{eq:2qDEgeneral} over the field of $q$-periodic functions $\mathcal{M}_q$. Set
\begin{equation}
	\Phi(x)=f(x)\Psi(x) 
\end{equation}
where $f(x)$ is a scalar factor satisfying
\begin{equation}
	f(qx)=\sqrt{\det A(x)}f(x)=\sqrt{a_2(qx)}f(x).
\end{equation}
Since $a_2$ is a rational function, this results in the $q$-difference equation for $f$
\begin{equation}
	f(qx)=a_0\prod_i(x-x_i)^{k_i},\qquad k_i\in\mathbb{Z},
\end{equation}
which is easily seen to have solution
\begin{equation}
	f(x)=x^{\frac{\log a_0}{\log q}} \prod_i f_i(x), \qquad f_i=\begin{cases}
	e^{\frac{k_i}{2}\log x\left(\frac{\log x}{\log q}-1 \right)}	& x_i=0, \\
	e^{k_i\log x_i\frac{\log x}{\log q}}(x_k/x;q)_\infty^{k_i},	& x_i\ne0.
	\end{cases}
\end{equation}
Here $(x;q)_\infty:=\prod_{k=1}^\infty(1-xq^k)$ is the infinite q-Pochhammer symbol, satisfying
\begin{equation}
    (qx;q)_{\infty}=(1-x)^{-1}(x;q)_{\infty}.
\end{equation}
\end{proof}
We will henceforth restrict to equations of this form.
\begin{definition}
	We define the involutive form of a second order $q$-difference equation to be
	\begin{equation}\label{eq:2qDEcan}
		\psi(qx)+\psi(q^{-1}x)=2T(x,q)\psi(x).
	\end{equation}
All information retained from \eqref{eq:2qDEgeneral} is encoded in theterm $T(x,q)$, that we will call the potential of the $q$DE, or its trace function.
\end{definition}
\subsection{The associated linear system}
We will sometimes find it useful to write the second order $q$-difference equation \eqref{eq:2qDEcan} as a first order $q$-difference linear system
\begin{equation}\label{eq:qLinSys}
\Psi(qx)=A(x)\Psi(x),
\end{equation}
where $A(x)$ is a $2\times 2$ matrix,
\begin{equation}
A(x)=\left( \begin{array}{cc}
0 & 1 \\
-1 & 2T(qx)
\end{array} \right),
\end{equation} 
$\Psi$ is the fundamental matrix of solutions
\be\label{eq:matrix-solutions}
	\Psi(x):=\left(\begin{array}{cc}\psi_1(x) & \psi_2(x)\\ \psi_1(qx) & \psi_2(qx)\end{array} \right),
\ee
with $\psi_1,\,\psi_2$ linearly independent solutions to \eqref{eq:2qDEgeneral} over $\mathcal{M}_q$. 

A $q$-gauge transformation acts on \eqref{eq:qLinSys} from the left, by replacing $\Psi, A$ with
\be
	\Psi'(x) = g(x)\cdot \Psi(x)\,,
	\qquad
	A'(x) = g(qx) \cdot A(x)\cdot g(x)^{-1}\,.
\ee
When the gauge transformation is $q$-periodic, i.e. $g(qx)=g(x)$ is valued in $GL(2,\mathcal{M}_q)$, it specializes to a global change of basis over the field of $q$-periodic functions (the analogue of constant gauge transformations in the setting of differential equations).

\subsection{WKB ansatz}
We will treat \eqref{eq:2qDEcan} as a singularly perturbed equation in the parameter $\hbar:=\log q\rightarrow 0$, and it will be 
assumed that $T(x,q)$ has a regular expansion around $\hbar=0$
\begin{equation}
T(x,\hbar)=\sum_{k=0}^{\infty}T_k(x)\hbar^k.
\end{equation}
To solve \eqref{eq:2qDEcan} we consider the WKB ansatz 
\begin{align}\label{eq:WKBAns}
\psi(x,\hbar)=\exp\left(\int_{x_0}^x S(x,\hbar)\frac{\dd x}{x} \right), && S(x,\hbar)=\sum_{k=-1}^{\infty}S_k(\log q)^k=\sum_{k=-1}^{\infty}S_k\hbar^k\,,
\end{align}
where $x_0$ encodes a choice of normalization.
The leading order term obeys
\begin{equation}\label{eq:WKBCurveS}
y+y^{-1}=2T_0(x),\qquad y:=e^{S_{-1}(x)}\,.
\end{equation}
This equation defines the \textit{WKB curve} $\Sigma$ as an algebraic curve in $\mathbb{C}^*\times \mathbb{C}^*$. 
In terms of the coefficients of \eqref{eq:2qDEgeneral}, we have 
\begin{equation}
T_0(x)= \lim_{q\to 1}-\frac{a_1(x,q)}{2\sqrt{a_2(x,q)}}.
\end{equation}
Due to the logarithm in the definition of $S_{-1}=\log y$, 
WKB solutions are labeled both by a discrete index $s=\pm$ for solutions of the algebraic curve in $y$
\be
	y_\pm:=T_0\pm \sqrt{T_0^2-1}
\ee
and by a logarithmic index $n$ that defines $S_{-1}$. 
In the limit $\hbar\to 0$ the WKB solutions have the following asymptotics \cite{DML2022}
\begin{align}\label{eq:psipmnleading}
	\psi_{\pm,n}(x)\mathop{\sim}_{\hbar\rightarrow0}\exp\left\{\frac{1}{\hbar}\int^x_{x_0}\lambda_{\pm, n} \right\}
\end{align}
where
\be\label{eq:lambda-pm-n}
	\lambda_{\pm,n}:= S_{-1}(x) \frac{\dd x}{x} = \left(\log y_\pm +2\pi i n \right)\frac{\dd x}{x}=\left(\pm\log y_++2\pi i n \right)\frac{\dd x}{x} \,.
\ee
Due to the presence of the logarithm function, $\lambda_{\pm,n}$ are not differentials on the curve $\Sigma$, but rather on its infinite $\mathbb{Z}$-cover induced by $\log y\rightarrow y$.

As observed in remark \ref{rmk:$q$-periodicity}, while $\psi_{\pm,n}$ provide infinitely many linearly independent solutions over $\mathbb{C}$, only two among these solutions are independent over the field $\mathcal{M}_q$ of $q$-periodic functions. We will see later (see Section \ref{Sec:AllOrdWKB}) that any two choices of the logarithmic index are related by multiplication by simple $q$-periodic functions, and that
the dependence on the logarithmic index $n$ enters only through the semiclassical behaviour in \eqref{eq:psipmnleading}.

\subsection{General structure of the involutive WKB curve}\label{Sec:CritPtsCurve}

There are three types of points on the WKB curve that play a distinguished role in our analysis.
\paragraph{Square-root branch points.}
For WKB curves in involutive form \eqref{eq:WKBCurveS} these are defined by the condition $y = y^{-1}$. The location of these points is therefore determined by 
\be
	T_0(x) = \pm 1.
\ee
We shall always assume that branch points are isolated, which means that locally $y\sim \pm (1 + c_0 \sqrt{x-x_0} +\dots)$ and therefore
\be\label{eq:T0-branch-point}
	T_0 
	= \frac{1}{2}(y+y^{-1}) 
	= \pm \left(1 + \frac{c_0^2}{2}  (x-x_0) + \dots\right)
\ee
At square-root branch points the two sheets $y^{\pm 1}(x)$ coincide, and get exchanged when going around a loop around the branch point.
Consequently, upon crossing a square-root branch cut, the two families of solutions are also exchanged:
\be
	\psi_{\pm,n} \to \psi_{\mp,n}\,.
\ee

\paragraph{Logarithmic branch points.}
Another set of distinguished points are those where one of the sheets $y(x)$ runs to infinity (correspondingly, at the same value of $x$ the other sheet must go to zero). 
These correspond to singularities $x=x_*$ of the equations (they are points where the determinant of the q-connection $A(x)$ vanishes). In this section we will consider $x_*\ne0,\infty$. In terms of $T_0(x)$, they are characterized by the degree of divergence $k$
\be\label{eq:log-bp-T0}
	T_0(x) \sim (x-x_*)^k + \dots\qquad (k<0)
\ee
Near $x=x_*$ we fix a labeling of sheets such that $y_+$ is the one that becomes large, behaving as $y_+(x) \approx 2 T_0(x)$ while $y_-(x)\approx (2 T_0(x))^{-1}$ goes to zero.
If follows that on $\mathbb{C}^\star$, the multivalued functions $\log y_\pm(x)$ undergo additive monodromy around a logarithmic singularity of $T_0$ 
\be
	\log y_\pm(e^{2\pi i } w) = \log y_{\pm}(w) \pm 2\pi i \, k
	\qquad
	(w=x-x_*).
\ee

\begin{figure}[h!]
\begin{center}
\includegraphics[width=0.4\textwidth]{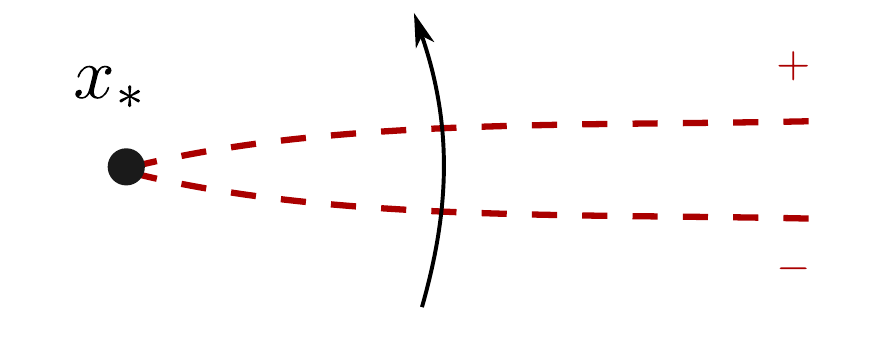}
\caption{Local behavior of logarithmic cuts near a logarithmic singularity on $\IC^*_x$. The singularity lifts to two logarithmic punctures on $\Sigma$ with $y_\pm$ running to infinity and to zero, respectively. The logarithmic cut lives on $\Sigma$ and connects the two punctures on different sheets $\pm$ (the cut therefore turns around a branch point elsewhere to switch between sheets).}
\label{fig:log-branching}
\end{center}
\end{figure}

In fact, since the monodromy preserves the sheet of the double cover $\Sigma\rightarrow\mathbb{C}^*$, this type of branching occurs on $\Sigma$ itself. It is natural to choose logarithmic branch cuts  on $\Sigma$ that connect the pole of $y_\pm$ and the zero of  $y_\mp$ at the same value of $x$. Logarithmic cuts are defined on $\Sigma$, and to run from one sheet to another they must circle around one of the square-root ramification points.
Upon crossing a logarithmic branch cut in the counter-clockwise direction (as seen on the $x$-plane) the solutions are exchanged by a shift in the logarithmic index
\be\label{eq:log-monodromy}
	\psi_{\pm,n} \to \psi_{\pm,n \pm k} \qquad \text{(counter-clockwise crossing)}\,.
\ee

\paragraph{Punctures of $\IC^*_x$.}
Finally, since we are working with $x\in \IC^*$ there are punctures at $x=0,\infty$, appearing due to the term $\frac{\dd x}{x}$ in the differential \eqref{eq:lambda-pm-n}. From the expression
\begin{equation}
    \lambda_{\pm,n}=\left(\log y_\pm+2\pi i n \right)\frac{\dd x}{x}
\end{equation}
we see that the origin and infinity are punctures of different types depending on the behaviour of $y$:
\begin{enumerate}
    \item If $\lim_{x\rightarrow 0}y_\pm=0,\,\infty$ (respectively $\lim_{x\rightarrow \infty}y_\pm=0,\,\infty$), then the origin (respectively $\infty$) is a logarithmic puncture, and the same considerations of the previous section apply.
    \item If $\lim_{x\rightarrow 0,\infty}=y_{0,\infty}\notin\{\pm1,\,0,\,\infty\}$, then we have a regular puncture, signaled by a simple pole with residue 
    \begin{equation}
        \res_{x=0}\lambda_{\pm,\,n}=\pm\log y_{0}+2\pi i n,\quad\res_{x=\infty}\lambda_{\pm,\,n}=\pm\log y_{\infty}-2\pi i n.
    \end{equation}
    \item If $\lim_{x\rightarrow0,\,\infty} y_+=\lim_{x\rightarrow0,\,\infty} y_- =a\in\{1,\,-1\}$, upon expanding $\log y$ we can distinguish two subcases:
    \begin{enumerate}
        \item If
        \be
\begin{split}
	\lambda_{\pm, n} &=
	(2\pi i \,n \mp    b x^{k}+ \dots) \frac{dx}{x}
	\qquad (a=1)
	\\
	\lambda_{\pm, n} &=
	(2\pi i \,(n+1/2) \pm  b x^{k}+ \dots) \frac{dx}{x}
	\qquad (a=-1)\\
 k&\in\mathbb{Z},
\end{split}
\ee
then we have either a regular or apparent singularity, depending on the values of $a,\, n$.
\item \label{item:square-root-puncture} If
        \be
\begin{split}
	\lambda_{\pm, n} &=
	(2\pi i \,n \mp    b x^{k}+ \dots) \frac{dx}{x}
	\qquad (a=1)
	\\
	\lambda_{\pm, n} &=
	(2\pi i \,(n+1/2) \pm  b x^{k}+ \dots) \frac{dx}{x}
	\qquad (a=-1)\\
 k&\in\mathbb{Z}+1/2,
\end{split}
\ee
Then we have a regular puncture that collided with a branch point. The monodromy around the origin has to take into account also for the change of sheets, inducing $\psi_\pm\mapsto \psi_\mp$.
\end{enumerate}
\end{enumerate}

\subsection{The $q$-Riccati equation}
The WKB ansatz \eqref{eq:WKBAns} is formulated in terms of the quantum 1-form $S(x;\hbar)\dd x$, however no explicit all-order WKB solution is known for $S$. In this section, we push further the ideas of \cite{DML2022}, and formulate WKB approximation in terms of another differential, for which we provide the all-order WKB solution.

\begin{definition}
    Consider the expression
\be
	\R(x,q):=\frac{\psi(qx)}{\psi(x)},
\ee
where $\psi$ satisfies \eqref{eq:2qDEcan}. We call the equation satisfied by $\R,$
\begin{equation} \label{eq:q-Riccati}
\R(x,q)\R(q^{-1}x,q)-2T(x,q)\R(q^{-1}x,q)+1=0
\end{equation}
the q-Riccati equation.
\end{definition}

This equation shares fundamental features of the Riccati differential equation associated to Schr\"odinger operators. In particular, it has  the advantage over \eqref{eq:2qDEcan} of being first-order in the $q$-shift, but this comes at the price of being nonlinear and inhomogeneous. Similarly to the differential case, one has the following
\begin{proposition}\label{thm:qR-sol}
Let $\R$ be given by the formal power series
\begin{equation}
\R(x,q)=\sum_{k=0}^{\infty}\R_k(x)(\log q)^k=\sum_{k=0}^{\infty}\R_k(x)\hbar^k\,.
\end{equation}
Them the q-Riccati equation \eqref{eq:q-Riccati} admits the following recursive formal solution:
\begin{equation}\label{eq:qR-sol}
\begin{split}
\R_{n,\pm}(x) & =\pm\frac{ 1}{2\sqrt{T_0^2-1}}\sum_{m=1}^{n-1}\sum_{l=0}^m\frac{1}{l!}\R_{m-l,\pm}\partial^l_{\log x}\left(\R_{n-m,\pm}-2T_{n-m}\right)\\
&\pm\frac{1}{2\sqrt{T_0^2-1}}\sum_{l=1}^n\R_{n-l,\pm}\left(\partial^l_{\log x}\left(\R_{0,\pm}-2T_0\right)-2T_l\right) ,
\end{split}
\end{equation}
with the leading order $n=0$ recovering the WKB curve,
\begin{equation}\label{eq:ClassCurve}
\R_0(x)+\frac{1}{\R_0(x)}=2 T_0(x),\qquad \R_{0,\pm}=y_\pm=T_0\pm\sqrt{T_0^2-1} .
\end{equation}
\end{proposition}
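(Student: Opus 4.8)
The plan is to substitute the formal power series ansatz for $\R$ directly into the $q$-Riccati equation \eqref{eq:q-Riccati} and extract the coefficient of each power of $\hbar$. The key technical input is the expansion of the $q$-shift operator: since $q = e^\hbar$, we have $\R(q^{-1}x,q) = \R(e^{-\hbar}x,q)$, and using the fact that multiplicative shift in $x$ by $e^{-\hbar}$ is the exponential of $-\hbar\,\partial_{\log x}$, one gets
\begin{equation}
\R(q^{-1}x,q) = \sum_{k\ge 0}\hbar^k \sum_{j\ge 0}\frac{(-1)^j}{j!}\,\partial_{\log x}^j \R_k(x).
\end{equation}
First I would insert this together with $T(x,q)=\sum_k T_k(x)\hbar^k$ and $\R(x,q)=\sum_k \R_k(x)\hbar^k$ into \eqref{eq:q-Riccati}, collect the product of the two shifted series as a Cauchy product, and organize the result as a single power series in $\hbar$ that must vanish order by order.

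At order $\hbar^0$ the equation collapses to $\R_0^2 - 2T_0 \R_0 + 1 = 0$, i.e. the WKB curve \eqref{eq:ClassCurve}, with the two roots $y_\pm = T_0 \pm\sqrt{T_0^2-1}$; this fixes the choice of branch labelled by $\pm$. At order $\hbar^n$ for $n\ge 1$, I would isolate the terms containing $\R_n$ with no derivatives acting on it: these are precisely the linearization of the quadratic form, producing a factor $(2\R_0 - 2T_0) = \pm 2\sqrt{T_0^2-1}$ (using $\R_{0,\pm} - 2T_0 = -\R_{0,\mp} = -1/\R_{0,\pm}$, hence $\R_{0,\pm}-2T_0 = -(T_0\mp\sqrt{T_0^2-1}) $, whose sum with $\R_{0,\pm}$ gives $\pm 2\sqrt{T_0^2-1}$). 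Dividing through by this factor yields the overall prefactor $\pm\frac{1}{2\sqrt{T_0^2-1}}$ in \eqref{eq:qR-sol}. Everything else at order $\hbar^n$ involves only $\R_m$ with $m<n$ (either undifferentiated, or differentiated, the derivatives coming from the $q$-shift expansion) and the $T_k$, so the recursion closes. The remaining work is bookkeeping: matching the two double sums in \eqref{eq:qR-sol} — one ranging over $1\le m\le n-1$ with an inner sum over $0\le l\le m$, the other a single sum over $1\le l\le n$ — against the Cauchy-product-plus-shift-expansion structure, keeping careful track of which derivative orders $l$ and which index splittings $m$ arise, and of the terms $-2T_{n-m}$ and $-2T_l$ that come from the $-2T(x,q)\R(q^{-1}x,q)$ piece.

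The main obstacle I expect is purely combinatorial: correctly assembling the shift-operator expansion of the \emph{product} $\R(x,q)\R(q^{-1}x,q)$, since only the second factor is shifted, so one gets a mixed sum where $\R_{m-l}$ is undifferentiated (from the first, unshifted factor evaluated at $x$) while the derivatives $\partial_{\log x}^l$ hit $\R_{n-m}$ and $T_{n-m}$ (from the shifted factor) — this is exactly the source of the asymmetric $\frac{1}{l!}$ structure in the first line of \eqref{eq:qR-sol}. One must also verify that the contributions of $\R_0$ itself, when it appears as $\R_{n-m}$ with $n-m=0$ so that differentiation acts on $\R_0 - 2T_0$, are consistently separated out into the second line of \eqref{eq:qR-sol} rather than double-counted in the first. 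Once the index ranges are pinned down this is routine, so I would present the substitution and the order-$\hbar^0$ and order-$\hbar^n$ extractions in detail and leave the index-matching as a direct (if tedious) verification.
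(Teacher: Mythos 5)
Your proposal is correct and is essentially the paper's own proof, which is stated in one line as "plug the formal series into the $q$-Riccati equation, set $q=e^\hbar$, and expand as $\hbar\to 0$" (deferring details to Section 3.5 of \cite{DML2022}); your identification of the order-$\hbar^0$ equation with the WKB curve and of the linearization coefficient $2\R_0-2T_0=\pm 2\sqrt{T_0^2-1}$ is exactly the needed content. One small bookkeeping caveat: since you expand $\R(q^{-1}x,q)=e^{-\hbar\partial_{\log x}}\R(x,q)$, the derivatives in your setup carry signs $(-1)^l/l!$ and act only on $\R$, whereas to land literally on the printed form of \eqref{eq:qR-sol} (with $+\frac{1}{l!}\partial^l_{\log x}$ acting on the combination $\R-2T$) one should first substitute $x\to qx$ in \eqref{eq:q-Riccati} so that the $q$-shift sits on the factor $\R-2T$ — the two recursions are of course equivalent.
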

\begin{proof}
    The proof is obtained by plugging the formal series in the q-Riccati equation, setting $q=e^{\hbar}$, $\hbar\rightarrow 0$ (see Section 3.5 of \cite{DML2022})
\end{proof}
This recursion produces two solutions to the $q$-Riccati equation, specified by a choice of square-root branch for $\sqrt{T_0^2-1}$.  
The first two terms in this expansion are
\begin{align}\label{eq:Ricc0}
\R_{0,\pm}&=T_0\pm\sqrt{T_0^2-1}=e^{\hbar  S_{-1}},\\
\frac{\R_{1,\pm}}{\R_{0,\pm}}
&=\mp\frac{T_1}{\sqrt{T_0^2-1}} \mp \frac{1}{2} \partial_{\log x} 
\log\left(
\frac{T_0 + \sqrt{T_0^2-1}}{T_0 - \sqrt{T_0^2-1}}
\right)
\pm \frac{1}{4}\partial_{\log x}\log(T_0^2-1),
\\
& = \mp \frac{T_1}{\sqrt{T_0^2-1}} \mp \frac{1}{2} \partial_{\log x} 
\log\left(
\frac{y_+}{y_-}
\right)
\pm \frac{1}{2}\partial_{\log x}\log(y_+-y_-)\,.
\end{align}

\subsection{All order WKB asymptotic series}\label{Sec:AllOrdWKB}

The formal solution to the $q$-Riccati equation \eqref{eq:qR-sol} allows to compute the formal WKB solutions to the original $q$-difference equation \eqref{eq:2qDEcan} to all orders. This is given by the following
\begin{proposition}
Let
\begin{equation}
	S(x;\hbar)=\sum_{n=-1}^{\infty}S_n\hbar^n
\end{equation}	
be the formal power series defining the WKB solution \eqref{eq:WKBAns}. This is determined by the coefficients $D_n$ in \eqref{eq:LogRExp} through the recursion
\begin{equation}
	S_{n-1}=D_n+\sum_{k=1}^n\frac{1}{(k+1)!}\partial^k_{\log x}S_{n-k-1},
\end{equation}
where
\begin{equation}\label{eq:LogRExp}
	D_n=\frac{(-1)^{n-1}}{n}\det\left( \begin{array}{ccccc}
		\frac{\CR_1}{\CR_0} & 1 & 0 & \cdots & 0 \\
		2\frac{\CR_2}{\CR_0} & \frac{\CR_1}{\CR_0} & 1 & \cdots & 0 \\
		3\frac{\CR_3}{\CR_0} & \frac{\CR_2}{\CR_0} & \frac{\CR_1}{\CR_0} & \ddots & \vdots \\
		\vdots & \vdots & \vdots & \ddots &  1 \\
		n\frac{\CR_n}{\CR_0} & \frac{\CR_{n-1}}{\CR_0} & \cdots & \cdots & \frac{\CR_1}{\CR_0}
	\end{array} \right),
\end{equation}
are the terms in the formal power series for $\log \CR$, 
\begin{equation}
\log\CR(x,\hbar)=\log\CR_0+\sum_{n=1}^{\infty}D_n\hbar^n,
\end{equation}
and the $\CR_n$'s are determined by \eqref{eq:qR-sol}.
\end{proposition}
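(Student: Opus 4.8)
The plan is to establish both assertions of the proposition — the recursion for $S_{n-1}$ and the determinant representation of $D_n$ — by formal power-series manipulations, starting from the defining relation $\CR=\psi(qx)/\psi(x)$ together with the WKB ansatz \eqref{eq:WKBAns}. Like the earlier statements of this section, this is a repackaging of the WKB recursion; the content lies in the precise combinatorial form rather than in any analytic subtlety.

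The central step is to express $\log\CR$ as an $\hbar$-antidifference of $S$. Since $\psi(x,\hbar)=\exp\bigl(\int_{x_0}^x S(x',\hbar)\,\tfrac{\dd x'}{x'}\bigr)$ and $q=e^{\hbar}$, one has $\log\CR(x,\hbar)=\int_x^{qx}S(x',\hbar)\,\tfrac{\dd x'}{x'}$, and the substitution $u=\log x'$ converts this into $\int_{\log x}^{\log x+\hbar}S(e^u,\hbar)\,\dd u$. Taylor-expanding the upper endpoint about $\log x$, and using $\partial_u S(e^u,\hbar)=\partial_{\log x}S$, gives the formal identity
\[
	\log\CR(x,\hbar)=\sum_{j\ge1}\frac{\hbar^{j}}{j!}\,\partial_{\log x}^{\,j-1}S(x,\hbar).
\]
This is a well-defined equality of formal Laurent series: since $S$ has at most a simple pole at $\hbar=0$, the term $\hbar^{j}\partial_{\log x}^{\,j-1}S$ contributes only to orders $\hbar^{\,j-1}$ and higher, so every power of $\hbar$ receives finitely many contributions. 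Equivalently this reads $\log\CR=\frac{e^{\hbar\partial_{\log x}}-1}{\partial_{\log x}}\,S$, which one could even invert in closed form with Bernoulli numbers; but for the proposition the recursive inversion is all that is needed.

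Next I would substitute $S=\sum_{m\ge-1}S_m\hbar^{m}$ and $\log\CR=\log\CR_0+\sum_{n\ge1}D_n\hbar^{n}$ into this identity and collect powers of $\hbar$. After reindexing (with $n=j+m$, tracking $j\ge1$ and $m\ge-1$), the coefficient of $\hbar^{0}$ on the right is $S_{-1}$, so one recovers $S_{-1}=\log\CR_0=\log y_{\pm}$ of \eqref{eq:ClassCurve}: this is the base case, and the logarithmic index enters exactly here as the additive freedom $S_{-1}\mapsto S_{-1}+2\pi i n$, since $x^{2\pi i n/\hbar}$ is $q$-periodic and hence invisible to $\CR$. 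For $n\ge1$ one isolates the $j=1$ term, which is precisely $S_{n-1}$, from the remaining terms and relabels $k=j-1$; this produces the stated recursion expressing $S_{n-1}$ through $D_n$ and the already determined coefficients $S_{n-2},\dots,S_{-1}$. That this genuinely determines $S$ order by order follows because $D_n$ depends only on $\CR_0,\dots,\CR_n$, which are fixed by the $q$-Riccati recursion \eqref{eq:qR-sol} independently of $S$.

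For the determinant formula I would invoke the classical identity for the coefficients of the logarithm of a power series: writing $\CR=\CR_0\,g$ with $g=1+\sum_{n\ge1}(\CR_n/\CR_0)\hbar^{n}$, so that $\log\CR=\log\CR_0+\log g$, differentiation in $\hbar$ gives $g'=g\,(\log g)'$, hence the lower-triangular linear system $n\,\tfrac{\CR_n}{\CR_0}=\sum_{m=1}^{n}m\,D_m\,\tfrac{\CR_{n-m}}{\CR_0}$ (with $\CR_0/\CR_0=1$) for $D_1,\dots,D_n$; solving for $D_n$ by Cramer's rule and the usual column manipulation yields $D_n=\tfrac{(-1)^{n-1}}{n}\det(\cdots)$ of \eqref{eq:LogRExp}. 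I expect no conceptual obstacle anywhere; the one place demanding genuine care is the reindexing in the third step — keeping the summation ranges consistent given that $S$ begins at order $\hbar^{-1}$ while $\log\CR$ begins at order $\hbar^{0}$ — together with the (routine) check that the forward-difference operator is legitimately inverted term by term on formal Laurent series with a simple pole.
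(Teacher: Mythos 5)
Your proof is correct and follows essentially the same route as the paper: both rest on the identity $\log\CR=\int_x^{qx}S\,\frac{\dd x'}{x'}=\sum_{j\ge1}\frac{\hbar^j}{j!}\partial_{\log x}^{j-1}S$, followed by collecting powers of $\hbar$ and invoking the classical determinant formula for the coefficients of the logarithm of a power series (which the paper merely cites from the theory of symmetric functions, while you sketch its derivation via $g'=g(\log g)'$ and Cramer's rule). One small remark: carrying out your reindexing honestly yields $S_{n-1}=D_n-\sum_{k=1}^{n}\frac{1}{(k+1)!}\partial_{\log x}^{k}S_{n-k-1}$, so the plus sign in the recursion as printed in the proposition appears to be a typo rather than a flaw in your argument.
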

\begin{proof}
To prove this statement, it is necessary to spell out the relation between the formal power series $S(x,\hbar)$ and $\R(x,\hbar)$, which reads
\begin{equation}\label{eq:logRS}
\begin{split}
\log\R(x,\hbar) & =\int_x^{qx} S(x,\hbar)\frac{\dd x}{x}=\hbar S(x,\hbar)+\sum_{k=1}^{\infty}\frac{\hbar^{k+1}}{(k+1)!}\partial_{\log x}^{k} \left(S(x,\hbar) \right)\\
& =\hbar S(x,\hbar)+ \hbar x\partial_x\chi(x)\,,
\end{split} 
\end{equation}
where
\begin{equation}
\chi(x):=\sum_{k=1}^{\infty}\frac{\hbar^{k+1}}{(k+1)!}\partial_{\log x}^{k}S(x,\hbar).
\end{equation}

On the other hand, equation \eqref{eq:qR-sol} provides an all-order recursive formula for the $\CR_n$'s. We can use it to obtain the formal power series of the logarithm
\begin{equation}\label{eq:LogRDDef}
\log\CR=\log\left(\sum_n \CR_n\hbar^n\right)=\log\CR_0+\hbar\frac{\CR_1}{\CR_0}+\dots=:\log \CR_0+ \sum_{n=1}^{\infty} D_n\hbar^n .
\end{equation}
Expression \eqref{eq:LogRExp} for the coefficients $D_n$ of this formal power is a well-known result from the theory of symmetric functions \cite{Littlewood1970Book,Cadogan1971,Weisstein}.
\end{proof}
Since $S(x,\hbar)$ and $\frac{1}{\hbar}\log R(x,\hbar)$ differ only by a total derivative, and since the wavefunction only depends on the primitive of $S(x,\hbar)$, it follows that solutions to the $q$-difference equation \eqref{eq:2qDEcan} can be expressed directly in terms of the quantity $\log\R$:
\begin{theorem}\label{thm:WKBSol}
The formal WKB solution \eqref{eq:WKBAns} can be rewritten in the following form:
\begin{equation}\label{eq:psi-R}
	\psi(x)=e^{-\chi(x;\hbar)}\exp\left\{\frac{1}{\hbar}\int_{x_0}^x\log\R(x',\hbar)\frac{\dd x'}{x'} \right\}, 
\end{equation}
where $\chi(x,\hbar)$ is a single-valued function on the WKB curve $\Sigma$. 
\end{theorem}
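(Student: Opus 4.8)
The plan is to obtain \eqref{eq:psi-R} as a direct rewriting of the WKB ansatz \eqref{eq:WKBAns} via the relation \eqref{eq:logRS} between the quantum $1$-form $S$ and $\log\R$, and then to establish that the prefactor $\chi$ produced in this way descends to a single-valued function on $\Sigma$. Concretely, the $q$-shift $x\mapsto qx=e^{\hbar}x$ acts on functions as the exponentiated Euler operator $e^{\hbar\,x\partial_x}=e^{\hbar\partial_{\log x}}$, so integrating over one $q$-step gives $\log\R(x,\hbar)=\int_x^{qx}S(x',\hbar)\,\tfrac{\dd x'}{x'}=\sum_{k\ge0}\tfrac{\hbar^{k+1}}{(k+1)!}\partial_{\log x}^k S=\hbar S+\hbar\,\partial_{\log x}\chi$, where $\chi$ is the formal $\hbar$-series characterized as the $\partial_{\log x}$-antiderivative of $\tfrac{1}{\hbar}\log\R-S$ normalized by $\chi(x_0)=0$. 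Solving for $S$ and inserting into \eqref{eq:WKBAns}, the term $x\partial_x\chi\cdot\tfrac{\dd x}{x}=\dd\chi$ integrates out, leaving $\int_{x_0}^x S\,\tfrac{\dd x'}{x'}=\tfrac{1}{\hbar}\int_{x_0}^x\log\R\,\tfrac{\dd x'}{x'}-\chi(x)$, and exponentiating gives \eqref{eq:psi-R}. This step is a routine manipulation; one only needs the $\hbar$-series for $\chi$ to be well defined, which it is term by term.

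The substance is the single-valuedness of $\chi$ on $\Sigma$. At leading order one computes $[\hbar^0]\chi=\tfrac{1}{2}\big(S_{-1}(x)-S_{-1}(x_0)\big)=\tfrac{1}{2}\log\tfrac{y(x)}{y(x_0)}$; note that the logarithmic-index constant $2\pi i n$ in $S_{-1}=\log y+2\pi i n$ cancels, so this is a genuine function of the point of $\Sigma$, and since $y$ is a coordinate on $\Sigma$ it is single-valued wherever $y$ carries no monodromy — which, under the standing assumption of no logarithmic singularities of $T_0$ at finite distance, is everywhere away from the punctures $0,\infty$. The higher coefficients $[\hbar^m]\chi$ with $m\ge1$ are then shown to be rational on $\Sigma$ by induction, using the $q$-Riccati recursion \eqref{eq:qR-sol} and the recursion expressing the $S_n$ in terms of the coefficients $D_n$ of $\log\R$ (given by \eqref{eq:LogRExp}): the building blocks are the rational functions $T_k(x)$ pulled back to $\Sigma$, the coordinate $y=\R_0$, and $\sqrt{T_0^2-1}=\tfrac{1}{2}(y-y^{-1})$, all rational on $\Sigma$, and $\partial_{\log x}$ preserves this class since $y'=\tfrac{2T_0'\,y^2}{y^2-1}$ is itself rational on $\Sigma$; moreover the one transcendental object, $S_{-1}=\log y$, enters $[\hbar^m]\chi$ for $m\ge1$ only under at least one $\partial_{\log x}$, and $\partial_{\log x}\log y$ is rational. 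Hence every coefficient of $\chi$ is meromorphic on $\Sigma$.

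The main obstacle I anticipate is precisely the leading coefficient $\tfrac{1}{2}\log\tfrac{y(x)}{y(x_0)}$: this is where the hypothesis excluding logarithmic singularities at finite distance is genuinely used, and one must still check that whatever monodromy $\chi$ retains around the punctures $0,\infty$ of $\IC^*_x$ is exactly the one forced by the monodromy of the WKB solution itself — so that it is carried consistently by the product $e^{-\chi}\exp\{\tfrac{1}{\hbar}\int\log\R\,\tfrac{\dd x}{x}\}$ rather than spoiling the identity (cf.\ the logarithmic-index shift \eqref{eq:log-monodromy} and the classification of puncture types in Section~\ref{Sec:CritPtsCurve}). Choosing the basepoint $x_0$ at a square-root branch point $\fb$, where $y(\fb)=\pm1$ and $\Sigma$ is locally uniformized by $\sqrt{x-\fb}$, makes the local single-valuedness of $\chi$ manifest and is in any case the natural normalization for the subsequent computation of the Stokes and transport matrices.
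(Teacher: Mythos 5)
Your proposal follows essentially the same route as the paper: \eqref{eq:psi-R} is read off from \eqref{eq:logRS} by integrating out the total-derivative part of $\tfrac{1}{\hbar}\log\R-S$, and the single-valuedness of $\chi$ on $\Sigma$ is deduced from the structure of the $q$-Riccati recursion, namely that the $\R_n$ are single-valued on $\Sigma$ so that the logarithmic ambiguity sits only in the leading term. If anything, your explicit identification of $[\hbar^0]\chi=\tfrac{1}{2}\log\bigl(y(x)/y(x_0)\bigr)$ and the accompanying discussion of its behaviour at logarithmic punctures is more careful than the paper's one-line justification (``$\chi$ is constructed from total derivatives of $S$''), which does not literally apply to the antiderivative normalization of $\chi$ that is needed for \eqref{eq:psi-R} to hold as written.
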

\begin{proof}
Expression \eqref{eq:psi-R} is simply a consequence of \eqref{eq:logRS}. We have to show that $\chi$ is single-valued on $\Sigma$. From the solution \eqref{eq:qR-sol} to the Riccati equation we can see that the functions $\CR_n(x)$ involve at most square-root multivaluedeness and are single-valued functions on $\Sigma$. The formal $\hbar$-expansion of $\log\CR(x,\hbar)$ has then logarithmic branching at leading order and nowhere else, as can be seen from \eqref{eq:LogRDDef} and \eqref{eq:LogRExp}. Furthermore, since $\log\R_0=\hbar S_{-1}=\log y$ (see equations \eqref{eq:lambda-pm-n} and \eqref{eq:ClassCurve}), from equation \eqref{eq:logRS} it follows that neither the higher-order corrections to $S(x,\hbar)$, nor the function $x\partial_x\chi$, have multi-valuedeness on $\Sigma$. In fact, since $\chi$ is constructed from total derivatives of $S$, it is also itself single-valued.
\end{proof}
In the proof we used the fact that the logarithmic ambiguity is only present at leading order in $\hbar$. This fact has an important consequence:
\begin{corollary}\label{thm:LogCorollary}
The relation between WKB wave-functions with different $\mathbb{Z}$-labels is
\begin{equation}\label{eq:log-index-$q$-periodicity}
	\psi_{\pm, m}(x) = \psi_{\pm, n}(x) 
	\left(\frac{x}{x_0}\right)^{ \frac{2\pi i }{\hbar} (m-n) }\,,
\end{equation}
i.e. $\psi_{\pm,m}$ and $\psi_{\pm,n}$ with $n\ne m$ differ by $q$-periodic factors and are linearly dependent over $\mathcal{M}_q$.
\end{corollary}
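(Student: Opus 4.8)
The plan is to exploit the structural fact, already isolated in the proof of Theorem~\ref{thm:WKBSol}, that the logarithmic index $n$ enters the representation \eqref{eq:psi-R} of a WKB solution only through the leading term of the $q$-Riccati solution. Concretely, I would first recall that on the $n$-th logarithmic sheet one has $\log\CR_{0,\pm}(x)=\log y_\pm(x)+2\pi i\,n$ (cf.\ \eqref{eq:lambda-pm-n} and \eqref{eq:ClassCurve}), while the subleading coefficients $D_k$ with $k\ge 1$ in \eqref{eq:LogRExp} are built from the rational data $T_0,T_k$ and from $\CR_{0,\pm}=y_\pm$ together with its $\partial_{\log x}$-derivatives only; they are therefore single-valued on $\Sigma$ and \emph{independent} of $n$. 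By the recursion of Proposition~\ref{thm:qR-sol} and the recursion for $S_j$, the same holds for every $S_j$ with $j\ge 0$, and hence for $\chi$ in \eqref{eq:logRS}, since $\chi$ is assembled from $\partial_{\log x}^{k}S$ with $k\ge 1$ and $\partial_{\log x}^{k}$ annihilates the additive constant $2\pi i\,n$ present in $S_{-1}$.

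Given this, the second step is a one-line cancellation. Writing \eqref{eq:psi-R} for the two indices with the \emph{same} basepoint $x_0$ and dividing, the prefactor $e^{-\chi}$ and the whole tail $\sum_{k\ge1}D_k\hbar^k$ of $\log\CR$ drop out, leaving
\begin{equation}
\frac{\psi_{\pm,m}(x)}{\psi_{\pm,n}(x)}
=\exp\!\left\{\frac{1}{\hbar}\int_{x_0}^x \big(\log\CR_{\pm,m}-\log\CR_{\pm,n}\big)\frac{\dd x'}{x'}\right\}
=\exp\!\left\{\frac{2\pi i\,(m-n)}{\hbar}\int_{x_0}^x \frac{\dd x'}{x'}\right\}.
\end{equation}
The remaining integral is elementary, $\int_{x_0}^x \dd x'/x'=\log(x/x_0)$, which at once gives $\psi_{\pm,m}(x)=\psi_{\pm,n}(x)\,(x/x_0)^{2\pi i(m-n)/\hbar}$, i.e.\ the claimed identity.

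Finally, I would check $q$-periodicity of the prefactor: with $q=e^{\hbar}$ one has $(qx/x_0)^{2\pi i(m-n)/\hbar}=(x/x_0)^{2\pi i(m-n)/\hbar}\,e^{2\pi i(m-n)}=(x/x_0)^{2\pi i(m-n)/\hbar}$ because $m-n\in\IZ$. Hence $\psi_{\pm,m}/\psi_{\pm,n}\in\mathcal{M}_q$, so $\psi_{\pm,m}$ and $\psi_{\pm,n}$ are linearly dependent over $\mathcal{M}_q$.

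I do not expect a genuine obstacle: the substantive point, namely the $n$-independence of $\chi$ and of the $D_k$ with $k\ge1$, is essentially already argued in Theorem~\ref{thm:WKBSol}, and the rest is the explicit evaluation of $\int \dd x'/x'$ together with the trivial observation $e^{2\pi i(m-n)}=1$. The only points requiring a little care are to normalize both wavefunctions at a common basepoint $x_0$ before dividing (otherwise one picks up a spurious $n$-independent constant) and to keep track of which branch of $\log\CR_0$ labels each sheet, so that the difference is precisely $2\pi i(m-n)$ and not some other integer multiple of $2\pi i$.
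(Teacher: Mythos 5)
Your argument is correct and is essentially the paper's own: the corollary is stated as an immediate consequence of the observation, made in the proof of Theorem~\ref{thm:WKBSol}, that the logarithmic ambiguity $2\pi i\,n$ sits only in the leading term $\log\CR_0=\log y$, so dividing two wavefunctions with the same basepoint leaves only $\exp\bigl\{\tfrac{2\pi i(m-n)}{\hbar}\log(x/x_0)\bigr\}$, which is $q$-periodic. Your additional care about the common basepoint and the explicit check that $e^{2\pi i(m-n)}=1$ are exactly the right (implicit) details.
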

Henceforth we will define the $q$-periodic function associated to the choice of basepoint $x_0$ as
\be\label{eq:xi-def}
	\xi  :=\left(\frac{x}{x_0}\right)^{ \frac{2\pi i }{\hbar}  }
\ee
Finally, from \eqref{eq:Ricc0} we can write explicitly the leading $\hbar$-behavior of the WKB wavefunctions:
\begin{equation}\label{eq:psi-leading-order}
\begin{split}
\psi_{\pm,n}(x,\hbar) & \mathop{\sim}_{\hbar\rightarrow0}\frac{1}{(T_0^2-1)^{\frac{1}{4}}}  
\left(T_0 \mp \sqrt{T_0^2-1}\right)
\exp\left\{\pm\frac{1}{\hbar}\int^x_{x_0}\left(\arccosh T_0(x')+2\pi i n\right)\frac{\dd x'}{x'}\right\}\\
&\times\exp\left\{\mp \int^x\frac{T_1(x')}{\sqrt{T_0(x')^2-1}}\frac{\dd x'}{x'} \right\}(1+O(\hbar))\,.
\end{split}
\end{equation}

\section{The WKB (linear) Stokes phenomenon  for second order $q$DEs}\label{sec:stokes-data}

Formal series solutions of $q$-difference equations can be resummed into meromorphic functions with poles by means of a $q$-analogue of Borel summation involving the $q$-Borel and $q$-Laplace transforms. 
In the confluent limit, i.e. taking  $q\to 1$, these notions are generally expected to reduce to the usual notion of Borel summation for series in $\hbar = \log q$. 
Relevant definitions and a discussion of the confluence limit can be found in \cite{2009arXiv0903.0853R, lastra2016class, RZ02, Tahara2015, 2015arXiv150102994D, AIF_2009__59_1_347_0} and references therein.

In this paper, instead of working at finite $q$, we have adopted the WKB limit where $\hbar=\log q\rightarrow0$, and have chosen to study formal solutions in WKB form \eqref{eq:WKBAns}, or equivalently \eqref{eq:psi-R}. These feature an exponential leading term, and a formal power series in $\hbar$ \footnote{
This limit can be motivated physically by recalling that $q$-difference equations arising from quantization of mirror curves have solutions that correspond to open topological string partition functions \cite{Aganagic:2003qj}.
From this perspective $\hbar$ is identified with the string coupling and taking $\hbar\to 0$ therefore corresponds to studying the genus expansion. 
On the other hand, working at finite $q$ is more natural from the viewpoint of M-theory \cite{Gopakumar:1998ii, Gopakumar:1998jq}. Although this is a very interesting perspective, we will not pursue it in this paper.
}:
\begin{equation}
\psi(x,\hbar) = e^{\frac{1}{\hbar} w(x)} \, \sum_{k\geq 0} f_k(x) \hbar^k .
\end{equation}

Starting from this section, we will make the folllowing reasonable assumption:
\begin{assumption}\label{assumption}
There exist piecewise analytic solutions to \eqref{eq:2qDEcan} having \eqref{eq:psi-R} as asymptotic expansion.
\end{assumption}
A possible way of producing analytic functions from the WKB asymptotic series is Borel summation $\CB$, defined as the composition of a Borel transform and a Laplace transform, which produces analytic functions of $\hbar$ defined over open patches in $\IC^*_x$%
\footnote{More precisely, this is the case if the series is Gevrey one, see e.g. \cite{costin2008asymptotics}.} 
\be\label{eq:borel-sum}
	\varphi_{\pm,n}(x) := \CB[\psi_{\pm,n}(x)].
\ee
We will not enter into the details of the Borel sum operation $\CB$, as they will not be used in what follows. It is well known that the Borel summability of asymptotic WKB series depends on $x$, because
the Borel transform of asymptotic series often features poles whose position depends on $x$.
When the poles move across the integration contour for the Laplace transform (which we take to be $\IR_{>0}$) the Borel sum fails to converge.
The collection of points on $\IC^*_x$ where this phenomenon happens is a collection of lines on $\IC^*_x$ called the Stokes graph, and its complement defines a a system of patches $\bigcup_\alpha R_\alpha$ on $\IC^*_x$. 
Borel summation defines locally analytic functions $\varphi_{\pm,n}^{R}$ within each patch, but its result is discontinuous from one patch to another. 
The discontinuity is described by an analogue of Voros' connection formulae~\cite{Voros}.

\subsection{The Stokes graph}\label{sec:Stokes-graph}

To locate the position of Stokes lines we take a shortcut, and instead of studying Borel summability of $\psi_{\pm,n}$ we adopt the general principle that Stokes phenomena between two solutions 
of a differential equation appear when one is maximally dominant over the other~\cite{dingle1973asymptotic,bender2013advanced}, i.e. where 
$	\left|\frac{\psi_{s_1,n_1}(x)}{\psi_{s_2,n_2}(x)}\right| $ 
is maximized. This is a standard definition of Stokes phenomenon in the realm of asymptotic analysis \cite{bender2013advanced,Olde1995}. In our case, its validity relies on our original assumption \ref{assumption}.
An analogous perspective has been recently applied for the study the \textit{nonlinear} Stokes phenomenon of q-Painlev\'e I equation \cite{joshi2019nonlinear}.

At small $\hbar$ the condition of relative dominance is determined by the exponential factor in $\psi$, and reads
\be\label{eq:Stokes-line-Im-zero}
	\Im\left[
	\hbar^{-1} \, \int^x (\log y_{s_2}(x')-\log y_{s_1}(x') + 2\pi i (n_2-n_1))\,\frac{\dd x'}{x'} 
	\right] = 0,\quad s_i=\pm,\quad n_i\in\mathbb{Z}.
\ee
This definition coincides with that of trajectories of an ``exponential network'' \cite{Klemm:1996bj, Gaiotto:2012rg, Eager:2016yxd, Banerjee:2018syt}. For certain first-order $q$-difference equations it was shown in \cite{Grassi:2022zuk, Alim:2022oll} that this definition indeed coincides with Stokes lines for Borel sums of asymptotic series solutions.

From \eqref{eq:Stokes-line-Im-zero}, we see that Stokes lines for exact WKB analysis of second order $q$-difference equations fall into four classes, labeled by 
$(\pm\mp,n) $ and $ (\pm\pm,n)$ with $n=n_1-n_2 \in \IZ$:
\begin{enumerate}
    \item Stokes lines of types $(+-,n)$ can be generated at square-root branch points, or at mutual intersections of other Stokes lines. On a Stokes line of type $(+-,n)$ the solutions $\psi_{-,k+n}$ are maximally subdominant with respect to $\psi_{+,k}$.
    \item Stokes lines of types $(-+,n)$ can be generated at square-root branch points, or at mutual intersections of other Stokes lines. On a Stokes line of type $(-+,n)$ the solutions $\psi_{+,k+n}$ are maximally subdominant with respect to $\psi_{-,k}$.
\end{enumerate}

As will be discussed below, near a branch point there are always three Stokes lines of types $(\pm\mp,0)$.

\begin{enumerate}\setcounter{enumi}{2}
    \item Lines of type $(++,n)$ can be generated at logarithmic branch points (if these lie at finite $x$) or at intersections of $(+-,m)$ and  $(-+,n-m)$ lines. On these Stokes lines, the solution $\psi_{\pm,k+n}$ is maximally subdominant with respect to $\psi_{\pm,k}$.
    \item Lines of type $(--,n)$ can be generated at logarithmic branch points (if these lie at finite $x$) or at intersections of $(+-,m)$ and  $(-+,n-m)$ lines. On these Stokes lines, the solution $\psi_{\pm,k+n}$ is maximally subdominant with respect to $\psi_{\pm,k}$.
\end{enumerate}

As will be discussed below, near a logarithmic branch point there are always two families of Stokes lines of types $(\pm\pm,n)$. For details see \cite{Eager:2016yxd, Banerjee:2018syt, Grassi:2022zuk, Alim:2022oll}.

Any two Stokes lines can intersect each other as long as their types are not the same or `opposite' (i.e. $(ij,n)$ and $(ji,-n)$).
There are three possible types of intersections \cite{Banerjee:2018syt}:
\begin{itemize}
\item When lines of type $(+-,n)$ and $(-+,m)$ intersect they generate infinitely many new trajectories. These are organized in three different types $(+-,n+k(m+n))$, $(-+,m+k(m+n))$ and $(\pm\pm,(k+1)(m+n))$ for $k\in \IZ_{\geq 0}$.
\item When lines of type $(+-,n)$ and $(+-,m)$ intersect no new lines are generated.
\item When lines of type $(+-,n)$ and $(\pm\pm,m)$ intersect they generate one new Stokes line of type $(+-,k(n+m))$ with $k\geq 1$ a positive integer. 
\end{itemize}

We will now study in detail the Stokes phenomenon of $q$DEs of the form \eqref{eq:2qDEcan} near the different types of critical points classified in Section \ref{Sec:CritPtsCurve}. 
The local descriptions thus obtained are the ``building blocks'' to be patched together for a global description of the Stokes phenomenon and monodromy of the WKB solutions $\psi_{\pm,n}$.

\subsection{Stokes behaviour near branch points}\label{Sec:AiryStokes}
As discussed earlier, the critical points of the $q$DE \eqref{eq:2qDEcan} are branch points on $\IC^*_x$, defined by the equation $T_0(x_\pm) = \pm1$. We will discuss solutions normalised at $x_\pm$ respectively. The solution behaves similarly near square-root branch points $x_\pm$, but there are some differences. We discuss each in turn.
Near a branch point where $T_0(x) = 1$ the leading order behaviour of the WKB wavefunction is given by 
\eqref{eq:psi-leading-order} where $T_0$ is given in \eqref{eq:T0-branch-point}
\be
\begin{split}
	\psi_{s,n}(x,\hbar) 
	& \mathop{\sim}_{\stackrel{\hbar\rightarrow0}{x\to x_+}} 
	\frac{1}{(x-x_+)^{1/4}}
	\exp\left\{ \frac{1}{\hbar}\int_{x_+}^x\left( s\, c_0 (x-x_+)^{1/2} +2\pi i n\right)\frac{\dd x'}{x'}\right\}
	(1+O(\hbar))
	\\
	& \sim 
	\frac{1}{(x-x_+)^{1/4}}
	\left( \frac{x}{x_+} \right)^{\frac{2\pi i n}{\hbar}}
	\exp\left\{\frac{2}{3}\frac{s}{\hbar}  c_0 (x-x_+)^{3/2}\right\}
	(1+O(\hbar))\,,
	\\
\end{split}
\ee
where $x_+$ denotes both the position of the branch point and the choice of normalization basepoint for the WKB exponent integral.
The factor $( {x}/{x_+})^{\frac{2\pi i n}{\hbar}}$ was discussed in \eqref{eq:log-index-$q$-periodicity}, where it was observed that it is $q$-periodic and can therefore be regarded as an overall normalization `constant' when working with the vector space of solutions $\CH$ over the field $\mathcal{M}_q$.
The remaining factors coincide with the WKB asymptotics of solutions to the Airy equation, whose Stokes behaviour is well known.
There are three Stokes lines emerging from the branch point that divide its neighbourhood into three sectors. 
It is clear from the similarity with Stokes phenomena of Airy's equation that $\psi_{+,n}$ and $\psi_{-,m}$ can only mix if $n=m$, since the dependence on the logarithmic index appears as an overall $q$-constant factor which does not affect Borel summation of asymptotic series in $\hbar$. We now describe the Stokes behavior in more detail.

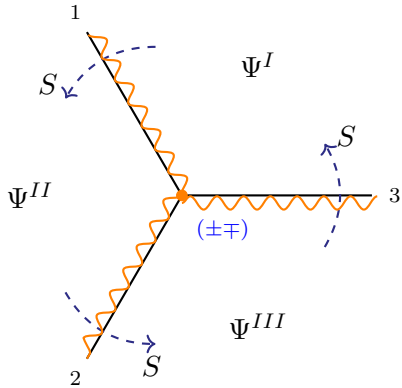
\begin{figure}[h!]
\begin{center}
\begin{tikzpicture}
\draw[thick](0,0)--+(0:2.5) (0,0)--+(120:2.5) (0,0)--+(-120:2.5);
\draw[thick,MyBlue!70!black,dashed,->](-20:2) to[bend right] (20:2);
\draw[thick,MyBlue!70!black,dashed,->] (100:2) to[bend right] (140:2);
\draw[thick,MyBlue!70!black,dashed,->] (-140:2) to[bend right] (-100:2);
\draw[thick,orange,wave](0,-0.1)--+(0:2.5) (0,0)--+(120:2.5) (0,0)--+(-120:2.5);
\draw[fill,orange](0,0)circle[radius=0.7mm];
\node at(20:2.3){$S$};
\node at(140:2.3){$S$};
\node at(260:2.3){$S$};
\node at(60:2){$\Psi^{I}$};
\node at(180:2){$\Psi^{II}$};
\node at(300:2){$\Psi^{III}$};
\node at(0:2.8){\scriptsize$3$};
\node at(120:2.8){\scriptsize$1$};
\node at(240:2.8){\scriptsize$2$};
\node at(320:0.7){{\scriptsize\color{blue}$(\pm\mp)$}};
\end{tikzpicture}
\caption[]{Stokes phenomenon of the solution near a branch point.  \label{Fig:AiryStokes}}
\end{center}
\end{figure}

If $\arg\hbar$ is generic, Stokes lines evolve away from the branch point and towards punctures. This allows to define a canonical basis of solutions within each sector, as follows.
Labeling the Stokes lines by $1,2,3$, we denote by $s_1,\,s_2,\,s_3$ the unique solution that decays exponentially fast along the respective line.
More precisely, the solution is unique up to an overall multiplication by a $q$-periodic factor $\xi^n$ and we shall work with the choice $n=0$ of solutions that have zero logarithmic index near the branch point.
For later convenience we also introduce the following notation 
\begin{equation}
	\fs_i(x):=\left(\begin{array}{c}
		s_i(x) \\
		s_i(qx)
	\end{array} \right).
\end{equation}
We label Stokes sectors by roman numerals as shown in Figure \eqref{Fig:AiryStokes}, and in each sector we define a basis of solutions given by the two vanishing sections along the delimiting Stokes lines.
We collect the basis in a matrix of solutions $\Psi^\star$ associated to each sector, with $\star=I,\,II,\,III$. 
The first column of $\Psi^\star$ is proportional to the $\fs _i$ that
decays along the Stokes ray that delimits the Stokes region on the left, while the second column is proportional to the $\fs _i$ that decays along the other Stokes ray. 
The three bases of solutions are related by a universal matrix $S$ due to the cyclic symmetry
\be\label{eq:Stokes-jump-branch-point}
	\Psi^{\star+I}=\Psi^{\star}S.
\ee

Following a standard argument from exact WKB analysis,
requiring that the monodromy around the branch point is trivial, i.e. $S^3=\mathbb{I}_2$, determines
\begin{equation}\label{eq:S-matrix}
	S=\left(\begin{array}{cc}
		-1 & i \\
		i & 0
	\end{array} \right)=
	\left( \begin{array}{cc}
		0 & i \\
		i & 0
	\end{array} \right)  
	\left(\begin{array}{cc}
		1 & 0 \\
		i & 1
	\end{array} \right),
\end{equation}
The three solutions therefore have the following mutual normalizations
\begin{equation}\label{eq:AiryPsis}
	\Psi^I=\left(\fs_1\vert i\fs_3 \right),\quad \Psi^{II}=\left(\fs_2\vert i\fs_1 \right),\quad \Psi^{III}=\left(\fs_3\vert i\fs_2 \right),
\end{equation}
and the three vanishing solutions must obey the linear relation
\begin{equation}
	s_1+s_2+s_3=0\,.
\end{equation}

The matrix $S$ acts on the space of solutions $\CH$ as a vector space over $\mathbb{F}_{q}$. When working over $\mathbb{C}$ the basis of solutions becomes infinite dimensional $\Psi_{m}$ for $m\in \IZ$, and the Stokes matrix becomes 
$S_{m,n} = S\otimes \mathbb{I}_{mn}$.

\subsubsection{Fixing signs: two types of branch point}
In the above discussion, the choice of basis in each sector is defined in relation to the behavior of WKB solutions along each of the Stokes lines.
For many purposes, it will be important to know how a basis of solutions near a branch point is related to the basis of solutions near another one. 
Clearly, columns of the matrices \eqref{eq:AiryPsis} correspond to the WKB basis, i.e. to solutions with an asymptotic expansion (in $\hbar$) that agrees with the WKB solutions in each sector.
The only remaining ambiguity is whether it is the left (resp. right) column that corresponds to $\psi_{+,n}$ or $\psi_{-,n}$.

To assign signs globally it is necessary to fix a choice of branch cuts for the square-root branch point, since $\psi_{\pm,n}$ are only well-defined after such a global choice of trivialization.
A canonical choice is to consider three square-root branch branch cuts emanating from the branch point, and running along each of the Stokes lines slightly displaced counterclockwise,\footnote{Away from the branch point Stokes lines may intersect. When this happens we join the branch cuts at the intersection. Since there are only two sheets this is always possible. Thus branch cuts always run between branch points, or between a branch point and a puncture. Only Stokes lines generated at branch points (a.k.a. `primary' lines) carry branch cuts next to them, and only up the their first intersection with another primary Stokes line.} see Figure~\ref{Fig:AiryStokes}.

A Stokes line will be labeled by $(ij,\Delta n)$ if the solution $\psi_{j,n+\Delta n}$ is maximally subdominant over $\psi_{i,n}$. 
Stokes lines that emanate from a branch point $x_+$ always have $\Delta n=0$.
Moreover, with our choice of trivialization all three Stokes lines are actually of the same type: they are either all $(+-,0)$ or all $(-+,0)$ for a given branch point.
Therefore there are two types of branch points, that we label by $(\pm\mp)$ as shown in Figure \ref{Fig:AiryStokes}.
The signs correspond to those of the WKB solutions in each column of the matrices \eqref{eq:AiryPsis} 
\be\label{eq:BasisChoiceBPT}
\begin{split}
	\text{type }(+-):\qquad 
	\Psi^{\star}(x) & \sim
	\left(\begin{array}{cc}
		\psi_{+,n}(x) & \psi_{-,n}(x) \\
		\psi_{+,n}(qx) & \psi_{-,n}(qx)
	\end{array} \right)
	\\
	\text{type }(-+):\qquad 
	\Psi^{\star}(x) & \sim
	\left(\begin{array}{cc}
		\psi_{-,n}(x) & \psi_{+,n}(x) \\
		\psi_{-,n}(qx) & \psi_{+,n}(qx)
	\end{array} \right) 
\end{split}
\ee
Here $\sim$ denotes the fact that each matrix corresponds to the (WKB) asymptotic series expansion of the actual solutions $\Psi^\star$, and $\star=I,II,III$ denotes the Stokes sector.
Given the chosen position of branch cuts, the right column of the matrix is the (asymptotic expansion) of the solution that decays fastest along the clockwise Stokes line. Therefore for a branch point of type $(+-)$ all three Stokes lines must be of type $(+-,0)$ so that $\psi_{-,n}$ is maximally dominated by $\psi_{+,n}$. Viceversa for a branch point of type $(-+)$ all three Stokes lines must be of type $(-+,0)$.
Note that we have included the possibility of having solutions with arbitrary logarithmic index $n$, since this corresponds to a change of basis $\Psi\to \Psi \cdot \xi^n \mathbb{I}_{2\times 2}$, which leaves the matrix $S$ invariant according to \eqref{eq:Stokes-jump-branch-point}.

\subsubsection{The case $T_0(x)=-1$}
Next we discuss the behavior of Borel summation of WKB solutions near a branch point $x_-$ defined by $T_0(x_-) = -1$.
In this case the leading order behaviour of the WKB wavefunction is 
\be
\begin{split}
	\psi_{s,n}(x,\hbar) 
	& \mathop{\sim}_{\stackrel{\hbar\rightarrow0}{x\to x_-}} 
	\frac{1}{(x-x_-)^{1/4}}
	\exp\left\{ \frac{1}{\hbar}\int_{x_-}^x\left( -s \, c_0 (x-x_-)^{1/2} +2\pi i \left(n+\frac{1}{2}\right)\right)\frac{\dd x'}{x'}\right\}
	(1+O(\hbar))
	\\
	& \sim
	\frac{1}{(x-x_-)^{1/4}}
	\left( \frac{x}{x_-} \right)^{\frac{2\pi i (n+1/2) }{\hbar}}
	\exp\left\{-\frac{s}{\hbar} \frac{2}{3} c_0 (x-x_-)^{3/2}\right\}
	(1+O(\hbar)).
	\\
\end{split}
\ee
As before, the local behavior of the WKB exponential factor reproduces that of WKB solutions to the Airy equation, and there are three Stokes lines emanating from the branch point. There are two differences with the case of $x_+$: the signature of the exponent is opposite ($-s$ instead of $s$) and the factor $\left( \frac{x}{x_-} \right)^{\frac{2\pi i (n+1/2) }{\hbar}}$ is not $q$-periodic, but picks up a sign under $x\to qx$. Aside from these differences, the local behavior of solutions is identical to the previous case. 
Namely there are three Stokes sectors and in each of them we can define a canonical choice of basis for $\CH$ defined by the solutions that decay along the limiting Stokes lines. The jumps of this basis are once again described by $S$.

Recall that all Stokes lines from a given branch point are of the same type, but Stokes lines from two distinct branch points may have opposite signature. Nevertheless, with the choice of basis we have just introduced, the matrices describing Stokes jumps is the same in both cases. What changes is the identification of $\psi_{s,n}$ with the basis elements associated  (within each Stokes sector) to each of the Stokes lines.

\subsection{Logarithmic cuts}\label{sec:log-cuts}

Logarithmic branch cuts are defined on $\Sigma$, and connect points $T_0^{-1}(\infty)\in\mathbb{C}^*$, which are zeroes and infinities of $y$. 
As shown in Corollary \ref{thm:LogCorollary}, logarithmic branching relates linearly independent solutions only when working over $\IC$, while it produces ``rescaling'' of solutions by $q$-periodic functions when working over $\mathcal{M}_q$.

Due to the symmetry $y_-= y^{-1}_+$ of the involutive form, every logarithmic branch point on the base $\mathbb{C}^*$ has two pre-images on $\Sigma$. We can take the cut to start from the point $T_0^{-1}(\infty)$ on the $+$, sheet and to go back to the same point of the base after having turned around one of the square-root branch points $T_0^{-1}(\pm1)$, as in Figure \ref{Fig:LogCut}.
We choose to arrange logarithmic cuts in this way, so that their projection to $\IC^*_x$ resembles a `double-line' segment running from a logarithmic branch point to a square-root branch point. 

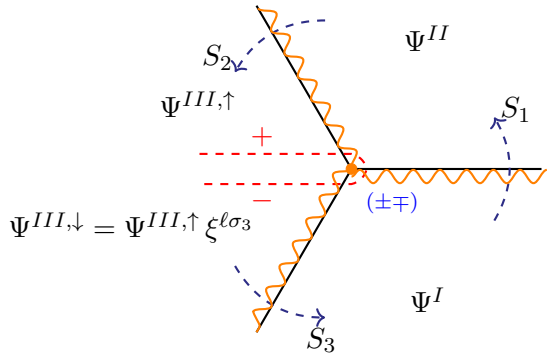
\begin{figure}[h]
\begin{center}
    \begin{tikzpicture}
\draw[thick](0,0)--+(0:2.5) (0,0)--+(120:2.5) (0,0)--+(-120:2.5);
\draw[thick,MyBlue!70!black,dashed,->](-20:2) to[bend right] (20:2);
\draw[thick,MyBlue!70!black,dashed,->] (100:2) to[bend right] (140:2);
\draw[thick,MyBlue!70!black,dashed,->] (-140:2) to[bend right] (-100:2);
\draw[thick,dashed,red] (-2,0.2) to (0,0.2) to[bend left] (0.2,0) to[bend left] (0,-0.2) to (-2,-0.2);
\draw[thick,orange,wave](0,-0.1)--+(0:2.5) (0,0)--+(120:2.5) (0,0)--+(-120:2.5);
\draw[fill,orange](0,0)circle[radius=0.7mm];
\node at(20:2.3){$S_1$};
\node at(140:2.3){$S_2$};
\node at(260:2.3){$S_3$};
\node at(60:2){$\Psi^{II}$};
\node at(157:2.2){$\Psi^{III,\uparrow}$};
\node at(160:1.25){{\color{red}$+$}};
\node at(200:1.25){{\color{red}$-$}};
\node at(195:3.0){$\Psi^{III,\downarrow} = \Psi^{III,\uparrow}\, \xi^{\ell \sigma_3}$};
\node at(300:2){$\Psi^{I}$};
\node at(320:0.7){{\scriptsize\color{blue}$(\pm\mp)$}};
\end{tikzpicture}
\caption[]{Stokes phenomenon of the solution near a branch point encircled by a log-cut.  \label{Fig:LogCut}}
\end{center}
\end{figure}

The two halves of the logarithmic cut run in this way over the two distinct sheets of $\Sigma$. It is always possible to fix the choice of logarithmic cuts in such a way that it runs on sheets $\pm$ as shown in Figure \ref{Fig:LogCut}, and we will fix this convention from now on.
Moreover, when crossing the projection of the logarithmic cut on $\IC^*_x$, we are crossing the cut in one direction on one sheet, but in the opposite direction on the other sheet (due to the opposite co-orientation of the two strands of the red-line in Figure \ref{Fig:LogCut}).
Recall from \eqref{eq:log-monodromy} that WKB solutions pick up factors 
\be\label{eq:log-cut-ccw-puncture}
	\psi_{\pm,n} \to \psi_{\pm,n} \xi^{\pm k} \,,
\ee
when going counter-clockwise around logarithmic singularity, where we used \eqref{eq:log-index-$q$-periodicity} to express the shift in the logarithmic index.
Turning counter-clockwise around the branch point corresponds to turning clockwise around the puncture, which gives therefore opposite signs in the exponent of $\xi$. 
The shifts of the basis of solutions pick up upon crossing the (projection to $\IC^*_x$ of the) logarithmic cut counter-clockwise around the branch point at $x_0$ depend on the sign of the basis, and we have two distinct cases labeled $(\pm\mp)$
\begin{align}
	\text{type }(+-):\qquad \label{eq:log-monodromy-cases-1}
	\Psi^{III,\uparrow}(x) & =
	\left(\begin{array}{cc}
		\phi_{+,n}(x) & \phi_{-,m}(x) \\
		\phi_{+,n}(qx) & \phi_{-,m}(qx)
	\end{array} \right)
	\mapsto 
	\Psi^{III,\downarrow}=  
	\Psi^{III,\uparrow}  \xi^{-k\sigma_3}
	\\
	\text{type }(-+):\qquad \label{eq:log-monodromy-cases-2}
	\Psi^{III,\uparrow}(x) & =
	\left(\begin{array}{cc}
		\phi_{-,m}(x) & \phi_{+,n}(x) \\
		\phi_{-,m}(qx) & \phi_{+,n}(qx)
	\end{array} \right) 
	\mapsto 
	\Psi^{III,\downarrow}(x) = \Psi^{III,\uparrow}  \xi^{k\sigma_3}
\end{align}
The signature $(\pm\mp)$ refers to the basis vectors in any region around the branch point. 
It will be important to distinguish between the two cases, and we keep track of this by labeling the branch point by $(\pm\mp)$ as shown in Figure~\ref{Fig:LogCut-types}.
It is important to recall that this result is based on the convention that $y_+$ is the sheet that diverges at the logarithmic puncture while $y_-$ is the one that goes to zero, and that $k<0$ is the degree of divergence, see \eqref{eq:log-bp-T0}.

The Stokes matrices also change type, because the Stokes lines cross the logarithmic cut, explicitly breaking the $\IZ_3$ symmetry of the previous case, see Figure \ref{Fig:LogCut}.
The condition of trivial monodromy is replaced by 
\begin{equation}\label{eq:log-sqrt-monodromy-constraint}
	S_3 S_1S_2 =\xi^{-\ell \sigma_3},\qquad \xi=\left(
	\frac{x}{x_0}
	\right)^{\frac{2\pi i}{\hbar}},\qquad \ell\in\mathbb{Z},
\end{equation}
where $\ell=k$ for \eqref{eq:log-monodromy-cases-1} and $\ell=-k$ for \eqref{eq:log-monodromy-cases-2}, and
\be
	S_j
	=
	\left(\begin{array}{cc}
		i\,\alpha_j & i \\
		i & 0
	\end{array} \right)
	=
	\left(\begin{array}{cc}
		0 & i \\
		i & 0
	\end{array} \right)
	\left(\begin{array}{cc}
		1 &  0 \\
		\alpha_j & 1
	\end{array} \right)	.
\ee

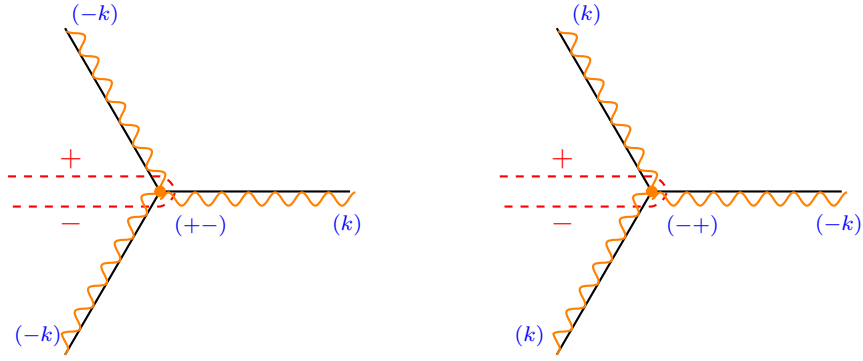
\begin{figure}[h!]
\begin{center}
\begin{tikzpicture}
\draw[thick](0,0)--+(0:2.5) (0,0)--+(120:2.5) (0,0)--+(-120:2.5);
\draw[thick,dashed,red] (-2,0.2) to (0,0.2) to[bend left] (0.2,0) to[bend left] (0,-0.2) to (-2,-0.2);
\node at(160:1.25){{\color{red}$+$}};
\node at(200:1.25){{\color{red}$-$}};
\draw[thick,orange,wave](0,-0.1)--+(0:2.5) (0,0)--+(120:2.5) (0,0)--+(-120:2.5);
\draw[fill,orange](0,0)circle[radius=0.7mm];
\node at(110:2.5){{\scriptsize\color{blue}$(-k)$}};
\node at(230:2.5){{\scriptsize\color{blue}$(-k)$}};
\node at(350:2.5){{\scriptsize\color{blue}$(k)$}};
\node at(320:0.7){{\scriptsize\color{blue}$(+-)$}};
\end{tikzpicture}
\qquad\qquad
\begin{tikzpicture}
\draw[thick](0,0)--+(0:2.5) (0,0)--+(120:2.5) (0,0)--+(-120:2.5);
\draw[thick,dashed,red] (-2,0.2) to (0,0.2) to[bend left] (0.2,0) to[bend left] (0,-0.2) to (-2,-0.2);
\node at(160:1.25){{\color{red}$+$}};
\node at(200:1.25){{\color{red}$-$}};
\draw[thick,orange,wave](0,-0.1)--+(0:2.5) (0,0)--+(120:2.5) (0,0)--+(-120:2.5);
\draw[fill,orange](0,0)circle[radius=0.7mm];
\node at(110:2.5){{\scriptsize\color{blue}$(k)$}};
\node at(230:2.5){{\scriptsize\color{blue}$(k)$}};
\node at(350:2.5){{\scriptsize\color{blue}$(-k)$}};
\node at(320:0.7){{\scriptsize\color{blue}$(-+)$}};
\end{tikzpicture}
\caption{Logarithmic labels for the Stokes data associated to each of the lines emanating from the two types of branch points encircled by logarithmic cuts.}
\label{Fig:LogCut-types}
\end{center}
\end{figure}

The Stokes constants are again determined uniquely by the equation, so that
\be\label{eq:Stokes-matrices-log}
	S_2 = S_3 = S^{(-\ell)}\,,
	\qquad
	S_1 = S^{(\ell)}
	\,,
\ee
where $S^{(\ell)}$ is  defined by
\be\label{eq:StokesLog}
	S^{(\ell)} = \left(\begin{array}{cc}
		-\xi^{\ell} & i \\
		i & 0
	\end{array} \right) \,.
\ee
This is a generalization of $S$ from \eqref{eq:S-matrix}, which corresponds to the special case $\ell=0$.
These matrices describe a change of basis for $\CH$, the space of solutions over $\mathcal{M}_q$
\be
	\Psi^{I} = \Psi^{III,\downarrow}\, S_3\,,
	\qquad
	\Psi^{II} = \Psi^{I}\, S_1\,,
	\qquad
	\Psi^{III,\uparrow} = \Psi^{II}\, S_2\, , \qquad \Psi^{III,\downarrow}=\Psi^{III,\uparrow}\xi^{\ell\sigma_3}
\ee 
For convenience we denote the logarithmic shifts included in the Stokes data of each Stokes line by a label $(\Delta n)$ as in Figure \ref{Fig:LogCut-types}. The two cases shown correspond to having a branch point of type $(+-)$ or $(-+)$ and lead to the two cases $\ell=k$ and $\ell=-k$ respectively.
The case $k=0$ corresponds to the absence of a logarithmic cut and recovers the branch point Stokes data from Figure \ref{Fig:AiryStokes}.
The Stokes matrices feature the coefficient $\xi$ which depends both on $x$, the point where the WKB solution is evaluated, and on $x_0$, branch point at which the solution is normalised.

To obtain the representation of the Stokes matrix \eqref{eq:StokesLog} on the space of solutions over $\IC$ it is sufficient to observe that the factor $\xi$ shifts the logarithmic index $\psi_{s,n} \to \psi_{s,n+1}$. The resulting infinite matrices are
\begin{align}
	(\mathsf{S}_2)_{nm}=(\mathsf{S}_3)_{nm}
	=i\sigma_1 \delta_{n,m}-\frac{1}{2} (1+\sigma_3) \delta_{m,n-\ell}, && (\mathsf{S}_2)_{nm}=i\sigma_1\delta_{n,m}-\frac{1}{2} (1+\sigma_3) \delta_{m,n+\ell},
\end{align}
where $\sigma_1,\,\sigma_3$ are Pauli matrices.

\subsection{WKB solutions near logarithmic branch points}

Near a logarithmic singularity, i.e. a point $x\notin \{0,\infty\}$ where $T_0(x)$ diverges, the leading order behaviour of the WKB wavefunction is given by 
\eqref{eq:psi-leading-order} where $T_0$ is given in \eqref{eq:log-bp-T0}
\be
\begin{split}
	\psi_{s,n}(x,\hbar) 
	& \mathop{\sim}_{\stackrel{\hbar\rightarrow0}{x\to x_*}} 
	(x-x_*)^{-k(\frac{1}{2}+s)}
	\exp\left\{\frac{1}{\hbar}\int_{x_0}^x\left( s\, k \log  (x-x_*)  +2\pi i n\right)\frac{\dd x'}{x'}\right\}
	(1+O(\hbar))
	\\
\end{split}
\ee
Here $x_0$ encodes the choice of normalization for the WKB integral, $x_*$ is the position of the logarithmic branch point, and $k<0$ is a negative integer.
Around $x_*$ the solutions have monodromy \eqref{eq:log-monodromy}, which exchanges $\psi_{s,n}$ with $\psi_{s, n+sk}$.
To determine the presence of Stokes lines, we again resort to the principle that such lines should  mark the locus of maximal dominance of one solution over another.

There are potentially four families of Stokes lines to consider, with labels $(\pm\mp,n)$ and $(\pm\pm,n)$. Since logarithmic branching preserves the index $\pm$, these Stokes lines will only relate solutions with the same sign and different logarithmic index. As we will now see, this means that the Stokes matrices  must be diagonal.

Lines of type $(++,n)$ (the opposite signature $(--,n)$ defines the same line) are simply spirals on $\IC^*_x$, and one may take Stokes lines at $x_*$ to be the two halves of the spiral going respectively towards $x=0$ and towards $x=\infty$
\be
	x(t) = x_* \, \exp \left( \frac{t \hbar}{2\pi i \, n}  \right) \qquad (t\geq 0).
\ee
Their shape is generically that of a spiral, except when $\hbar\in \mathbb{R}$ these degenerate into circles, and when $\hbar\in i\mathbb{R}$ when they become straight lines. The degeneration at $\hbar\in \mathbb{R}$ is expected on the grounds that this is the phase  where the Stokes graph (exponential network) always features a degeneration due to the appearance of a special type of saddles called D0 brane saddles in the physics literature \cite{Banerjee:2018syt}.

\begin{figure}[h!]
	\begin{center}
		\includegraphics[width=0.55\textwidth]{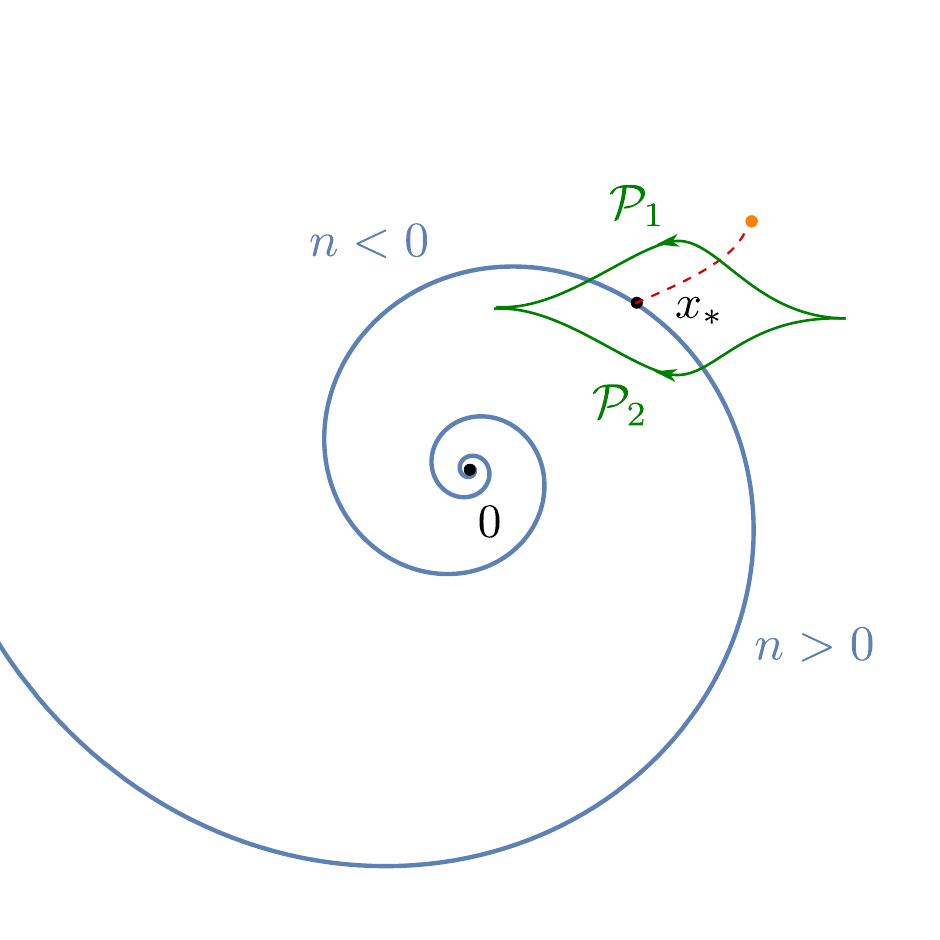}
		\caption{
		Two families of Stokes lines emanating from a logarithmic branch point. The logarithmic cut, whose projection to $\IC^*_x$ is indicated by a dashed line, turns around a square-root branch point (indicated in orange).}
		\label{fig:log-bp-stokes}
	\end{center}
\end{figure}

The monodromy of solutions across logarithmic cuts, going counterclockwise around $x_*$, is given by a power of $\xi$ as in \eqref{eq:log-cut-ccw-puncture}.
The monodromy factor is $q$-periodic, therefore it amounts to a simple rescaling of the two basis solutions when working over $\mathcal{M}_q$, but to an actual change of basis on the (infinite dimensional) space of solutions.
In matrix notation, the solutions change as follows\footnote{Comparing with \eqref{eq:log-monodromy-cases-1} and \eqref{eq:log-monodromy-cases-2}, the signs appear opposite because those expression describe counterclockwise monodromy around the branch point, not around $x_*$.}
\be
	\Psi  \to \Psi\, \xi^{\ell\sigma_3}\,,
\ee
where $\ell=\pm k$ depending on whether $\Psi=\left(\psi_{\pm},\psi_\mp\right)$, and $k<0$ is the degree of divergence for $T_0$ at $x_*$.

To determine the Stokes constants associated with Stokes lines emanating from $x_*$ we proceed exactly in the same way as for square-root branch points, i.e. by requiring that the overall monodromy around $x_*$ should be trivial. 

This leads to a new kind of Stokes line, which is radically different from the ones usually encountered in exact WKB analysis of differential equations. The corresponding Stokes matrices should not be expected to take the usual upper-triangular form of Voros' connection formula. 
Stokes lines of type $(\pm\pm,n)$ were included in the nonabelianization construction for exponential networks of \cite{Banerjee:2018syt}, where their Stokes data was determined for the case when such lines originated from intersections of other lines, but not directly from punctures.
In \cite{Grassi:2022zuk, Alim:2022oll} Stokes lines from logarithmic branch points were shown to play an important role in the description of resurgent structures of the quantum dilogarithm. 
In \cite{gupta:2024} the Stokes data for $(\pm\pm,n)$ Stokes lines originating from logarithmic branch points is determined via an extension of nonabelianization. What we present here is a translation of the results of \cite{gupta:2024} in the language of $q$WKB analysis.

The Stokes matrices $L_\pm$ associated to the lines with $n>0 $ and $n<0$ respectively are
\be\label{eq:log-matrix-ansatz}
	L_\pm  = \left(\begin{array}{cc}
		e^{\sum_{j\geq 1} \mu_j^{(\pm)} \xi^{\pm jk}} & 0 \\
		0 & e^{-\sum_{j\geq 1} \mu_j^{(\pm)} \xi^{\pm jk}}
	\end{array}\right),\qquad \mu_j^{(\pm)}\in\mathbb{Q}
\ee
where $\mu_j$ are the Stokes coefficients to be determined.
This kind of ansatz can be motivated by studying the algebra of Stokes matrices of types $(+-,n)$ and $(-+,m)$ as done in \cite{Banerjee:2018syt}.
Alternatively, one may simply observe that \eqref{eq:log-matrix-ansatz} is the most general ansatz compatible with the requirements that
$L_\pm$ are diagonal, that $\det L_\pm=1$ and that $L_\pm$ is a power series in $\xi^{\pm k}$.
Therefore an equivalent expression is simply as powers series $e^{\sum_{j\geq 1} \mu_j^{(\pm)} \xi^{\pm jk}} = 1 + \sum_{j\geq 1} \alpha_j^{(\pm)} \xi^{\pm jk}$. 

The flatness constraint can be read off from Figure \ref{fig:log-bp-stokes}, by demanding that transport along ${\mathcal{P}_1},{\mathcal{P}_2}$ agree
\be
	L_-
	\cdot
	\left(\begin{array}{cc}
		\xi^k & 0 \\
		0 & \xi^{-k} 
	\end{array}\right)
	=
	(L_+)^{-1}
\ee
This condition is solved by $\alpha^{(-)}_{j} = \delta_{j,1}$ and by $\alpha^{(+)}_{j} = (-1)^j$, so that
\be
	L_-  = \left(\begin{array}{cc}
		1+\xi^{-k} & 0 \\
		0 & \frac{1}{1+\xi^{-k}}
	\end{array}\right)\,,
	\qquad
	L_+  = \left(\begin{array}{cc}
		\frac{1}{1+\xi^k} & 0 \\
		0 & 1+\xi^k
	\end{array}\right)\,.
\ee

\section{Voros Symbols and WKB coordinates for $q$-difference equations}

In this section, we generalise the notion of shear/Fock-Goncharov coordinates for monodromies differential equations, as developed in \cite{Gaiotto:2009hg}, to the case of $q$-difference equations, and obtain for some of them expressions as ratios of q-Wronskian determinants. At the end of the section, it is explained how the shear coordinates reproduce quantum periods, given by periods of WKB differentials on $\Sigma$.

\subsection{Quantum periods}

Recalling from \eqref{eq:lambda-pm-n} that $\lambda_{\pm,n}$ is multivalued on $\IC_x^*\times \IC_y^*$, 
we consider the logarithmic covering $\IC^*_x\times\IC_{\log y}$. 
This defines a complex curve $\tSigma$ that is the branched $\IZ$-covering of $\Sigma$ with ramification at logarithmic branch points and logarithmic singularities. 
The projection map is denoted by
$\tilde\pi:\tSigma\to\Sigma$.

By construction $\lambda_{\pm,n}$ is single valued on $\tSigma$, and its periods define the BPS central charge homomorphism
$Z \in \Hom(H_1(\tSigma,\IZ),\IC)$.
However this map has a large kernel: there are infinitely many cycles with the same period, all projecting down to the same cycle in $\Sigma$. \footnote{From the point of view of the five-dimensional field theory, all these cycles should be associated to the same physical charge.}

There are two types of generators for $\ker Z$.
\begin{enumerate}
\item
Any closed cycle on $\tSigma$ comes in a $\IZ$-family with the same pushforward $\tilde\pi_*$ to $\Sigma$. 
The evaluation of $\lambda$ along different representatives differs by the logarithmic index $n$ in \eqref{eq:lambda-pm-n} and this difference cancels out in the computation of periods along closed contours, by definition of $\tSigma$ as the $\mathbb{Z}$-covering for $\log y$ branching. 
\item
The second type of generator for $\ker Z$ originates from the $\IZ_2$ symmetry of the curve in involutive form $y_+ = y_-^{-1}$. 
This symmetry acts on closed paths on $\Sigma$ by sending each point of a path from one sheet to the opposite one, effectively replacing $\lambda_{\pm,n}\to \lambda_{\mp,n}$. 
If $y_\pm(x)$ tend to finite values at $x\to 0$ (respectively $x\to\infty$) then $\Sigma$ has punctures where $\lambda$ has a simple pole, and therefore $Z$ takes a finite value when evaluated on a cycles $c_\pm$ surrounding the two lifts of $x=0$ (respectively $x=\infty$). Due to the $\IZ_2$ symmetry we have that any lift of $c_++c_- $ to $H_1(\tSigma,\IZ)$ belongs to $\ker Z$.
\end{enumerate}

\begin{definition}\label{def:physical}
    We define the physical charge lattice as the set of cycles along which the period $\lambda$ is well-defined and unique. Assuming that moduli of $\Sigma$ are generic, this is the quotient
\be\label{eq:physical-charges-def}
	\Gamma := H_1(\tSigma,\IZ)  / \ker Z\,.
\ee
\end{definition}
Generic closed paths on $\Sigma$ do not admit closed lifts to $\tSigma$.
If such a lift exists, then (the equivalence class of) its homology class $\gamma \in \Gamma$ corresponds to a physical charge (this should be contrasted with the usual second-order differential case, where every closed cycle in $H_1(\Sigma,\mathbb{Z})$ can be thought as representative of an equivalence class in the physical charge lattice of anti-invariant cycles $H_1^{-}(\Sigma,\mathbb{Z})$). 

\begin{definition}
    For a closed physical cycle $\gamma\in H_1(\tSigma,\mathbb{Z}) $, we define classical and WKB periods as follows. 
    
    Classical periods, or central charges, are the period integrals
    \begin{equation}
        Z_\gamma^{(\pm,n)}:=\oint_\gamma \lambda_{\pm,n}=\oint_\gamma\left(\log y_\pm+2\pi i n \right)\frac{\dd x}{x} 
    \end{equation}
    of the Seiberg-Witten differentials $\lambda_{\pm,\,n}$. 
    
WKB quantum periods are formal power series which are $\hbar$-deformations of classical ones, corresponding to integrals of the WKB differential 
\be\label{eq:q-lambda}
	V_\gamma^{(\pm,n)}:=\frac{1}{\hbar}\oint_\gamma\lambda_{\pm,\hbar},\quad  \lambda_{\pm,n,\hbar} := \left(\log\R_\pm(x,\hbar)+2\pi i n\right)\frac{\dd x}{x}=\lambda_{\pm,n}+O(\hbar) .
\ee
\end{definition}

The WKB periods \eqref{eq:q-lambda} define an $\hbar$-deformation of the BPS central charge homomorphism
\be
	Z_\hbar \in \Hom(H_1(\tSigma,\IZ),\IC[[\hbar]])\,.
\ee
It is natural to ask whether this descends to a well-defined map on the quotient \eqref{eq:physical-charges-def} that defines the physical charge lattice, or whether it is only well defined on $H_1(\tSigma)$.

\begin{conjecture}\label{prop:kernel-inclusion}
There is an inclusion of kernels 
\be\label{eq:kerZ-inclusion}
	\ker Z \subseteq \ker Z_\hbar\,.
\ee
\end{conjecture}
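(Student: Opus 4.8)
The plan is to verify the inclusion separately on each of the two families of generators of $\ker Z$ identified before Definition~\ref{def:physical}, since together they span $\ker Z$ and the maps $Z,Z_\hbar$ are homomorphisms. Throughout we use the all-order formulas from Proposition~\ref{thm:qR-sol} and the ensuing discussion, in particular the fact (established in the proof of Theorem~\ref{thm:WKBSol}) that the $\CR_n(x)$ for $n\geq 1$ are \emph{single-valued} on $\Sigma$ and involve at most square-root multivaluedness, so that $\log\CR_\pm(x,\hbar) - \log y_\pm$ is a single-valued function on $\Sigma$ whose only branching is square-root, inherited from $\sqrt{T_0^2-1}$.

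\textbf{Step 1: cycles differing only by logarithmic index.} A generator of the first type is the difference of two lifts to $\tSigma$ of the same closed cycle $\bar\gamma\subset\Sigma$, differing by a shift $n\mapsto n+m$ of the logarithmic index. Along such a cycle the integrand $\lambda_{\pm,n,\hbar}$ changes only by the \emph{constant} $1$-form $2\pi i\,m\,\tfrac{\dd x}{x}$, exactly as in \eqref{eq:q-lambda}; the $\hbar$-dependent piece $\log\CR_\pm(x,\hbar)\tfrac{\dd x}{x}$ is identical for both lifts because, by the remark just invoked, its $\hbar^{\geq 1}$ part is single-valued on $\Sigma$ and its leading part is $\log y_\pm\tfrac{\dd x}{x}$, whose ambiguity is precisely the logarithmic index already accounted for. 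Hence $V^{(\pm,n+m)}_{\tilde\pi_*^{-1}(\bar\gamma)}-V^{(\pm,n)}_{\tilde\pi_*^{-1}(\bar\gamma)} = \tfrac{1}{\hbar}\oint_{\bar\gamma} 2\pi i\,m\,\tfrac{\dd x}{x}$, which vanishes whenever the corresponding classical period vanishes — and it does, by the very definition of $\tSigma$ as the $\IZ$-cover on which closed contours kill this difference. So these generators lie in $\ker Z_\hbar$ as well.

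\textbf{Step 2: the $\IZ_2$-symmetric combination $c_++c_-$.} For the second type one must show that when $y_\pm$ tend to a finite limit $y_{0}\notin\{0,\pm1,\infty\}$ at a puncture (say $x=0$), the residues of $\lambda_{+,\hbar}$ and $\lambda_{-,\hbar}$ at the two lifts of that puncture are opposite, to all orders in $\hbar$, just as the classical residues $\pm\log y_0$ are. This follows from the involutive symmetry $y_-=y_+^{-1}$ of \eqref{eq:WKBCurveS}: the $q$-Riccati recursion \eqref{eq:qR-sol} is built so that $\CR_{n,-}$ is obtained from $\CR_{n,+}$ by the sign flip $\sqrt{T_0^2-1}\to-\sqrt{T_0^2-1}$, i.e. by the deck transformation of $\Sigma\to\IC^*_x$; consequently $\log\CR_-(x,\hbar)$ is the pullback of $\log\CR_+(x,\hbar)$ under this involution, and therefore $\res_{x=0}\bigl(\log\CR_+(x,\hbar)+\log\CR_-(x,\hbar)\bigr)\tfrac{\dd x}{x}$ is the residue of an \emph{involution-invariant}, hence single-valued-on-$\IC^*_x$, meromorphic $1$-form near $x=0$; its $\hbar^{\geq 1}$ part has at most a simple pole whose residue one checks vanishes order by order from the recursion (the $\CR_n$ are regular at a regular puncture since $T_0^2-1$ does not vanish there). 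Adding the $2\pi i\,n$ pieces, which cancel between $c_+$ and $c_-$ by orientation, gives $V^{(\pm,n)}_{c_++c_-}=0$ in $\IC[[\hbar]]$, matching $Z_{c_++c_-}=0$.

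\textbf{Main obstacle.} The genuinely delicate point is Step~2: one needs that the higher WKB corrections do not generate new poles at the regular punctures $x=0,\infty$ beyond those already present classically, and that the antisymmetry under the sheet involution persists at every order. I expect this to require a careful inductive argument on the recursion \eqref{eq:qR-sol}, controlling the behaviour of $\CR_{n,\pm}$ near $T_0^{-1}(\{y_0\})$ and tracking how $\partial_{\log x}$ acts on $\sqrt{T_0^2-1}$; the logarithmic index contributions (Step~1) and the algebraic/homological bookkeeping are comparatively routine once the single-valuedness statements from the proof of Theorem~\ref{thm:WKBSol} are in hand. It is precisely because this regularity-at-punctures input is not fully established in general that the statement is phrased as Conjecture~\ref{prop:kernel-inclusion} rather than a theorem.
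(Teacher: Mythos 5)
Your decomposition of $\ker Z$ into the two families of generators, and your Step 1, coincide with the paper's own argument: both rest on the observation, extracted from the proof of Theorem~\ref{thm:WKBSol}, that the $\hbar^{\geq 1}$ part of $\log\CR_\pm$ is single-valued on $\Sigma$, so that $\lambda_{\pm,n,\hbar}$ depends on the logarithmic index in exactly the same way as $\lambda_{\pm,n}$, and differences of lifts of the same cycle are annihilated by $Z_\hbar$ for the same reason they are annihilated by $Z$. Where you genuinely diverge is Step 2. The paper handles the generator $c_++c_-$ by a physical argument: the $c_\pm$ are flavor cycles (trivial intersection pairing, contractible near a puncture), central charges of flavor cycles are not expected to receive $\hbar$-corrections, hence $Z_\hbar(c_\pm)=Z(c_\pm)$ \emph{individually}, and the classical vanishing of $Z(c_++c_-)$ transfers to $Z_\hbar$. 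You instead aim at the weaker statement that only the \emph{sum} of the two quantum residues cancels, deriving it from the deck involution $y_-=y_+^{-1}$ and the regularity of the $\CR_{n,\pm}$ at a regular puncture. This is a legitimate and arguably more self-contained route, since it would replace a physical expectation by a structural property of the recursion \eqref{eq:qR-sol}; but, as you flag yourself, the key inductive claim --- that $\res_{x=0}\bigl(\log\CR_++\log\CR_-\bigr)\tfrac{\dd x}{x}$ vanishes at every order beyond the leading one, e.g.\ because $\log\CR_++\log\CR_-$ is exact or pole-free there --- is not actually established. Since the paper itself offers only a heuristic justification (the statement is a Conjecture, verified in examples), your proposal sits on the same footing: Step 1 is solid and identical to the paper's, while Step 2 is an honest, differently-flavoured heuristic whose gap you have correctly identified as the place where a real proof would have to do work.
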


We find that this conjecture holds in all examples that we consider. In fact, we can argue that this should always hold, as follows.
Recall that $\ker Z$ has two types of generators. The first type corresponds to taking $c_n - c_m$, the difference of two cycles with the same pushforward to $\Sigma$, i.e. $\tilde\pi_* (c_{n})=  \tilde\pi_* (c_{m}) \in H_1(\Sigma,\IZ)$.
The integral along these cycles only differs in the logarithmic shift in $\lambda$, recall \eqref{eq:lambda-pm-n}.
As observed in the previous section, the wavefunction has a simple dependence on the logarithmic branch \eqref{eq:log-index-$q$-periodicity}. In turn, this implies that $\lambda_\hbar$ has exactly the same dependence as $\lambda$ on the logarithmic index.
This proves the conjecture for the generators of $\ker Z$ of the first type.
For the second type of generators observe that $c_\pm$ correspond ti cycles with trivial intersection pairing (also called flavor cycles), since they are contractible near a puncture of $\Sigma$. Central charges of flavor cycles are not expected to receive any quantum corrections in $\hbar$, therefore if $Z_\hbar(c_\pm) = Z(c_\pm)$, and if $c_++c_-\in \ker Z$ it also belongs to $\ker Z_\hbar$.

 A consequence of \eqref{eq:kerZ-inclusion} is that the quantum BPS central charge $Z_\hbar$ is well-defined on the physical charge lattice $\Gamma$.
 \begin{definition}\label{def:Voros}
Let $\gamma\in\Gamma$. We define the Voros symbol $\CU_\gamma$ as the formal series defined by the quantum period of $\gamma$ up to an overall normalization\footnote{This definition is in analogous to that of Voros symbols in the context of differential equations, see \cite{IwakiNakanishi1}.}
\be\label{eq:Voros-gen-def}
	\CU_\gamma:=e^{V_\gamma}=e^{\frac{Z_\gamma}{\hbar}}(1+O(\hbar))  \,.
\ee
The quantum period $V_\gamma$ along an element $\gamma$ of the physical charge lattice $\Gamma$ will be called Voros coefficient.
 \end{definition}
 Note that while the integral of the quantum period is taken over a representative in $H_1(\tSigma,\IZ)$ of the equivalence class $\gamma\in\Gamma$,
different choices are related by a change in the logarithmic branch, and thanks to \eqref{eq:kerZ-inclusion} this does not affect $ V_\gamma$. We will now introduce a set of coordinates for monodromies of $q$-difference equations, closely related to Voros symbols.

\subsection{A review of Fock-Goncharov coordinates from exact WKB analysis}

Before proceeding further, we briefly review how Voros symbols in the setting of differential equations may be expressed in terms of certain cross-ratios of Wronskians. Our exposition will be brief, more details can be found in \cite{FockGoncharovHigher, Gaiotto:2009hg, Gaiotto:2012db, Allegretti:2018kvc}.\footnote{For conventions see \cite{IwakiNakanishi1, Longhi:SISSAtalk}.}

\subsubsection{Exact WKB bases of solutions}

In the context of Schr\"odinger equations of the form
\begin{equation}\label{eq:Schrodinger}
	\left(-\hbar^2\frac{\partial^2}{\partial x^2}+Q(x;\hbar) \right)\psi=0,\qquad Q(x;\hbar)=Q_0(x)+O(\hbar),
\end{equation}
the WKB method produces asymptotic series solutions 
\be\nonumber
	\psi_\pm^{\fb} = \frac{1}{\sqrt{S_{odd}}} \exp\(\pm \int_{\fb}^x S_{odd}(x',\hbar) \, dx'\)
\ee
where $\fb$ is a branch point used for normalization and $S_{odd}$ is an asymptotic series obtained by solving the Riccati equation order by order in $\hbar$. In general this equation will be defined by an oper on a Riemann surface, here for simplicity we will assume that this is a differential equation on the complex plane.

It is important to realize that $\psi_\pm^{\fb}$ are defined everywhere away from a system of branch cuts. The Stokes graph $\CW$ is defined by
$
	\Im \, \hbar^{-1}\int_{\fb}^x \sqrt{Q_0(x')}\, dx' = 0
$
for lines emanating from $\fb$. A line is said to be of type $\pm$ (resp. $\mp$) if the solution $\psi^{\fb}_-$ (resp. $\psi^{\fb}_+$) decays exponentially along that line going away from~$\fb$ (see discussion in Section \ref{sec:Stokes-graph}). 
The Stokes graph together with branch cuts divide $\mathbb{C}$ into regions $\{{R}_i\}$, and inside each region Borel summation produces locally analytic functions
$
	\varphi^{{R},\fb}_{\pm} := \CS_{{R}}[\psi^{\fb}_{\pm}]
$
when $x\in{R}$.
The functions defined in this way within a region ${R}$ can be extended globally to all of $\mathbb{C}\setminus\{{\rm cuts}\}$, by analytic continuation. These functions are related to each other by constant linear transformations. \footnote{It should be noted that monodromy around a branch point with a single branch cut is $S_0S_{+}S_- S_+ = S_0S_-S_+ S_- = -\mathbb{I}_2$. In the previous section we had instead set up the Stokes matrices so that total monodromy around square-root branch points was equal to the identity. An analogous parametrisation in the differential case was used in \cite{Gavrylenko:2020gjb}.  The sign difference is due to the fact that there we used three square-root branch cuts around each branch point, and to the fact that $S_0^2 = -\mathbb{I}_2$. \label{foot:sign-bp}}
\begin{itemize}
\item Across a Stokes line of type $\pm$ from $\fb$ the basis of solutions changes as follows
\be\label{eq:stokes-jump}
	\( \varphi^{{R},\fb}_{+}, \varphi^{{R},\fb}_{-} \) = \( \varphi^{{R}',\fb}_{+}, \varphi^{{R}',\fb}_{-} \) \cdot S_\pm 
\ee
where ${R}$ lies clockwise of the Stokes line and ${R}'$ lies counter clockwise. The Stokes matrices are
\be\nonumber
	S_+ = \(\begin{array}{cc}
		1 & 0 \\
		i & 1
	\end{array}\)
	\qquad
	S_- = \(\begin{array}{cc}
		1 & i \\
		0 & 1
	\end{array}\)
\ee
At a line of type $+$ it is $\psi_+$ that jumps while $\psi_-$ stays constant, and the opposite at Stokes lines of type $-$.
\item Crossing a branch cut exchanges $\varphi_{\pm}^{\fb,{R}}$, and adds a phase due to the normalization factor. Crossing the cut clockwise near a branch point where $S_{odd}\sim\hbar^{-1} \sqrt{x}$ gives
\be\label{eq:cut-jump}
	\( \varphi^{{R},\fb}_{+}, \varphi^{{R},\fb}_{-} \) = \( \varphi^{{R}',\fb}_{+}, \varphi^{{R}',\fb}_{-} \) \cdot S_0
\ee
where ${R}$ lies clockwise of the branch cut and ${R}'$ lies counter clockwise, and 
\be\nonumber
	S_0 = \(\begin{array}{cc}
		0 & i \\
		i & 0
	\end{array}\)
\ee
\item 
Certain regions connect more than one branch point. If $R$ is such a region, introducing the Borel summation of a Voros symbol (with a slight abuse of notation, we will refer to both as Voros symbol in what follows)
\begin{equation}
    \CU_{\fb\fb'}:=\CS_R(\Uij_{\fb\fb'}),\qquad \Uij_{\fb\fb'}:=
\exp\int_{\fb}^{\fb'}S_{odd}(x,\hbar)\dd x,
\end{equation}
the corresponding solutions are related by
\be\label{eq:voros-jump}
	\( \varphi^{{R},\fb}_{+}, \varphi^{{R},\fb}_{-} \) = \( \varphi^{{R},\fb'}_{+}, \varphi^{{R},\fb'}_{-} \) \cdot \Uij_{\fb,\fb'} .
\ee
where
\be\nonumber
	\Uij_{\fb,\fb'} = \(\begin{array}{cc}
		\CU_{\fb\fb'} & 0 \\
		0 & \CU_{\fb\fb'}^{-1}
	\end{array}\).
\ee

\end{itemize}

\subsubsection{Rays in the space of solutions and Fock-Goncharov coordinates}

Let $\CV$ denote the 2-dimensional $\mathbb{C}$-vector space of solutions of the second order ODE \eqref{eq:Schrodinger}, and $\varphi_{\pm}^{{R},\fb}$ a basis of solutions normalised at the branch point $\fb$. 
Transformations \eqref{eq:stokes-jump}, \eqref{eq:cut-jump} and \eqref{eq:voros-jump} express the change of basis among pairs defined by Borel summation within each region and for each choice of branch point.
If the surface has punctures, these come in two types: regular and irregular, with irregular punctures featuring Stokes sectors. After a real blowup an irregular puncture is replaced by a disk with marked points on the boundary.

Let $p$ be any regular puncture, or any marked point of an irregular one.
To $p$ we can canonically associate a line $\CL\subset \CV$ generated by the solution that vanishes at that puncture. If Stokes lines of type $+-$ end on $p$ the line is generated by $\varphi_{-}^{{R},\fb}$, and viceversa
\be
\CL_p :=
\left\{\begin{array}{l}
	\IC\, \varphi_{-}^{{R},\fb} \qquad \text{if $+-$ Stokes lines end on $p$}\\
	\IC\, \varphi_{+}^{{R},\fb} \qquad \text{if $-+$ Stokes lines end on $p$}\\
\end{array}\right.
\ee
Any two regions ${R}$ and ${R}'$ whose boundary contains $p$ define the same line.

The Fock-Goncharov coordinate system is defined by a triangulation. Given a triangulation, any internal edge separates a pair of adjacent triangles. Let $p_i$ with $i=1,2,3,4$ be the four punctures at the corners of the corresponding quadrilateral, labeled counterclockwise. Let $s_i$ be a choice of generator for $\CL_{p_i}$, and define $\fs_i:=\left(\begin{array}{c} s_i  \\ s_i' 
\end{array} \right)$. The Fock-Goncharov coordinate associated to the edge is 
\be\label{eq:FG}
	\CU_E = -\frac{(\fs_{1}\wedge \fs_2)(\fs_3\wedge \fs_4)}{(\fs_2\wedge \fs_3)(\fs_4\wedge \fs_1)}.
\ee
We will now recall the relation between the Fock-Goncharov coordinate $\CU_E$ and the Voros symbol $\CU_\gamma$ from \eqref{eq:Voros-gen-def}.

\subsubsection{Fock-Goncharov coordinates from WKB approximation for Schr\"odinger equations}\label{sec:FGCrossRatioDiff}

The WKB approximation introduces a distinguished triangulation, defined by choosing edges to be generic leaves of the foliation induced by the quadratic differential $Q_0$ that defines the Schr\"odinger potential in \eqref{eq:Schrodinger}, see \cite{Gaiotto:2009hg}. Because of this, for Schr\"odinger equations we can equivalently view the Fock-Goncharov coordinates \eqref{eq:FG} as defined by the WKB Stokes graph, instead of a triangulation. Such a reformulation will be crucial when discussing $q$-difference equations, as in that case the Stokes graph is not dual to a triangulation.

\begin{figure}[h]
\begin{center}
 \begin{tikzpicture}
\draw[fill] (-2.0,0) node (v1) {} circle[radius=0.7mm];
\draw[fill](1.5,0) node (v2) {} circle[radius=0.7mm];
\draw[thick] (v1.center) to [out=60, in=-50] (-3,3);
\draw[thick] (v1.center) to  (-3.5,0);
\draw[thick] (v1.center) to [out=-60, in=150] (2.4,-3);
\draw[thick] (v2.center) to [out=120, in=-30] (-2.4,3);
\draw[thick] (v2.center) to  (3,0);
\draw[thick] (v2.center) to [out=-120, in=125] (2.9,-3);
\draw[red,double] (2.7,-3)--(-2.7,3);

\draw[thick, green!70!black,dashed, <-](v2) to[bend right=15] (v1);
\node at (0,-0.5){$\Uij_{\fb\fb'}$};;
\node at (3.5,0){\small\color{blue}$(\mp,\pm)$};
\node at (-4.0,0){\small\color{blue}$(\pm,\mp)$};
\node at (-1.4,-0.1){\tiny$(\fb, I)$};
\node at (-2.6,0.3){\tiny$(\fb,II)$};
\node at (-2.6,-0.3){\tiny$(\fb,III)$};
\node at (1,-0.2){\tiny$(\fb', I)$};
\node at (2.1,0.3){\tiny$(\fb',III)$};
\node at (2.1,-0.3){\tiny$(\fb',II)$};

\node at (-2,3){\tiny 2};
\node at (3,0.5){\tiny 1};
\node at (2,-3){\tiny 4};
\node at (-3.5,-0.5){\tiny 3};

\end{tikzpicture}
\caption{Neighboring turning points. \label{fig:FGSchrod}}
\end{center}
\end{figure}
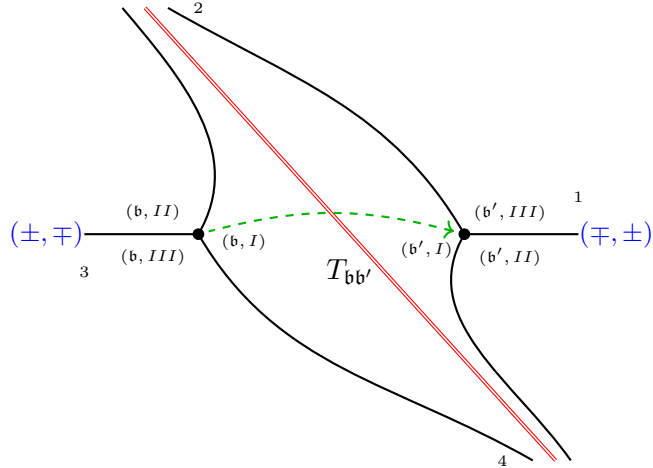

Consider two branch points $\fb,\,\fb'$ sharing a region defined by the WKB Stokes graph, and a path connecting the two branch points. We choose the branch cuts close to the Stokes lines as in Section \ref{Sec:AiryStokes}. In this way, each region can be labeled by the pair of a branch point and a roman numeral, as in Figure \ref{fig:FGSchrod}. 

Introduce the Voros symbol associated to the parallel transport between branch points:
\begin{equation}
    \varphi^{(\fb',I)}_\pm=\varphi^{(\fb,I)}_\pm\cdot(\CU_{\fb\fb'})^{\pm1},
\end{equation}
which in matrix form reads
\begin{equation}\label{eq:VorosMatrix}
    \Psi_\fb^I(x)\Uij_{\fb\fb'}=\Psi_{\fb'}^I(x)\,,\qquad \Uij_{\fb\fb'}=\left( \begin{array}{cc}
       0  & i\CU_{\fb\fb'} \\
       i\CU_{\fb\fb'}^{-1}  & 0
    \end{array} \right).
\end{equation}
We can introduce the four vanishing solutions as
\be
	s_1 = \varphi_\mp^{(\fb',III)}=\left(\Psi^{(\fb',III)} \right)_{i2},
	\qquad
	s_2 = \varphi_{\pm}^{(\fb',III)}=\left(\Psi^{(\fb',III)} \right)_{i1} , \nonumber
\ee
\be
	s_3 = \varphi_{\pm}^{(\fb,III)}=\left(\Psi^{(\fb,III)} \right)_{i2} ,
	\qquad
	s_4 = \varphi_{\mp}^{(\fb,III)}=\left(\Psi^{(\fb,III)} \right)_{i1} .
\ee

We will now verify that the Fock-Goncharov coordinate defined by the cross-ratio \eqref{eq:FG} of these four solutions coincides with the square of the Voros symbol $\CU_{\fb\fb'}$.
First note that
\begin{equation}
    \fs_1\wedge\fs_2=-\det\Psi^{(\fb',III)}=-1, \qquad \fs_3\wedge\fs_4=-\det\Psi^{(\fb,III)}=-1\,.
\end{equation}
Using the explicit form of the matrices $S$ and $\Uij_{\fb\fb'}$ in \eqref{eq:S-matrix} and \eqref{eq:VorosMatrix}, we also find
\begin{equation}
    s_3=\left(\Psi^{(\fb,III)} \right)_{i2}=\left(\Psi^{(\fb',I)}S\Uij_{\fb\fb'}^{-1}S^2 \right)_{i2}=-\CU_{\fb\fb'}\fs_1-i\left(\CU_{\fb\fb'}+\CU_{\fb\fb'}^{-1} \right)\fs_2,
\end{equation}
\begin{equation}
    \fs_4=\left(\Psi^{(\fb',I)}S\Uij_{\fb\fb'}^{-1}S^2 \right)_{i1}=-i\CU_{\fb\fb'}\fs_1+\CU_{\fb\fb'}\fs_2
\end{equation}
so that
\begin{equation}
    \fs_2\wedge\fs_3=\CU_{\fb\fb'}\fs_1\wedge\fs_2=-\CU_{\fb\fb'},\qquad \fs_1\wedge\fs_4=\CU_{\fb\fb'}\fs_1\wedge\fs_2=-\CU_{\fb\fb'}.
\end{equation}
The Fock-Goncharov edge coordinate is then
\begin{equation}
	\CU_E = \frac{(\fs_{1}\wedge \fs_2)(\fs_3\wedge \fs_4)}{(\fs_2\wedge \fs_3)(\fs_4\wedge \fs_1)}=\frac{1}{\CU_{\fb\fb'}^2 }.
\end{equation}

The four lines $ s_i\, \IC$ that appear in the definition \eqref{eq:FG} of Fock-Goncharov coordinates have then been expressed directly in terms of the WKB basis defined in the two regions $(\fb,III)$ and $(\fb',III)$ adjacent to the common one, labeled by the numeral $I$. 

\begin{remark} \label{rmk:AntiInv}
Recall that the lattice of physical charges here is the anti-invariant homology $H^-(\Sigma,\mathbb{Z})$. We can think of an anti-invariant cycle $\gamma$ on $\Sigma$ being constructed by two copies of an open segment connecting the two branch points on different sheets. This explains the square and tells us that
$\CU_E=\CU_\gamma$.

Furthermore, the fact that the Voros symbol is squared resolves the sign ambiguity in the normalization of wavefunctions. 
\end{remark}

\subsection{WKB coordinates for $q$-difference equations}

We now generalise the above construction of Fock-Goncharov coordinates in terms of Wronskians to the setting of $q$-difference equations.

\subsubsection{Arc decomposition of homology cycles}\label{sec:arc-splitting}
We begin by constructing a map that associates to any physical charge $\gamma\in \Gamma$ a system of arcs on $\IC^*_x$ labeled by pairs of sheets of $\tSigma$. This generalises the simple observation in Remark \ref{rmk:AntiInv}, that the anti-invariant cycle $\gamma$ around two branch points of the WKB curve for a second-order ODE can be written as the sum of two open segments connecting the two branch points on the base. 

Start by fixing a choice of representative for $\gamma$ in $H_1(\tSigma,\mathbb{Z})$.
This can be pushed forward via the projection maps $\tilde\pi:\tSigma\to\Sigma$ and $\pi:\Sigma\to\IC^*_x$ down to a closed path on $\IC^*_x$.
The resulting cycle on $\IC^*_x$ will generally cross branch cuts both of logarithmic and square-root types.
Dividing the path into components cut out by these intersections we assign a label $(s,n)$ to each component that keeps track of the origin of that path on the $s$-th sheet of $\Sigma$ and on the $n$-th lift of that sheet to $\tSigma$.
The resulting collection of arcs with labels retains all information about the original cycle on $\tSigma$.
Clearly, the WKB quantum period \eqref{eq:q-lambda} can be expressed as a sum of integrals of the WKB differential $\lambda_{s,n,\hbar}$ along these labeled arcs.

\subsubsection{Fock-Goncharov coordinates for $q$-difference equations}\label{sec:qFG}

We now turn to the task of defining Voros symbols of open paths for $q$-difference equations.
As explained in section \ref{sec:2nd-order-qDE-WKB}, WKB solutions of second-order $q$-difference equations are labeled by two indices $(s,n)$ where $s=\pm$ and $n\in \IZ$. 
For any path connecting two branch points, we define the associated Voros symbol as the ratio of solutions with equal logarithmic index
\begin{equation}\label{eq:qVorosDef}
    \CU_{\fb\fb'}^{\pm1}:=\frac{\varphi_{s,0}^{(\fb',R)}}{\varphi_{s',0}^{(\fb,R)}}
\end{equation}
in a region $R$ connecting two quadratic branch points $\fb,\,\fb'$. The ratio of solutions with logarithmic index different than zero will be instead
\begin{equation}
    \frac{\varphi_{s,n}^{(\fb',R)}}{\varphi_{s',n}^{(\fb,R)}}=\CU_{\fb\fb'}^{\pm1}\left(\frac{\fb}{\fb'}\right)^{\frac{2\pi in}{\hbar}}.
\end{equation}
The signs $s,s'$ can either be the same or opposite, depending on the signatures of $\fb, \fb'$. In what follows, we will consider only matrices representing analytic continuation of matrices with zero logarithmic index.

\begin{enumerate}
\item When the lines emanating from $\fb$ are of type $(\pm\mp,k)$ and the lines emanating from $\fb'$ are of type $(\mp\pm,k')$, the situation is identical as in the differential case (see Figure \ref{fig:qFG1}). The matrix expressing analytic continuation of solutions normalised at different branch points is then
\begin{equation}\label{eq:qVorosMatrix1}
    \Psi^R_\fb(x) \Uij_{\fb\fb'}=\Psi^R_{\fb'}(x),\qquad \Uij_{\fb\fb'}=\left(\begin{array}{cc}
    0     & i\CU_{\fb\fb'} \\
    i\CU_{\fb\fb'}^{-1}     & 0
    \end{array} \right).
\end{equation}
In this case the signs in \eqref{eq:qVorosDef} are identical $s=s'$.

\item In the case of $q$-difference equations, it is possible that the lines emanating from $\fb$ and $\fb'$ are of type $(\pm\mp,k)$ and $(\pm\mp,k')$ respectively. In this case, we have instead
\begin{equation}\label{eq:qVorosMatrix2}
    \Psi^R_\fb(x) \tilde \Uij_{\fb\fb'}=\Psi^R_{\fb'}(x),\qquad \tilde \Uij_{\fb\fb'}=\left(\begin{array}{cc}
    \CU_{\fb\fb'} & 0 \\
    0 & \CU_{\fb\fb'}^{-1}
    \end{array} \right).
\end{equation}
In this case the signs in \eqref{eq:qVorosDef} are opposite $s=-s'$.

\item Finally, due to the possibility of intersections between Stokes lines, it can happen that there is a range of phases of $\hbar$ for which two branch points have a common region, and another range where they don't, without any saddle in between. In this case, we define the matrix of analytic continuation between the two branch points as in cases 1 and 2, that incorporates analytic continuation across possibly infinite secondary Stokes lines. We will discuss this possibility more extensively in our first example, the q-Airy equation of Section \ref{sec:q-Airy}.
\end{enumerate}
\begin{figure}[h]
\begin{center}
\begin{subfigure}{.5\textwidth}
\centering
 \begin{tikzpicture}[scale=0.85]
\draw[fill] (-2.0,0) node (v1) {} circle[radius=0.7mm];
\draw[fill](1.5,0) node (v2) {} circle[radius=0.7mm];
\draw[thick] (v1.center) to [out=60, in=-50] (-3,3);
\draw[thick] (v1.center) to  (-3.5,0);
\draw[thick] (v1.center) to [out=-60, in=150] (2.4,-3);
\draw[thick] (v2.center) to [out=120, in=-30] (-2.4,3);
\draw[thick] (v2.center) to  (3,0);
\draw[thick] (v2.center) to [out=-120, in=125] (2.9,-3);
\draw[red,double] (2.7,-3)--(-2.7,3);

\draw[thick, green!70!black,dashed, <-](v2) to[bend right=15] (v1);
\node at (0,-0.5){$\Uij_{\fb\fb'}$};
\node at (3.9,0){\small\color{blue}$(\pm,\mp,k')$};
\node at (-4.3,0){\small\color{blue}$(\mp,\pm,k)$};
\node at (-1.4,-0.1){\tiny$(\fb, I)$};
\node at (-2.6,0.3){\tiny$(\fb,II)$};
\node at (-2.6,-0.3){\tiny$(\fb,III)$};
\node at (1,-0.2){\tiny$(\fb', I)$};
\node at (2.1,0.3){\tiny$(\fb',III)$};
\node at (2.1,-0.3){\tiny$(\fb',II)$};

\node at (-2,3){\tiny 2};
\node at (3,0.5){\tiny 1};
\node at (2,-3){\tiny 4};
\node at (-3.5,-0.5){\tiny 3};

\end{tikzpicture}
\caption{Case 1}\label{fig:qFG1}
\end{subfigure}\hfill
\begin{subfigure}{.4\textwidth}
\centering
\begin{tikzpicture}[scale=0.85]
\draw[fill] (-2.0,0) node (v1) {} circle[radius=0.7mm];
\draw[fill](1.5,0) node (v2) {} circle[radius=0.7mm];
\draw[thick] (v1.center) to [out=60, in=-50] (-3,3);
\draw[thick] (v1.center) to  (-3.5,0);
\draw[thick] (v1.center) to [out=-60, in=150] (2.4,-3);
\draw[thick] (v2.center) to [out=120, in=-30] (-2.4,3);
\draw[thick] (v2.center) to  (3,0);
\draw[thick] (v2.center) to [out=-120, in=125] (2.9,-3);
\draw[red,double] (2.7,-3)--(-2.7,3);

\draw[thick, green!70!black,dashed, <-](v2) to[bend right=15] (v1);
\node at (0,-0.5){$\tilde\Uij_{\fb\fb'}$};
\node at (3.9,0){\small\color{blue}$(\pm,\mp,k')$};
\node at (-4.3,0){\small\color{blue}$(\pm,\mp,k)$};
\node at (-1.4,-0.1){\tiny$(\fb, I)$};
\node at (-2.6,0.3){\tiny$(\fb,II)$};
\node at (-2.6,-0.3){\tiny$(\fb,III)$};
\node at (1,-0.2){\tiny$(\fb', I)$};
\node at (2.1,0.3){\tiny$(\fb',III)$};
\node at (2.1,-0.3){\tiny$(\fb',II)$};

\node at (-2,3){\tiny 2};
\node at (3,0.5){\tiny 1};
\node at (2,-3){\tiny 4};
\node at (-3.5,-0.5){\tiny 3};

\end{tikzpicture}
\caption{}\label{fig:qFG2}
\end{subfigure}
\end{center}
\caption{}
\label{eq:qFG-both}
\end{figure}

We will now study the relation between the Voros symbol $\CU_{\fb\fb'}$ in \eqref{eq:qVorosDef} and the cross-ratio of q-Wronskians $ \frac{(\fs_1\wedge\fs_2)(\fs_3\wedge\fs_4)}{(\fs_2\wedge\fs_3)(\fs_1\wedge\fs_4)}$ (recall that $\fs_i=\left(\begin{array}{c}
     s_i(x)  \\
     s_i(qx)
\end{array}\right)$). We distinguish two cases, corresponding to transport between branch points with signatures as shown in Figure \ref{eq:qFG-both}.

\subsubsection*{Case 1:} 
If branch points have opposite signature, as in Figure \ref{fig:qFG1}, introduce the bases of solutions as in the differential case
\be
	s_1 = \varphi_{\mp,n_1}^{(\fb',III)}=\left(\Psi^{(\fb',III)} \right)_{i2},
	\qquad
	s_2 = \varphi_{\pm,n_2}^{(\fb',III)}=\left(\Psi^{(\fb',III)} \right)_{i1} , \nonumber
\ee
\be
	s_3 = \varphi_{\pm,n_3}^{(\fb,III)}=\left(\Psi^{(\fb,III)} \right)_{i2} ,
	\qquad
	s_4 = \varphi_{\mp,n_4}^{(\fb,III)}=\left(\Psi^{(\fb,III)} \right)_{i1} .
\ee
We included the possibility of having bases with arbitrary logarithmic index.

The bases are related by Stokes and transport matrices as follows
\begin{equation}
    s_3=\left(\Psi^{(\fb,III)} \right)_{12}=\left(\Psi^{(\fb',I)}S^{(\ell')}\Uij_{\fb\fb'}^{-1}(S^{(\ell)})^{-1} \right)_{12}=-is_2\left(\CU_{\fb\fb'}\xi_\fb^{\ell}+\xi_{\fb'}^{\ell'}\CU_{\fb\fb'}^{-1} \right)-s_1\CU_{\fb\fb'}, 
\end{equation}
\begin{equation}
s_4=\left(\Psi^{(\fb,III)} \right)_{12}=\left(\Psi^{(\fb',I)}S^{(\ell')}\tilde\Uij_{\fb\fb'}^{-1}(S^{(\ell)})^{-1} \right)_{11}=-is_1\CU_{\fb\fb'}+s_2\CU_{\fb\fb'}\xi_{\fb}^{\ell}
\end{equation}
where $\xi_\fb$ and $\xi_{\fb'}$ refer to the logarithmic shift parameter $\xi$ in \eqref{eq:xi-def} with $x_0$ chosen at the position of $\fb$ or $\fb'$ respectively.
Therefore the $q$-Wronskian recovers the transport coefficients multiplied by appropriate logarithmic shifts
\begin{equation}
    \frac{(\fs_1\wedge\fs_2)(\fs_3\wedge\fs_4)}{(\fs_2\wedge\fs_3)(\fs_1\wedge\fs_4)}=\CU_{\fb\fb'}^{-2}\,\xi_{\fb}^{\ell}\,\xi_{\fb'}^{-\ell'}.
\end{equation}

\subsubsection*{Case 2:}

If branch points have same signature, as in Figure \ref{fig:qFG2}, introduce the four vanishing solutions 
\be
	s_1 = \varphi_{\pm,n_1}^{(\fb',III)}=\left(\Psi^{(\fb',III)} \right)_{i2},
	\qquad
	s_2 = \varphi_{\mp,n_2}^{(\fb',III)}=\left(\Psi^{(\fb',III)} \right)_{i1} , \nonumber
\ee
\be
	s_3 = \varphi_{\pm,n_3}^{(\fb,III)}=\left(\Psi^{(\fb,III)} \right)_{i2} ,
	\qquad
	s_4 = \varphi_{\mp,n_4}^{(\fb,III)}=\left(\Psi^{(\fb,III)} \right)_{i1} .
\ee
Here we have included the possibility of having bases with arbitrary logarithmic index.
Using the explicit form of the matrices $S^{(\ell)}$ and $\tilde\Uij_{\fb\fb'}$ in \eqref{eq:StokesLog} and \eqref{eq:qVorosMatrix2}, we find
\begin{equation}
    s_3=\left(\Psi^{(\fb,III)} \right)_{12}=\left(\Psi^{(\fb',I)}S^{\ell}\tilde\Uij_{\fb\fb'}^{-1}(S^{(\ell')})^{-1} \right)_{12}=is_2\left(\frac{\xi^{\ell'}}{\CU_{\fb\fb'}}-\CU_{\fb\fb'}\xi^\ell \right)+s_1\CU_{\fb\fb'}^{-1},
\end{equation}
\begin{equation}
    \fs_4=\left(\Psi^{(\fb,III)} \right)_{12}=\left(\Psi^{(\fb',I)}S^{\ell}\tilde\Uij_{\fb\fb'}^{-1}(S^{(\ell')})^{-1} \right)_{11}=\CU_{\fb\fb'}\fs_2,
\end{equation}
so that 
\begin{equation}
    \fs_3\wedge \fs_4=\fs_1\wedge\fs_2,\quad \fs_2\wedge\fs_3=\CU_{\fb\fb'}^{-1}\fs_2\wedge\fs_1,\quad \fs_1\wedge\fs_4=\CU_{\fb\fb'}\fs_1\wedge\fs_2.
\end{equation}
We then conclude that the $q$-Wronskian is does not capture the Voros symbol, since
\begin{equation}\label{eq:qWronskiansCase2}
	\frac{(\fs_{1}\wedge \fs_2)(\fs_3\wedge \fs_4)}{(\fs_2\wedge \fs_3)(\fs_1\wedge \fs_4)}=1.
\end{equation}

As we noted earlier, in this case the path between branch points does not correspond to a saddle. This results in the cross-ratio of solutions being trivial. Nontheless, the Voros symbol $\CU_{\fb\fb'}$ provides a well-defined coordinate for the monodromies of the $q$-difference equation.

\subsection{Quantum periods and monodromy representation of $q$-difference equations}\label{sec:qMonodromy}

By concatenating matrices of type \eqref{eq:S-matrix}, \eqref{eq:StokesLog}, \eqref{eq:qVorosMatrix1}, \eqref{eq:qVorosMatrix2}, it is possible to explicitly compute monodromies arising from analytic continuation of WKB solutions. Monodromy matrices of $q$-difference equations in general are not constant matrices: they can have nontrivial $x$-dependence, but must be q-periodic. In our present setting, by construction the monodromy matrix will be a Laurent polynomial in the variable $x^{\frac{2\pi i}{\hbar}}$:
\begin{equation}\label{eq:GenericMonodromy}
    M(x)=\sum_k x^{\frac{2\pi ik}{\hbar}} M_k, \qquad M(qx)=M(x).
\end{equation}
For certain applications, for example if one wants to define an $\CX$-cluster variety structure on the space of monodromies, along the lines of what happens in differential equations \cite{FockGoncharovHigher,hikami2019note}, it can become important to extract $x$-independent information out of the monodromy. This is necessary if one wants to introduce an analogue of the infinite-dimensional Riemann-Hilbert Problem of \cite{Gaiotto:2008cd,Bridgeland:2016nqw} for monodromies of the $q$DE.

We do this by working $\mathbb{Z}$-equivariantly with respect to the logarithmic cover $\tSigma\rightarrow\Sigma$, which translates to keeping only the term of order zero in $x^{\frac{2\pi i}{\hbar}}$ when taking traces of \eqref{eq:GenericMonodromy}. This can be thought of as follows: one can think of the monodromy \eqref{eq:GenericMonodromy} in terms of the infinite-dimensional basis of WKB solutions over $\mathbb{C}$. Then it will act as an operator, that can be represented through the infinte matrix
\begin{equation}\label{eq:GenericInfinite}
    \mathbb{M}_{m,n}:=\sum_k M_k\delta_{m,n+k}.
\end{equation}
Taking the term of order zero in $x^{\frac{2\pi i}{\hbar}}$ in the trace of \eqref{eq:GenericMonodromy} corresponds then to a regularised trace of the indinite-dimensional matrix \eqref{eq:GenericInfinite}. We will now consider several examples, keeping in mind both these perspectives. The distinction between the two notions of monodromies becomes relevant when studying analytic continuation around logarithmic singularities, see Section \ref{sec:local-F0}.

\section{Examples}\label{sec:Examples}

\subsection{The $q$-Airy equation}\label{sec:q-Airy}

Consider the following $q$-difference equation 
\begin{equation}\label{eq:q-Airy}
	\psi(q^2x)-2x\psi(qx)+\psi(x)=0\,,
\end{equation}
A q-series solution to \eqref{eq:q-Airy} is 
\be
	\psi(x) 
	= e^{-\frac{i\pi}{2}\frac{\log x}{\log q}}\sum_{n\geq 0} \frac{q^{\frac{1}{2} n(n-1)}(-2ix)^n}{(q;q)_n(-q;q)_n}=e^{-\frac{i\pi}{2}\frac{\log x}{\log q}}\mathrm{Ai}_q(-2ix),
\ee
closely related to the definition of a $q$-Airy function $\mathrm{Ai}_q$ from \cite{Ismail2005,Morita2011}. This is not the type of solution that one encounters in the WKB analysis, as we will instead have asymptotic series in $\hbar=\log q$, that we choose to be normalised at the branch points $\Sigma$.

The WKB curve is 
\be	\label{eq:q-Airy-classical}
	y^2-2 xy+1=0 \,.
\ee
The curve has turning points at $x=\pm1$, a regular puncture at the origin $x=0$, and a logarithmic singularity at $x=\infty$. 
The two sheets are $y_\pm = x\pm\sqrt{x^2-1}$ and as $x\to\infty$, $y_+$ blows up while $y_-$ goes to zero. The logarithmic cut shown in dashed red in Figure \ref{fig:q-Airy-phase-I} connects the two logarithmic punctures at infinity.
At large $x$ we have the monodromy $\log y_\pm(x e^{2\pi i}) \approx \pm \log (x e^{2\pi i})= \log y_\pm(x) \pm  2\pi i $. This means that crossing the cut counterclockwise takes $\lambda_{\pm,n}\to\lambda_{\pm ,n\pm 1}$. In the notation of Section \ref{sec:log-cuts}, the puncture at infinity therefore has $k=-1$.
At $x=0$ the two sheets asymptote the constant values $y_\pm =  \pm i $. 

The local monodromy around the origin can be easily computed as follows: in \cite{Birkhoff1913} it is proven that for a $q$-difference equation $\Psi(qx)=A(x)\Psi(x)$, with
\begin{equation}
    A(x)=\sum_{k=0}^NA_kx^k,\qquad A_k\in GL(2,\mathbb{C}),\qquad\det A_0,\,\det A_N\ne 0,
\end{equation}
a solution normalised at the origin can be written as
\begin{equation}
    \Psi_0(x)= x^{\frac{\log\vec{\mu}}{\log q}}G(x),
\end{equation}
with $G(x)$ analytic at the origin, and $\vec{\mu}:=\diag(\mu_1,\,\mu_2)$, where $e^{\mu_1},\,e^{\mu_2}$ are the eigenvalues of $A_0$. In our present case, we have
\begin{equation}
	A(x    )=\left(\begin{array}{cc}
		0 & 1 \\ 
		-1 & 2qx
	\end{array} \right)\qquad A_0=\left(\begin{array}{cc}
	0 & 1 \\
	-1 & 0	
	\end{array} \right),\qquad \log\vec\mu=\frac{i\pi}{2}\sigma_3.
\end{equation}
An arbitrarily normalised solution of \eqref{eq:q-Airy}, will behave as follows around the origin:
\begin{equation}
    \Psi(x)=\Psi_0(x)C(x),\qquad C(x)\in SL(2,\CM_q), 
\end{equation}
The monodromy around the origin will then be given by
\begin{equation}\label{eq:MqAiry}
    M_0(x)=\Psi(x)^{-1}\Psi(e^{2\pi i}x)=C(x)^{-1}e^{-\frac{\pi^2}{\hbar}\sigma_3}C(e^{2\pi i}x).
\end{equation}

We can already compare with the periods of the leading order WKB differential \eqref{eq:psipmnleading}:
\be\label{eq:q-Airy-residue}
 \frac{1}{\hbar} \oint_{x=0} (\log y_\pm + 2\pi i n) \, d\log x = \mp  \frac{\pi^2  }{\hbar} - \frac{4\pi^2}{\hbar} n \,.
\ee
The first term is simply the monodromy exponent from Birkhoff's analysis. The second term comes from the monodromy of the connection matrix $C(x)$. Let us make this observation more precise. First let us note that both terms in \eqref{eq:q-Airy-residue} receive no $\hbar$-corrections: the first because it is the local monodromy exponent coming from the q-connection, the second because it is coming from periodicity of $C(x)$, which in turn originates from analytically continuing factors of $x^{\frac{2\pi i}{\hbar}}$.

With our current formalism, we can do more: we will compute the whole monodromy matrix  for solutions normalised at the branch points.

\begin{figure}[h!]
\begin{center}
\includegraphics[width=0.6\textwidth]{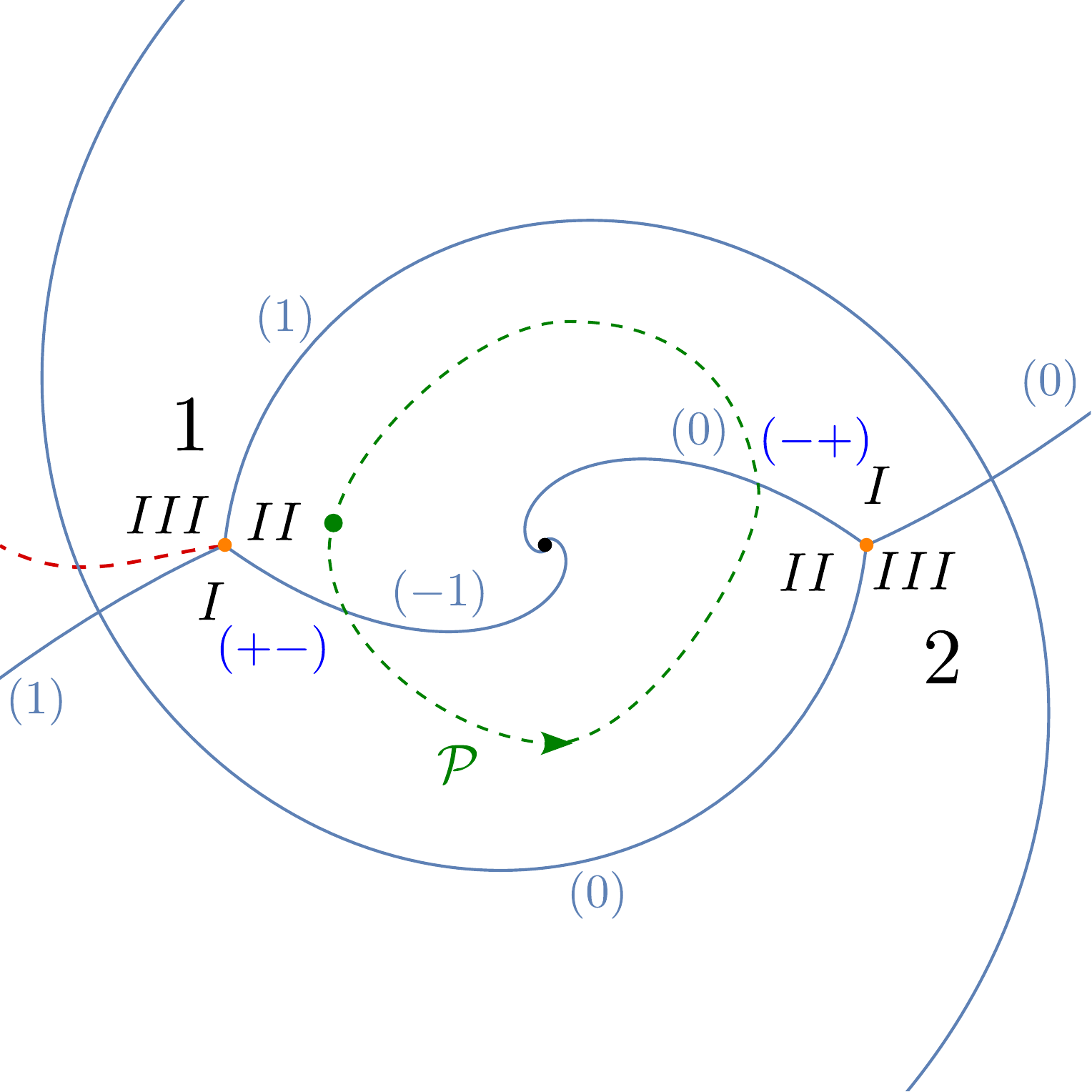}
\caption{
The Stokes graph of \eqref{eq:q-Airy} for $\vartheta =\arg\hbar= \frac{\pi}{5}$. Only primary Stokes lines (those that emanate directly from branch points) are shown. The label $(\ell)$ near each trajectory denotes the type of Stokes matrix $S^{(\ell)}$ associated to each Stokes line, as discussed in Section \ref{sec:stokes-data}. 
}
\label{fig:q-Airy-phase-I}
\end{center}
\end{figure}

\subsubsection{Monodromy}

To compute the monodromy around the origin, we start with a convenient choice of phase $\vartheta:=\arg\hbar$. Later we will discuss what happens for arbitrary phases.
The Stokes graph and the corresponding system of charts are given in Figure \ref{fig:q-Airy-phase-I}.
We label branch points by $a\in\{1,2\}$.
Following Sections \ref{Sec:AiryStokes} and \ref{sec:log-cuts}, we fix bases of solutions normalized at the same branch point and encoded by a matrix $\Psi_a^{K}$, with $K\in\{I,II,III\}$ labelling the three Stokes regions  in counterclockwise order.
The three bases of solutions around each branch point are related as discussed in Section \ref{sec:stokes-data}.\footnote{Recall that the left column of $\Psi_{a}^{K}$ corresponds to the solution that decays exponentially fast along the Stokes line separating regions $(K,K-1)$, and the right column corresponds to the solution that decays along the Stokes line separating regions $(K,K+1)$ (cyclic ordering is understood).}
Near $x=1$ we have 
\begin{align}
	\Psi_2^{K+I}=\Psi_{2}^{K}S^{(0)}\qquad (K\sim K+{III})\,,
\end{align}
while near $x=-1$ the three solutions are related as follows 
\be
    \Psi_1^{I} = \Psi_1^{III,\downarrow}\, S^{(1)}\,,
    \qquad
	\Psi_1^{II} = \Psi_1^{I}\, S^{(-1)}\,,
	\qquad
	\Psi_1^{III,\uparrow} = \Psi_1^{II}\, S^{(1)}\,,
	\qquad
	\Psi_1^{III,\downarrow} = \Psi_1^{III,\uparrow}\,\xi^{-\sigma_3}\,.
\ee
where $S^{(\ell)}$ are given by \eqref{eq:StokesLog}.
Region $I$ near $x=1$ and region $II$ near $x=-1$ coincide, as they are not separated by any Stokes line, and likewise for region $II$ near $x=1$ with region $I$ near $x=-1$.

Since the two branch points have opposite signature, the transport matrices between regions $(1,I)\to (2,II)$ and $(2,I)\to (1,II)$, namely. 
\be
	\Psi_2^{II} = \Psi_1^I\cdot \Uij_{\text{bottom}}\,,\qquad
	\Psi_1^{II} = \Psi_2^I\cdot \Uij_{\text{top}}\,,
\ee 
are off-diagonal. We will choose the basis of solutions in such a way that
\begin{equation}\label{eq:BasisqAiry}
    \Psi_1^{II}=\left(\begin{array}{cc}
        \varphi_{+,0}(x) & \varphi_{-,0}(x) \\
        \varphi_{+,0}(qx) & \varphi_{-,0}(qx)
    \end{array} \right),
\end{equation}
Due to the $\IZ_2$ symmetry $x\to -x, y\to -y$ the transports are actually identical
\be
	\Uij_{\text{top}} =\Uij_{\text{bottom}}
	= \left(\begin{array}{cc}
	0 & i \CU \\
	i \CU^{-1} & 0  \\
	\end{array}\right)\,,
\ee
depending on the logarithmic index $n$ in our WKB basis \eqref{eq:BasisqAiry}. 
The monodromy around the loop shown in Figure \ref{fig:q-Airy-phase-I} can be computed by matrix multiplication of transport and Stokes matrices

\begin{equation}
    M_\CP(x)=\left[S^{(-1)}(\xi)\right]^{-1} \Uij_{\text{bottom}} \left[S^{(0)}\right]^{-1} \Uij_{\text{top}}=\left(
\begin{array}{cc}
 -\frac{1}{\CU^2} & \qquad 0 \\
 i \left(\frac{1}{\xi  \CU^2}+1\right) & \qquad-\CU^2 \\
\end{array}
\right),
\end{equation}
and
\begin{shaded}
\be\label{eq:TrAiry}
\begin{split}
	{\rm tr}\,M_\CP
 = -\frac{1}{\CU^2} - \CU^2.
\end{split}
\ee
\end{shaded}
Comparing \eqref{eq:TrAiry} with \eqref{eq:MqAiry} and \eqref{eq:q-Airy-residue}, we obtain a direct map between our parameter $\CX$ and the only parameter $q$ entering in the q-Airy equation \eqref{eq:q-Airy}:
\begin{equation}
    \CU^2=-e^{-\frac{\pi^2}{\hbar}}.
\end{equation}

\subsubsection{Comparison with String Theory and D-brane charges}
As shown in Table \ref{tab:recap}, the q-Airy equation is the quantum mirror curve to the geometry $\mathbb{C}\times\left(\mathbb{C}^2/\mathbb{Z}_2 \right)$, whose only BPS state is the D0-brane (skyscraper sheaf), associated to the saddle at $\vartheta=0$. In the context of geometric engineering, the q-Airy equation \eqref{eq:q-Airy} represents the moduli space of a 3d defect theory on a toric brane probing an orbifold singularity $\IC\times (\IC^2/\IZ_2)$ resolved by $B$-field flux.

In type IIA string theory, the quantum period of a $D0$ brane charge can be written directly in terms of the central charge of a D0-brane and $\hbar$ (see e.g. \cite{DML2022}),
\begin{equation}
    X_{D0}=e^{-\frac{4\pi^2}{\hbar}}\,.
\end{equation}
Taking this into account, we can identify the Voros symbol $\CU$ with
\be
	\CU^2 =e^{-\frac{\pi^2}{\hbar}}= X_{\frac{1}{4} D0}\,.
\ee
The trace of the monodromy, which is the holonomy of a line operator in the geometrically engineered (trivial) 5d theory, is then
\be
	\tr M_\CP=X_{\overline{D0}}^{\frac{1}{4}} + X_{D0}^{\frac{1}{4}}   \,.
\ee
This is the character of an $SU(2)$ flavor symmetry associated with this geometry \cite{Dimofte:2010tz, Banerjee:2019apt}.

\subsubsection{Stokes graph and monodromy at other phases}

It follows from results of \cite{Banerjee:2019apt} 
that there is only one critical phase for the Stokes graph, namely $\arg \hbar=0$ in \eqref{eq:Stokes-line-Im-zero}.
The degenerate Stokes graph is shown in Figure \ref{fig:q-airy-saddles}.
However there are additional distinguished values of $\vartheta := \arg \hbar$, that are not saddles of the logarithmic differential, when it comes to defining a system of patches on $\IC^*_x$ in the context of exact WKB analysis.
In this example, there is one special value of $\vartheta = \vartheta_* \neq 0$ for which a primary Stokes line connects the two branch points directly.
This is not a saddle, because the second branch point is not the endpoint of a critical trajectory, but the Stokes line is only ``passing through''. 
 However, at $\vartheta_*$ the system of charts changes significantly.
In fact, for $\vartheta_*<\theta<\pi-\vartheta_*$ the system of charts defined by the Stokes graph features a pair of patches that connects the branch points at $x=\pm1$ while for other phases $0\leq \vartheta<\vartheta_*$ and $\pi-\vartheta_*<\vartheta\leq \pi$ this is not the case anymore, see Figure \ref{fig:q-Airy-phase-II}. The monodromy computation for all phases $\vartheta_*<\theta<\pi-\vartheta_*$ is identical to the one we just performed, as the topology of the Stokes diagram does not change.

\begin{figure}[h!]
\begin{center}
\includegraphics[width=0.5\textwidth]{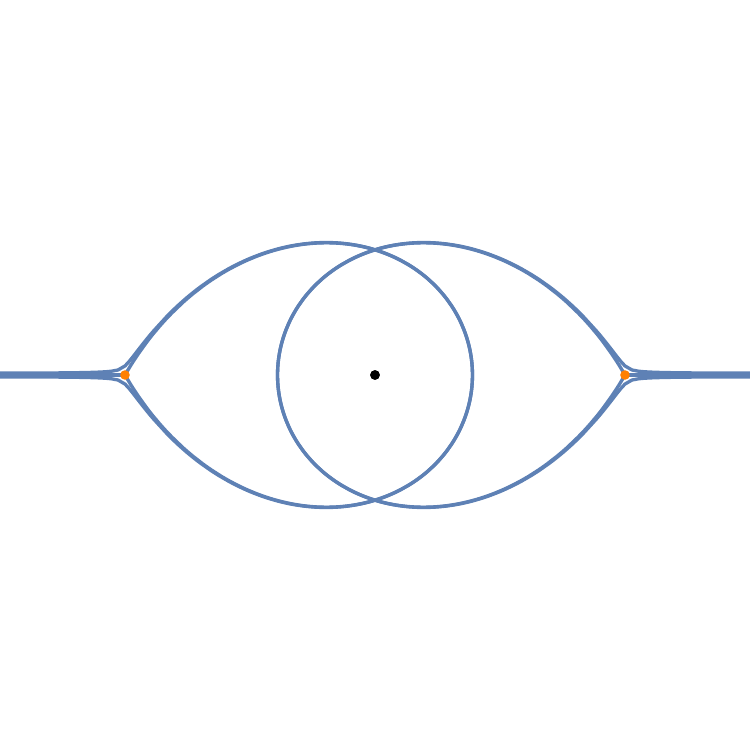}
\caption{Saddles formed by primary trajectories of the Stokes graph at $\vartheta=0$.}
\label{fig:q-airy-saddles}
\end{center}
\end{figure}

\subsubsection*{Phases $0\leq \vartheta<\vartheta_*$ and $\pi-\vartheta_*<\vartheta\leq \pi$}

In this case, there is no Stokes sector connecting the two branch points. Instead, there are infinitely many different configurations possible in this range, with a varying number $k$ of Stokes regions separating the two branch points. An example is shown in Figure \ref{fig:q-Airy-phase-II}. This is an example of Case 3 in Section \ref{sec:qFG}.

\begin{figure}[h!]
\begin{center}
\includegraphics[width=0.45\textwidth]{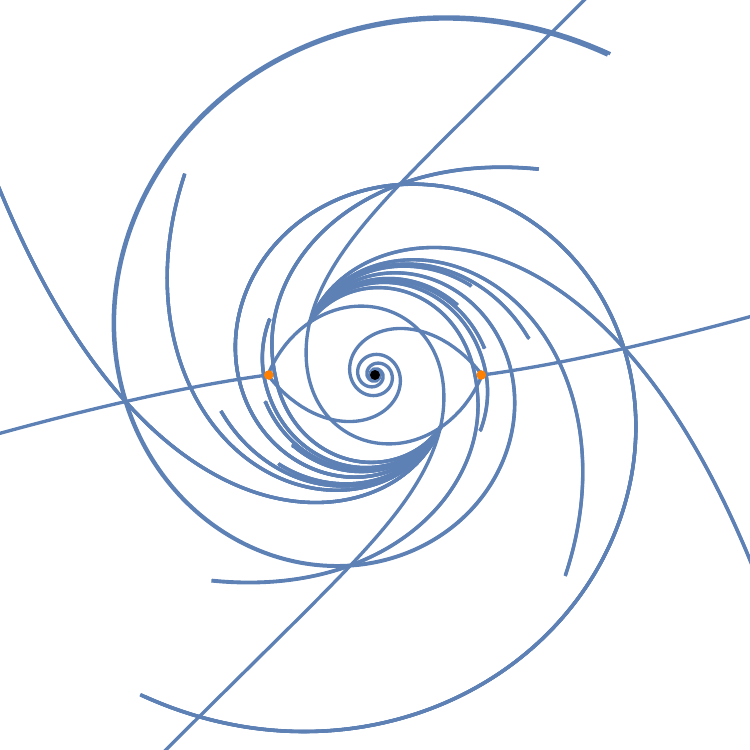}
\caption{The Stokes graph of the orbifold curve of $\CO(-2)\oplus\CO(0)$ for $0< \vartheta<\vartheta_*$. 
}
\label{fig:q-Airy-phase-II}
\end{center}
\end{figure}

To define variables $\CU_I, \CU_{II}$ encoding the change of normalisation between solutions at the two different branch points, we start by defining local solutions in the regions neighbouring with the two branch points as before. The analytic continuation of solutions will be decomposed into a product of Stokes jumps $S^{(i,\uparrow)}$ or $S^{(j,\downarrow)}$, $i=1,\dots,k_\uparrow$, $j=1,\dots,k_\downarrow$ along the internal Stokes lines in the regions above and below the puncture. However, by definition of our WKB bases near the two branch points, it is still true that the transition matrices between the initial and final region must be
	\begin{align}
	\Uij_{\text{top}}=\prod_{i=1}^{k_\text{top}}S^{(i,\text{top})}=\Psi_{1}^{II}(\Psi_{-1}^{I})^{-1}:=\left(\begin{array}{cc}
	0 & i \CU_n \\
	i \CU_n^{-1} & 0  \\
	\end{array}\right), \\ 
	\Uij_{\text{bottom}}=\prod_{i=1}^{k_\text{bottom}}S^{(i,\text{bottom})}=\Psi_{1}^{I}(\Psi_{-1}^{II})^{-1}:=\left(\begin{array}{cc}
	0 & i \CU_n \\
	i \CU_n^{-1} & 0  \\
	\end{array}\right).
\end{align} 
respectively above and below the puncture. This is necessary to match solutions decaying at zero an at infinity on the two sides. The monodromy is computed in the same way as before, leading to the same answer. This was originally expected, since the Stokes diagram has no jumps, so there is no wall-crossing in terms of our monodromy coordinates.

\subsubsection{Ramanujan function and the quantum mirror curve of $\mathbb{C}^3$}\label{sec:q-Ramanujan}

In the literature, one can find another q-discretisation of the Airy function, called the Ramanujan function \cite{Ismail2005,Morita2011}
\begin{equation}\label{eq:qRamf} 
\A_q(x):=\sum_{n=0}^{\infty}\frac{q^{n^2}}{(q;q)_n}(-x)^n\,.
\end{equation}
It solves the following equation
\begin{equation}\label{eq:qRameq}
	qx\psi(q^2x)-\psi(qx)+\psi(x)=0,
\end{equation}
whose WKB curve is the mirror curve of $\IC^3$ in quadratic framing \cite[eq. (6.2)]{Ekholm:2019lmb}
\be
	1 - y + x y^2 =0\,.
\ee

From the WKB/geometric point of view, these are naturally related. We can recast \eqref{eq:qRameq} in involutive form by rescaling
$\psi(x)\mapsto f(x)\psi(x)$ by a function with the property $f(qx)=x^{-\frac{1}{2}}f(x)$, such as
\begin{align}
	f(x)=e^{-\frac{\log x}{4\log q}\left(\frac{\log x}{\log q}-1 \right)}\,.
\end{align}
This leads to 
\begin{equation}
	\psi(qx)+\psi(q^{-1}x)=2x^{-\frac{1}{2}}\psi(x),
\end{equation}
i.e.
\begin{align}
T(x)=\frac{1}{\sqrt{x}}, && \lambda_n=\left[\log\left(\frac{1+\sqrt{1-x}}{\sqrt{x}} \right)+2\pi in \right]\frac{\dd x}{x}.
\end{align}
This is the same as the previous equation after redefining $x\mapsto x^{-2}$, illustrating that the change to the involutive form of the $q$DE can involve a covering map. Apart from this, the construction is entirely analogous to the case of the q-Airy equation.

\subsection{A deformation of the q-Airy equation}\label{sec:half-geometry}

We next consider a deformation of the $q$-Airy equation studied in Section \ref{sec:q-Airy} 
\begin{equation}\label{eq:q-Bessel}
\psi(qx)+\psi(q^{-1}x)=2(\kappa+x)\psi(x),\quad \Psi(qx)=A(x)\Psi(x),\quad A(x)=\left( \begin{array}{cc}
    0 & 1 \\
    -1 & 2(\kappa+x)
\end{array} \right) .
\end{equation}
We will call this the q-Airy$_\kappa$ equation. This equation adds generic monodromy around the origin to the q-Airy equation: as before, we can write $A(x)=A_0+xA_1$, and now
\begin{equation}
    \mathrm{Spec}\,A_0=\left\{\kappa+\sqrt{\kappa^2-1},\,\kappa-\sqrt{\kappa^2-1} \right\}:= \left\{e^\mu,\,e^{-\mu} \right\}.
\end{equation}
The monodromy around the origin of an arbitrarily normalised solution will then be of the form 
\begin{equation}\label{eq:AiryKappaM}
    M_0(x)=C(x)e^{\frac{2\pi i}{\hbar}\mu\,\sigma_3}C(e^{2\pi i}x)^{-1}.
\end{equation}
The WKB curve is
\begin{equation}\label{eq:curve-half-geometry}
	y+y^{-1}=2(\kappa+x).
\end{equation}
There is a turning point at $x=-\kappa\pm1$, a regular puncture at the origin, and a logarithmic singularity at $x=\infty$. 
The period of the WKB differential \eqref{eq:psipmnleading} around the origin is again purely semiclassical, and given by 
\be\label{eq:half-geom-residue}
	\frac{1}{\hbar} \oint_{x=0} (\log y_\pm + 2\pi i n) \, d\log x = \pm 2\pi i \frac{\mu }{\hbar} - \frac{4\pi^2}{\hbar} n \,.
\ee

Similarly to the case of q-Airy, the WKB Stokes diagram has a saddle at $\vartheta=0$, and two different behaviour for $\vartheta_*<\vartheta<\pi-\vartheta_*$ and $0\le\vartheta<\vartheta_*$ or $\pi-\vartheta_*<\vartheta\le\pi$, see Figure \eqref{fig:qBesselStokes}. Since the treatment is very similar to what we did in the q-Airy case, we will only do the computation of the monodromy for $\vartheta_*<\vartheta<\pi-\vartheta_*$, where the Stokes diagram takes the form of Figure~\ref{fig:qBesselPhase1}. 

\begin{figure}[ht!]
    \centering
    \begin{subfigure}[b]{0.45\textwidth}
        \includegraphics[width=\textwidth]{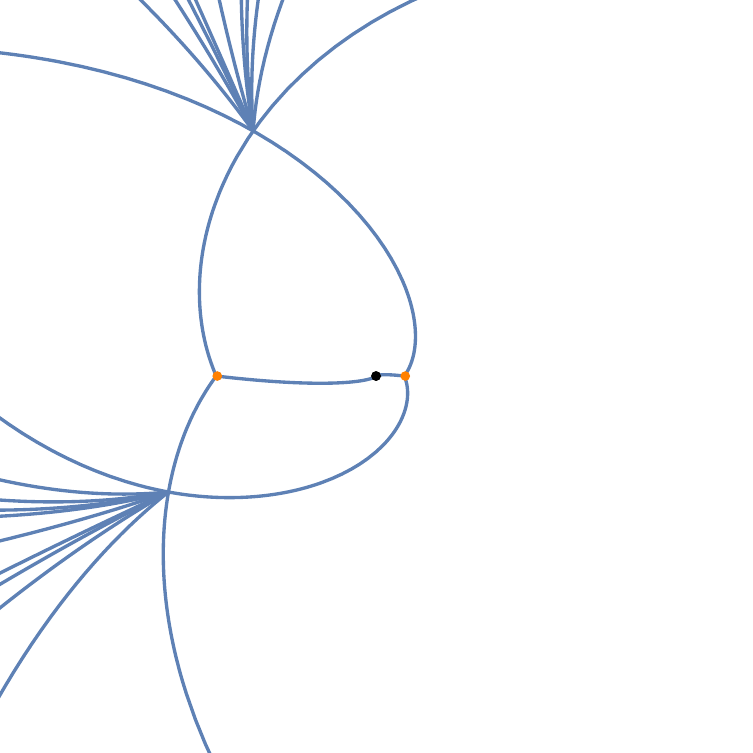}
        \caption{Stokes diagram for the q-Airy$_\kappa$ equation at $\vartheta=4\pi/9>\vartheta_*$.}
        \label{fig:qBesselPhase1}
    \end{subfigure}
    \hfill 
    \begin{subfigure}[b]{0.45\textwidth}
        \includegraphics[width=\textwidth]{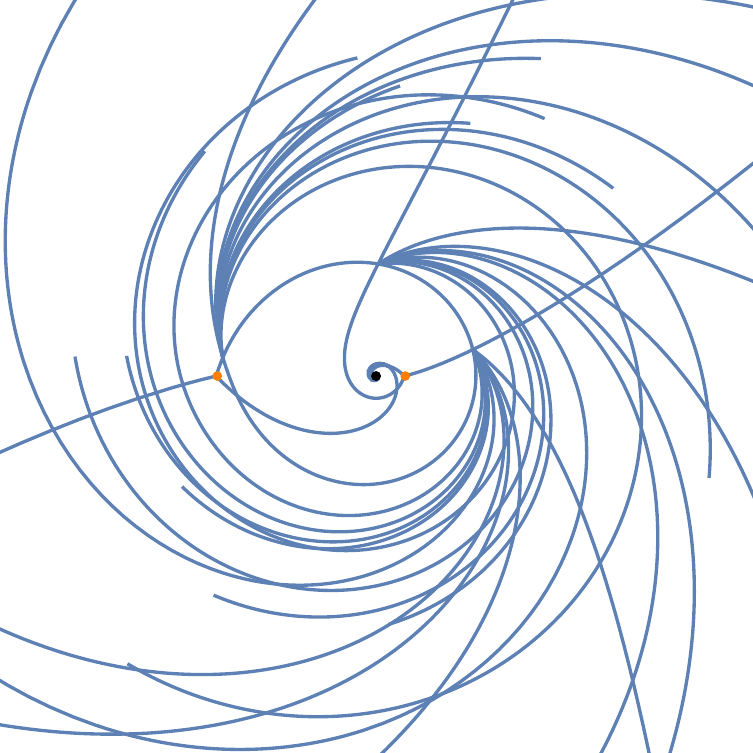}
        \caption{Stokes diagram for the q-Airy$_\kappa$ equation at $\vartheta=\pi/9<\vartheta_*$.}
        \label{fig:qBesselPhase2}
    \end{subfigure}
    \caption{Two types of Stokes diagrams for the q-Bessel equation.}
    \label{fig:qBesselStokes}
\end{figure}

\begin{figure}[h!]
\begin{center}
\includegraphics[width=0.75\textwidth]{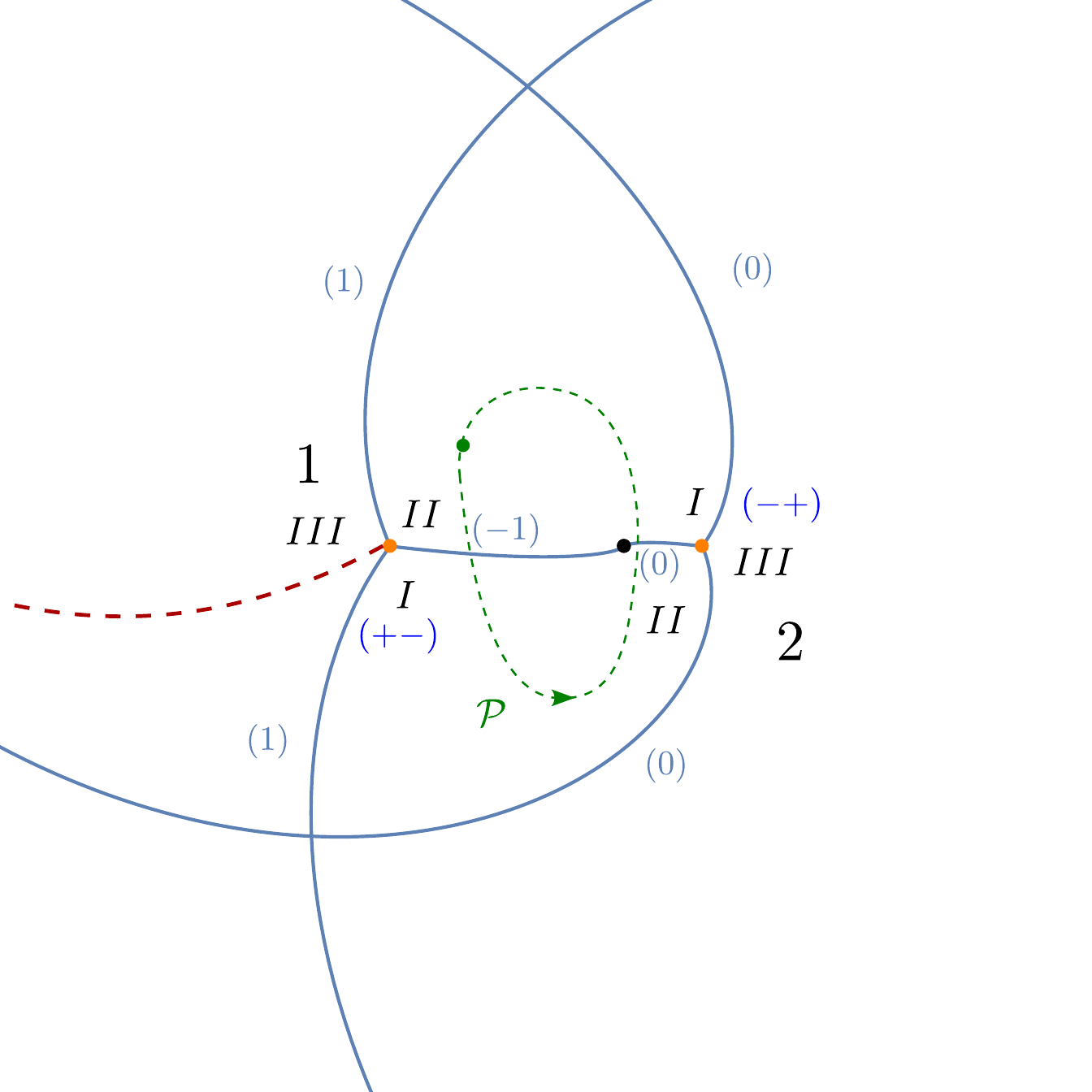}
\caption{The Stokes graph of \eqref{eq:q-Bessel} for $\vartheta = \frac{4}{9}\pi,\,\kappa=\frac{1}{2}$. Only primary Stokes lines (those that emanate directly from branch points) are shown. The label $(\ell)$ near each trajectory denotes the type of Stokes matrix $S^{(\ell)}$ associated to each Stokes line, as discussed in Section \ref{sec:stokes-data}. This can be obtained from a limit of the Stokes graph in Figure~\ref{fig:local-F0-stokes-primary} in a limit where both inner branch points collapse into the logarithmic puncture at the origin.}
\label{fig:q-Bessel-marked}
\end{center}
\end{figure}

\subsubsection{Monodromy}
We can compute the monodromy for a solution of the form \eqref{eq:BasisqAiry} around a loop surrounding the origin, as shown in Figure~\ref{fig:q-Bessel-marked}. 
Since the two branch points have opposite signature, the matrices between regions $(1,I)\to (2,II)$ and $(2,I)\to (1,II)$ are off-diagonal, and we parameterize them as follows 
\be
	\Uij_{\text{bottom}} 
	= \left(\begin{array}{cc}
	0 & i \CU_{1} \\
	i \CU_{1}^{-1} & 0  \\
	\end{array}\right)\,,
	\qquad
	\Uij_{\text{top}}
	= \left(\begin{array}{cc}
	0 & i \CU_{2} \\
	i \CU_{2}^{-1} & 0  \\
	\end{array}\right)\,.
\ee
We then compute the monodromy around the loop $\CP$ in Figure \ref{fig:q-Bessel-marked}:
\begin{equation}
    M_\CP=\left[S^{(-1)}\right]^{-1} \Uij_{\text{bottom}} \left[S^{(0)}\right]^{-1} \Uij_{\text{top}} =\left(
\begin{array}{cc}
 -\frac{1}{\CU_{1} \CU_{2}} & \quad 0 \\
 \frac{i \left(\xi  \CU_{1}^2+1\right)}{\xi  \CU_{1} \CU_{2}} & \quad -\CU_{1} \CU_{2} \\
\end{array}
\right),
\end{equation}
and the trace
\begin{shaded}
\be\label{eq:half-geometry-monodromy}
\begin{split}
	{\rm tr}\,M_{{\mathcal{P}}}  
	=- \frac{1}{\CU_{1} \CU_{2}} - \CU_{1} \CU_{2}
	\mathop{=}^{\eqref{eq:AiryKappaM}}	2\cos\left(\frac{2\pi i\mu}{\hbar} \right),
\end{split}
\ee
\end{shaded}
and we have the natural identification
\begin{equation}\label{eq:XtoMu}
\CU_{1}\CU_{2}=e^{\frac{2\pi i\mu}{\hbar}}
\end{equation}

\subsubsection{Comparison with String Theory, D-brane charges and degenerations}
Equation \eqref{eq:q-Bessel} is the quantum mirror curve to the resolution of the orbifold $\mathbb{C}\times\mathbb{C}^2/\mathbb{Z}_2$.
Recall from definition \ref{def:physical} that physical charges are represented by equivalence classes of anti-invariant cycles. A cycle $\gamma_*$ that winds counterclockwise around the puncture on sheet $(+,n_1)$ and clockwise around the puncture on sheet $(-,n_2)$ has therefore (quantum) period
\be\label{eq:half-geom-Z}
	\frac{2\pi }{\hbar} Z_{\gamma_*} =  -(V^{(+,n_1)}_{\gamma_*} - V^{(-,n_2)}_{\gamma_*}) = \frac{2\pi}{\hbar}\left(  -2i \mu +2\pi  \, (n_1-n_2)\right)\,.
\ee

In the context of geometric engineering, the WKB curve \eqref{eq:curve-half-geometry}
represents the moduli space of a 3d defect theory engineered by a toric brane \cite{Dimofte:2010tz}.
The 3d theory has an $SU(2)$ flavor symmetry, and $e^\mu$ is the associated fugacity. In type IIA string theory, $-2i\mu = Z_{D2}$ is identified with the central charge of the $D2$ brane supported on $\IP^1$, see e.g. \cite{Banerjee:2019apt}:
\be
	X_{D2} = e^{-\frac{2\pi }{\hbar} Z_{D2}}  = e^{\frac{2\pi i }{\hbar} \,{2\mu}}\,.
\ee
Given \eqref{eq:XtoMu}, and the physical identification of $\CU_{D2}$ with the $SU(2)$ flavor fugacity, we recover the character of $SU(2)$
\be\label{eq:trace-half-geom-D2}
	\frac{1}{\CU_1 \CU_2} + \CU_1 \CU_2 = 2\cos \left(\frac{2\pi \mu}{\hbar}\right) = X_{\frac{1}{2}\overline{D2}} + X_{\frac{1}{2}D2}   \,.
\ee
There are interesting limits of this geometry:
\begin{itemize}
\item Taking $\kappa\to 0$ reduces \eqref{eq:q-Bessel} to the $q$-Airy equation. The topology of the curve remains unchanged in this limit. In this case $\mu = \frac{\pi i}{2}$ and $Z_{D2} = \pi= \frac{1}{2}Z_{D0}$ as expected by orbifold symmetry.
\item Taking $\kappa\to 1$ takes one of the branch points to $x=0$, thereby changing the topology of the curve, which features a regular puncture of type \ref{item:square-root-puncture}. 
Since $\lim_{x\to 0}y_\pm\to 1$ in this case $\mu=0$, and the period \eqref{eq:half-geom-Z} vanishes with $n_1=n_2$.
\item Taking $\kappa\to -1$ takes the other branch point to $x=0$. Again the topology of the curve degenerates and there is a regular puncture of type \ref{item:square-root-puncture}. 
Since $\lim_{x\to 0}y_\pm\to -1$, the residues at punctures are given by $\mu=\pm \pi i$.
The period \eqref{eq:half-geom-Z} now vanishes with $n_2=n_1+1$.\footnote{Shifting the branch of $\log y_-$ to $\log y_-+2\pi i$ is consistent with having both sheets within the same (principal) branch of $\log y$, namely $\IR \times (-\pi i,\pi i]$.}
\end{itemize}

\subsection{A $q$-hypergeometric equation}
Consider the $q$-difference equation
\begin{equation}\label{eq:qDEConifold}
    \phi(qx)+\frac{x}{Q}\phi(q^{-1}x)=(1+q^{-1}x)\phi(x),
\end{equation}
closely related to the basic q-hypergeometric equation of type (1,1) (see for example Section 5.2 of \cite{Panfil:2018faz}). The linear system is
\begin{equation}
    \left(\begin{array}{c}
        \phi(x) \\
        \phi(qx)
    \end{array} \right)=A(x)\left( \begin{array}{c}
        \phi(q^{-1}x) \\
        \phi(x)
    \end{array} \right), \qquad A(x)=\left( \begin{array}{cc}
        0 & 1 \\
        -x/Q & 1+q^{-1}x
    \end{array} \right).
\end{equation}
Setting
\begin{equation}
    \phi(q^{-1}x)=g(x)\psi(x),\qquad g(qx)=\sqrt{\frac{x}{Q}}g(x),\quad\implies \quad g(x)=x^{\frac{1}{2}\frac{\log Q}{\log q}}e^{\frac{1}{4}\log x\left(\frac{\log x}{\log q}-1 \right)},
\end{equation}
we can recast this in the form
\be\label{eq:conifold-curve-q}
	\psi(qx)+\psi(q^{-1}x)
	=
	2Q^{-1}\left(q^{-1}\,x^{\frac{1}{2}}+x^{-\frac{1}{2}} \right)\psi(x)\,.
\ee
The corresponding WKB curve is
\be\label{eq:conifold-curve-class}
	y+y^{-1} = 2Q^{-1}\left(x^{\frac{1}{2}}+x^{-\frac{1}{2}} \right)\,.
\ee

Since $T(x)\in \mathbb{C}[[x^{\frac{1}{2}}]]$, it will be convenient to switch to a double covering of the $x$-plane by introducing a coordinate $w^2=x$. It then coincides with a special case $\kappa=0$ of the WKB curve for the q-modified Mathieu equation \eqref{eq:P1P1Mirror}, to be discussed in the next section.

The curve \eqref{eq:conifold-curve-class} has a manifest $\IZ_2$ symmetry (in addition to the symmetry $y\to y^{-1}$ that is common to all curves that we consider):
\be
	x\to x^{-1}\,.
\ee 
There are two square root branch points, located at
\be
	x_\pm=\frac{1-2 Q^2\pm\sqrt{1-4 Q^2}}{2 Q^2},
\ee
as well as logarithmic singularities at $x=0$ and at $x=\infty$.
On the $w$-plane the number of branch points doubles up, and the symmetry is now $\IZ_2\times \IZ_2$ generated by\footnote{The change of sign requires a corresponding $y\to -y$, which is left understood.}
\be
	w\to w^{-1}\,,\qquad w\to -w\,.
\ee 
The Stokes graph is shown in Figure \ref{fig:conifold-v2}, where branch points are labeled by $a=1,2,3,4$ and the regions near each branch point by $I,II,III$.\footnote{We choose to break the rotational $\IZ_2$ symmetry by a choice of trivialization that involves logarithmic cuts only along the negative real axis. However, since quantum periods symbols are independent of the choice of trivialization, they will respect the full $\IZ_2\times \IZ_2$ symmetry.}

\begin{figure}[h!]
\begin{center}
\includegraphics[width=0.75\textwidth]{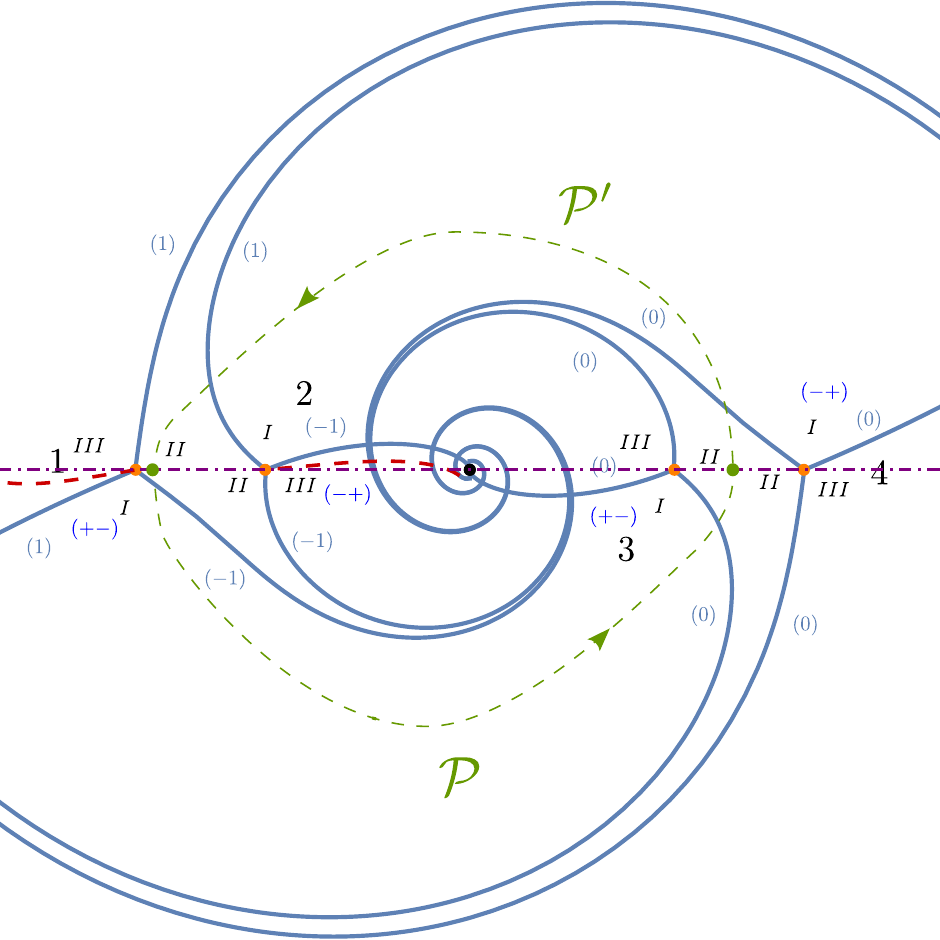}
\caption{The Stokes graph of \eqref{eq:conifold-curve-q} for $\vartheta = 0.6,\,\tau =200/97$, shown on the double-covering of the $x$-plane. Only primary Stokes lines (those that emanate directly from branch points) are shown. The label $(\ell)$ near each trajectory denotes the type of Stokes matrix $S^{(\ell)}$ associated to each Stokes line, as discussed in Section \ref{sec:stokes-data}. This can be obtained from a limit of the Stokes graph in Figure \ref{fig:local-F0-stokes-primary} by setting $\kappa=0$ and $\tau =0.97$.}
\label{fig:conifold-v2}
\end{center}
\end{figure}

The normalised solutions near each branch point are related by Stokes matrices as follows:
\be
\begin{split}
	&
	\Psi_{1}^{II} = \Psi_{1}^{I}\, S^{(-1)}(\xi_1)\,,
	\qquad
	\Psi_{1}^{III,\uparrow} = \Psi_{1}^{II}\, S^{(1)}(\xi_1)\,,
	\qquad
	\Psi_{1}^{I} = \Psi_{1}^{III,\downarrow}\, S^{(1)}(\xi_1)
	\\
	&
	\Psi_{2}^{II} = \Psi_{2}^{I}\, S^{(1)}(\xi_2)\,,
	\qquad
	\Psi_{2}^{III,\downarrow} = \Psi_{2}^{II}\, S^{(-1)}(\xi_2)\,,
	\qquad
	\Psi_{2}^{I} = \Psi_{2}^{III,\uparrow}\, S^{(-1)}(\xi_2)
	\\
\end{split}
\ee
\begin{equation}
    \Psi_3^{\star+I}=\Psi_3^{\star}S^{(0)},\qquad \Psi_4^{\star+I}=\Psi_4^{\star}S^{(0)},\qquad \star=I,II,III\mod{3} .
\end{equation}
Recall from \eqref{eq:log-sqrt-monodromy-constraint} that the Stokes matrices $S^{(\mp1)}$ for the Stokes line emanating from branch points $1$ and $2$ 
depend on the parameters
\begin{equation}\label{eq:xi-k-conifold}
    \xi_k=\left(\frac{x}{x_k} \right)^{\frac{2\pi i}{\hbar}}
\end{equation} 
Globally, nearby regions of pairs of distinct branch points are identified as follows
\be
	R_{1,II} = R_{2,II}
	\qquad
	R_{1,I} = R_{3,I}
	\qquad
	R_{2,I} = R_{4,I}
	\qquad
	R_{3,II} = R_{4,II}\,.
\ee
The Borel sum of formal WKB series within each region can be analytically continued from one branch point to the other without discontinuities or logarithmic shifts. 
Recall from Section \ref{sec:qFG} that the change of normalizations correspond to parallel transports is encoded by a transition matrix which is either anti-diagonal or diagonal, depending on whether or not there is an exchange of the dominant and subdominant solutions.
We thus have
\begin{equation}
    \Psi_1^{II}=\Psi_2^{II} \Uij_{21},\qquad \Psi_1^{I}=\Psi_3^{I} \Uij_{31},\qquad \Psi_4^{I}=\Psi_2^{I}X_{24},\qquad\Psi_3^{II}=\Psi_4^{II}\Uij_{43},
\end{equation}
with
\be\begin{split}
	\Uij_{21} =\left(\begin{array}{cc}
		0 & i \CX_A \\
		i \CX_A^{-1} & 0
	\end{array} \right)
	=\Uij_{43}
	\,,
	\qquad
	\Uij_{31} =\left(\begin{array}{cc}
		\CX_{B} &  0 \\
		0 & \CX_{B}^{-1}
	\end{array} \right)
	=\Uij_{42}\,,
\end{split}
\ee
where the orbifold symmetry $w\to -w$ has been taken into account. 

\subsubsection{Monodromy and half-monodromy}

We compute the monodromy of WKB solutions \eqref{eq:psipmnleading} around the path obtained by concatenation of $\mathcal{P}$ and $\mathcal{P'}$ as shown in Figure \ref{fig:conifold-v2} 
\be\label{eq:conifold-monodromy-matrix}
\begin{split}
M_{{\mathcal{P}\mathcal{P'}}} 
	&= 
	[S^{(-1)}(\xi_1)]^{-1} \Uij_{31}^{-1} S^{(0)} \Uij_{43}^{-1} [S^{(0)}]^{-1} \Uij_{42} S^{(1)}(\xi_2) \Uij_{21}
	\\
	&=\
	\left(
\begin{array}{cc}
	{\CX_B}^2 
	& 
	-i{\CX_A}^2 \left(1-\xi_2 {\CX_B}^2\right) 
	\\
	i(1-\xi_1^{-1}{{\CX_B}^2})
	 & 
	 \quad
	\CX_B^{-2} (1 + \CX_A^2(1-\xi _2 \CX_B^{2})(1-\xi _1^{-1} \CX_B^{2}))
\end{array}
\right)
\end{split}
\ee 
Its trace is
\begin{equation}
	\tr M_{\CP\CP'}=
	\CX_B^2+\frac{1}{\CX_B^2}+\frac{\CX_A^2}{\CX_B^2}
	-\xi _2 \CX_A^2-\xi_1^{-1}{\CX_A^2}+{\xi_1^{-1}\xi _2 \CX_A^2 \CX_B^2}
\end{equation}
Equation \eqref{eq:conifold-monodromy-matrix} displays a novel feature of this example: 
the trace of the monodromy is not constant, but rather it is a $q$-periodic function. 
This is perfectly natural from the point of view of character varieties of $q$-difference equations \cite{Ohyama2020,Joshi2023,Ramis2023}, and in our present setting follows because the origin is a logarithmic, rather than a regular, singularity. 
Working on the field of $q$-periodic functions one would keep all terms, but 
as discussed in Section \ref{sec:qMonodromy} when working over $\mathbb{C}$ we should take a regularised trace of the corresponding infinite-dimensional monodromy matrix:
\begin{shaded}
\begin{equation}\label{eq:conifold-mondromy}
	\tr \mathbb{M}_{\CP\CP'}=
	\CX_B^2+\frac{1}{\CX_B^2}+\frac{\CX_A^2}{\CX_B^2}\,.
\end{equation}
\end{shaded}

Motivated by the physical interpretation of this Stokes graph as an exponential network, we also consider the half-monodromy of the WKB solutions \eqref{eq:psipmnleading}. This corresponds to winding around the puncture at the origin in $\IC^*_x$, since $x\to  e^{2\pi i} x$ corresponds to $w\to e^{\pi i}\, w$. 
Let us fix a choice of half plane, and consider the path ${\mathcal{P}}$ shown in Figure \ref{fig:local-F0-stokes-primary}. The initial point is along the negative real axis, while the final point is along the positive real axis. This gives the transport matrix 
\be\label{eq:conifold-monodromy-matrix}
\begin{split}
M_{{\mathcal{P}}} 
	&= 
	[S^{(-1)}(\xi_1)]^{-1} \Uij_{31}^{-1} [S^{(0)}]^{-1} \Uij_{43}^{-1} 
	=\
	\left(
\begin{array}{cc}
 0 & -i \CX_A \CX_B \\
 -\frac{i}{\CX_A \CX_B} & \frac{\CX_A}{\CX_B}(1-\xi_1^{-1}{\CX_B^2}) \\
\end{array}
\right)
\,.
\end{split}
\ee 
This matrix depends on the choice of half-plane that we made, and any other choice would give a transport related to this one by conjugation $U_1^{-1}M_{{\mathcal{P}}}U_2$.
If the $\IZ_2$ symmetry $w\to -w$ was fully preserved we would have $U_1=U_2$ and the trace would be invariant under rotations.
However we are explicitly breaking this symmetry with the choice of trivialization, and therefore it matters which half-plane we choose. \footnote{
The choice that we made has the property that a rotation by $\pi$ of the half-plane would give a matrix with the same trace (but not rotations by other phases). Relations to other choice of half-planes can be obtained by careful analysis of how the basis of solutions changes due to the logarithmic cuts, and amounts to including certain factors of $\xi$ in the conjugation matrices.}
The trace is
\begin{equation}
	\tr M_\CP= \frac{\CX_A}{\CX_B}-\frac{\CX_A \CX_B}{\xi _1}\,.
\end{equation}
Working over $\IC$ this reduces to the first term only
\begin{shaded}
\begin{equation}\label{eq:conifold-half-mondromy}
	\tr \mathbb{M}_\CP= \frac{\CX_A}{\CX_B}\,.
\end{equation}
\end{shaded}

\subsubsection{Comparison with String Theory and D-brane charges}
The WKB curve of the original equation \eqref{eq:qDEConifold} coincides with the curve of the resolved conifold in quadratic framing, namely $y'+x' / (Q' y')=1+x'$ as studied e.g. in \cite{Banerjee:2019apt}. The WKB for the equation in involutive form \eqref{eq:conifold-curve-class} is related to this one by identifications $x'=x, y' = \frac{1}{2}Q x^{1/2} \,y$.
Taken in involutive form as in \eqref{eq:conifold-curve-class}, the curve can also be regarded as the orbifold limit of the mirror curve of local $\IF_0$, studied below.\footnote{This is \eqref{eq:P1P1Mirror} with $\kappa=0,\tau = 2 Q^{-1}$ and $x_{\IF_0} = w_{\rm{conifold}}$.} 

Following the ideas of Section \ref{sec:qMonodromy}, we write the infinite monodromy matrix for WKB solutions defined over $\mathbb{C}$:
\be
\begin{split}
\mathbb{M}_{{\mathcal{P}}}  
& =
\left(
\begin{array}{cc}
 0 & -i \CX_A \CX_B \\
 -\frac{i}{\CX_A \CX_B} & \frac{\CX_A}{\CX_B} \\
\end{array}
\right)
\delta_{m,n}
+ 
\left(
\begin{array}{cc}
 0 & 0 \\
0 &  - \xi_1^{-1} \CX_A\CX_B\\
\end{array}
\right)    \delta_{m,n-1}
\end{split}
\ee
The (regularised) trace of the infinite-dimensional matrix retains only the $\xi$-independent terms. This amounts to working equivariantly with respect to the $\mathbb{Z}$-cover $\tSigma\rightarrow\Sigma$:
\be\label{eq:conifold-monodromy-final}
\begin{split}
	{\tr}\, \mathbb{M}_{{\mathcal{P}}}  
	&=\frac{\CX_A}{\CX_B}\,.
\end{split}
\ee
To identify these coordinates with physical D-brane charges in type IIA string theory, we take advantage of the identification of the curve \eqref{eq:conifold-curve-q} with the orbifold limit of that of local $\IF_0$, given in \eqref{eq:P1P1Mirror}. The identification involves setting $\tau = 2 Q^{-1}$ and $\kappa=0$, and it will be convenient to recall from \cite[eq. (2.10)]{DML2022} 
that $X_{D2_f\-\overline{D2}_b} = e^{-\frac{4\pi i}{\hbar} \log\tau}$.
It then follows immediately from the identifications \eqref{eq:cluster-vs-voros-F0} and \eqref{eq:D-brane-charges-gamma-F0} that we have
\be
\begin{split}
	\frac{\CX_B^4}{\CX_A^2} = X_{\gamma_1} = X_{\gamma_3} = X_{\frac{1}{2}(\gamma_1+\gamma_3)} =  X_{\frac{1}{2}(D2_f\-\overline{D2}_b)}&= e^{-\frac{2\pi i}{\hbar} \log\tau}
	\\
	\frac{1}{\CX_A^2} = X_{\gamma_2} = X_{\gamma_4} = X_{\frac{1}{2}(\gamma_2+\gamma_4)} = X_{\frac{1}{2}(\overline{D0}\-D2_f\-\overline{D2}_b)} &= e^{2\pi^2-\frac{2\pi i}{\hbar} \log\tau} 
\end{split}
\ee
Therefore \eqref{eq:conifold-monodromy-final} translates to
\be
	\frac{\CX_A}{\CX_B} = (X_{\gamma_1+\gamma_2})^{1/4}
	= e^{\frac{\pi^2}{2}-\frac{\pi i}{\hbar} \log\tau} 
\ee
Noting that in this limit we have $X_{D2_f} = \CX_{D0}^{-1/2} = e^{\frac{2\pi^2}{\hbar}}$, it follows that $X_{D2_b} = e^{\frac{2\pi^2}{\hbar}+\frac{4\pi i}{\hbar} \log\tau}$ this may also be written as
\be
	\frac{\CX_A}{\CX_B} = (X_{D0\-\overline{D2}_b})^{1/4} \,.
\ee

\subsection{The $q$-Mathieu equation}\label{sec:local-F0}

All the examples we considered up to now are degenerations of the equation
\begin{equation}\label{eq:qP1P1}
\psi(qx)+\psi(q^{-1}x)=\left[-\tau(x+x^{-1})+\kappa \right]\psi(x),
\end{equation}
or in matrix form
\begin{equation}\label{eq:qConnF0}
    \Psi(qx)=A(x)\Psi(x),\qquad A(x)=\left( \begin{array}{cc}
        0 & 1 \\
        -1 & \kappa-\tau\left(qx+q^{-1}x^{-1} \right)
    \end{array} \right),
\end{equation}
whose differential limit is the modified Mathieu equation. Its WKB curve is
\be\label{eq:P1P1Mirror}
	y+y^{-1} + \tau(x+x^{-1}) - \kappa = 0.
\ee 
In suitable limits, this reduces to WKB curves studied so far:
\begin{itemize}
\item Rescaling $x \to -2 x/\tau$ and $\kappa\to 2\kappa$, and taking $\tau\to\infty$ reduces to
\be
	y+y^{-1} = 2( x + \kappa)
\ee
which is the curve \eqref{eq:curve-half-geometry} of the q-Airy$_\kappa$ equation.
\item With the substitutions $x\to x^{1/2}$, $\tau\to -2Q^{-1}$ and taking the limit $\kappa\to 0$, this reduces to the hypergeometric WKB curve \eqref{eq:conifold-curve-class}.
\end{itemize}

\begin{figure}[h!]
\begin{center}
\includegraphics[width=0.75\textwidth]{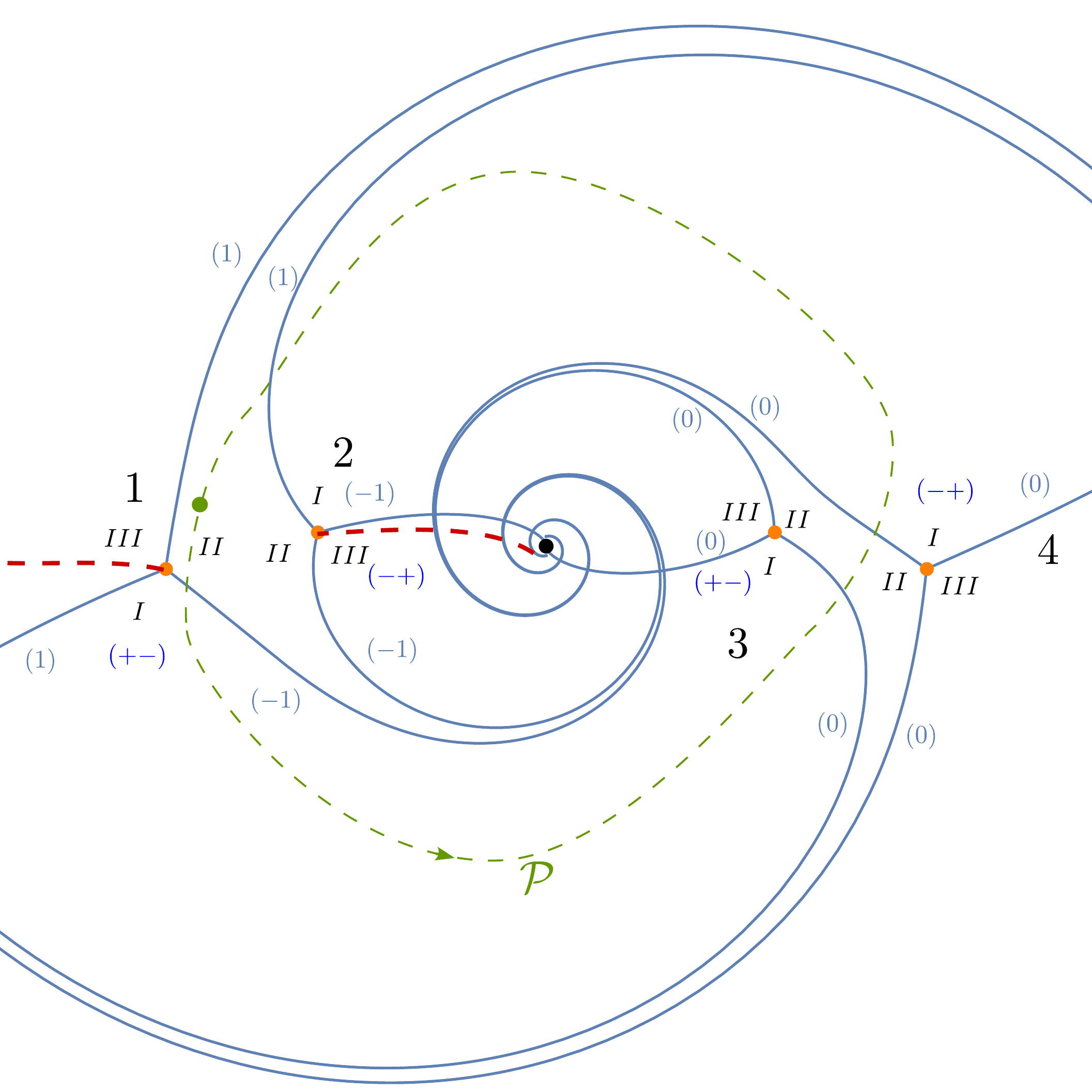}
\caption{
The Stokes graph of local $\mathbb{F}_0$ for $\vartheta = 0.6,\,\kappa=0.03 i,\,\tau=0.97$. Only primary Stokes lines (those that emanate directly from branch points) are shown. The label $(\ell)$ near each trajectory denotes the type of Stokes matrix $S^{(\ell)}$ associated to each Stokes line, as discussed in Section \ref{sec:stokes-data}. The assignment of these labels follows from a detailed analysis of the basis $\Psi$ based on global trivialization data. What emerges is that the two branch points encircled by logarithmic cuts have opposite signs for their local bases of solutions.
}
\label{fig:local-F0-stokes-primary}
\end{center}
\end{figure}

The curve has four quadratic branch points. We show the WKB Stokes graph and Stokes sectors for $\kappa=0.03i,\,\tau=0.97$ and $\vartheta=0.6$ in  Figure \ref{fig:local-F0-stokes-primary}.
As usual, square-root cuts lie along primary Stokes lines, see Section \ref{sec:stokes-data}. 
Labeling the branch points by $a=1,2,3,4$, and the regions near each branch point by $I,II,III$ as in Figure \ref{fig:local-F0-stokes-primary}, 
the normalised solutions near each branch point are related by Stokes matrices as follows:
\be
\begin{split}
	&
	\Psi_{1}^{II} = \Psi_{1}^{I}\, S^{(-1)}(\xi_1)\,,
	\qquad
	\Psi_{1}^{III,\uparrow} = \Psi_{1}^{II}\, S^{(1)}(\xi_1)\,,
	\qquad
	\Psi_{1}^{I} = \Psi_{1}^{III,\downarrow}\, S^{(1)}(\xi_1)
	\\
	&
	\Psi_{2}^{II} = \Psi_{2}^{I}\, S^{(1)}(\xi_2)\,,
	\qquad
	\Psi_{2}^{III,\downarrow} = \Psi_{2}^{II}\, S^{(-1)}(\xi_2)\,,
	\qquad
	\Psi_{2}^{I} = \Psi_{2}^{III,\uparrow}\, S^{(-1)}(\xi_2)
	\\
\end{split}
\ee
\begin{equation}
    \Psi_3^{\star+I}=\Psi_3^{\star}S^{(0)},\qquad \Psi_4^{\star+I}=\Psi_4^{\star}S^{(0)},\qquad \star=I,II,III\mod{3} .
\end{equation}
Globally, nearby regions of pairs of distinct branch points are identified as follows
\be
	R_{1,II} = R_{2,II}
	\qquad
	R_{1,I} = R_{3,I}
	\qquad
	R_{2,I} = R_{4,I}
	\qquad
	R_{3,II} = R_{4,II}\,.
\ee
The Borel sum of formal WKB series within each region can be analytically continued from one branch point to the other without discontinuities or logarithmic shifts. 
Recall from Section \ref{sec:qFG} that the change of normalisations corresponding to parallel transports is encoded by a transition matrix which is either anti-diagonal or diagonal, depending on whether or not there is an exchange of the dominant and subdominant solutions.
We thus have
\begin{equation}
    \Psi_1^{II}=\Psi_2^{II} \Uij_{21},\qquad \Psi_1^{I}=\Psi_3^{I} \Uij_{31},\qquad \Psi_4^{I}=\Psi_2^{I}X_{24},\qquad\Psi_3^{II}=\Psi_4^{II}\Uij_{43},
\end{equation}
with
\be\begin{split}
	\Uij_{21} =\left(\begin{array}{cc}
		0 & i \CU_{21} \\
		i \CU_{21}^{-1} & 0
	\end{array} \right)
	\,,
	\qquad
	\Uij_{31} =\left(\begin{array}{cc}
		\CU_{31} &  0 \\
		0 & \CU_{31}^{-1}
	\end{array} \right)
	\,,
	\\
	\Uij_{42} =\left(\begin{array}{cc}
		\CU_{42} & 0 \\
		0 & \CU_{42}^{-1} 
	\end{array} \right)
	\,,
	\qquad
	\Uij_{43} =\left(\begin{array}{cc}
		0 & i\CU_{43} \\
		i\CU_{43}^{-1} & 0
	\end{array} \right)
	\,.
\end{split}
\ee

\subsubsection{Monodromy}

The monodromy of the WKB solutions \eqref{eq:psipmnleading} around the puncture at the origin in $\IC^*_x$, (${\mathcal{P}}$ in Figure \ref{fig:local-F0-stokes-primary}) with base-point in region $II$ of branch point $1$, is computed by the following: 
\be\label{eq:localF0-monodromy-matrix}
\begin{split}
	M_{{\mathcal{P}}}
	&= 
	[S^{(-1)}(\xi_1)]^{-1} \Uij_{31}^{-1} S^{(0)} \Uij_{43}^{-1} [S^{(0)}]^{-1} \Uij_{42} S^{(1)}(\xi_2) \Uij_{21}
	\\
	&=
	\left(
	\begin{array}{cc}
	 \frac{\CU_{31} \CU_{42} \CU_{43}}{\CU_{21}} 
	 & \quad
	 -i\frac{\CU_{21} \CU_{31} \CU_{43}}{\CU_{42}}
	 (1-\xi_2\, \CU_{42}^2)
	 \\
	i \frac{\CU_{42} \CU_{43}}{\CU_{21}\CU_{31}}(1-\xi_1^{-1}\CU_{31}^2)
	 & \quad
	 \frac{\CU_{21}}{\CU_{31}\CU_{42}\CU_{43}}
	 \left( 1+ \CU_{43}^2(1-\xi_1^{-1} \CU_{31}^2)(1-\xi_2\CU_{42}^2) \right)
	 \\
\end{array}
\right).
\end{split}
\ee 
Recall from \eqref{eq:log-sqrt-monodromy-constraint} that the Stokes matrices $S^{(\mp1)}$ for the Stokes line emanating from branch points $1$ and $2$ 
depend on the parameters
\begin{equation}\label{eq:xi-k-F0}
    \xi_k=\left(\frac{x}{x_k} \right)^{\frac{2\pi i}{\hbar}}
\end{equation} 
where $x_k$ is the position of the $k$-th branch point, and $x$ is the position of the basepoint for the loop $\mathcal{P}$.
\footnote{In \eqref{eq:localF0-monodromy-matrix} all $\xi_k$ are actually evaluated at $x\,e^{2\pi i}$ because they are always evaluated at the same point as the solution $\Psi$.}
In our conventions
\be\label{eq:x1-x2-F0}
	x_1 = \frac{\kappa -2-\sqrt{(\kappa-2)^2-4 \tau ^2}}{2 \tau }\,,\qquad
	x_2 = \frac{\kappa -2+\sqrt{(\kappa-2)^2 -4 \tau ^2}}{2 \tau }
\ee

The trace of the monodromy around the origin is
\begin{equation}
\begin{split}
    \tr M_{\CP} &=\CU_{31} \CU_{42} \CU_{43} \CU_{21} 
	\frac{\xi_2}{\xi_1}
    +\frac{\CU_{43} \CU_{21}}{\CU_{31} \CU_{42}}+\frac{\CU_{21}}{\CU_{31} \CU_{42} \CU_{43}}+\frac{\CU_{31} \CU_{42} \CU_{43}}{\CU_{21}} \\
    &
    -\frac{ \CU_{42} \CU_{43} \CU_{21} }{\CU_{31}}\xi_2-\frac{\CU_{31} \CU_{43} \CU_{21}}{\CU_{42}}\frac{1}{\xi_1}
\end{split}
\end{equation}
Terms in the second line depend on a choice of basepoint $x$ for the loop $\mathcal{P}$ through \eqref{eq:xi-k-F0}, while those in the first line are independent of the basepoint.
Working on the field of $q$-periodic functions one would keep all terms, but when working over $\IC$ gauge invariance only holds for terms in the first line
\begin{shaded}
\be\label{eq:localF0-monodromy-final}
\begin{split}
    \tr \mathbb{M}_{\CP} &=\CU_{31} \CU_{42} \CU_{43} \CU_{21} 
	\left(\frac{x_1}{x_2} \right)^{\frac{2\pi i}{\hbar}}
    +\frac{\CU_{43} \CU_{21}}{\CU_{31} \CU_{42}}+\frac{\CU_{21}}{\CU_{31} \CU_{42} \CU_{43}}+\frac{\CU_{31} \CU_{42} \CU_{43}}{\CU_{21}} \\
\end{split}
\ee
\end{shaded}
with $x_1, x_2$ explicit functions of moduli $\kappa,\tau$ given in \eqref{eq:x1-x2-F0}.

\subsubsection{Exponential networks and cluster coordinates}
Like in the previous hypergeometric example, the trace of the monodromy is not constant, but rather it is a $q$-periodic function. 
The matrix form \eqref{eq:qConnF0} has nilpotent leading behaviour at the origin.

It is however interesting to relate our Voros symbols $\CX_{ij}$ with cluster coordinates arising in the framework of exponential networks and quivers. 
As discussed in Section \ref{sec:qMonodromy}, to do this we have to consider an infinite-dimensional space of solutions over $\mathbb{C}$, so that the monodromy matrix takes the form  
\be
\begin{split}
\mathbb{M}_{{\mathcal{P}}}  
& =
\left(
	\begin{array}{cc}
	 \frac{\CU_{31} \CU_{42} \CU_{43}}{\CU_{21}} 
	 & \quad
	 -\frac{i \CU_{21} \CU_{31} \CU_{43}}{\CU_{42}}
	 \\
	i \frac{\CU_{42} \CU_{43}}{\CU_{21}\CU_{31}}
	 & \quad
	 \frac{\CU_{21}}{\CU_{31}\CU_{42}\CU_{43}}
	 \left( 1+ \CU_{43}^2(1+\frac{\xi_2}{\xi_1} \CU_{31}^2\CU_{42}^2) \right)
\end{array}
\right)
\delta_{m,n}
\\
&
+
\left(
	\begin{array}{cc}
	0
	 & \quad
	i \xi_2  \, \CU_{21} \CU_{31} \CU_{43} \CU_{42}
	 \\
	0
	 & \quad
	- \xi_2 \frac{\CU_{21}\CU_{42}\CU_{43}}{\CU_{31}}
\end{array}
\right)   
\delta_{m,n+1}
+ 
\left(
	\begin{array}{cc}
	0
	 & \quad
	 0
	 \\
	- i \xi_1^{-1}\frac{i\CU_{42} \CU_{43}\CU_{31}}{\CU_{21}}
	 & \quad
	 -\xi_1^{-1}
	 \frac{\CU_{21}\CU_{31}\CU_{43}}{\CU_{42}}
\end{array}
\right)  
\delta_{m,n-1}
\end{split}
\ee
The (regularised) trace of the infinite-dimensional matrix retains only the $\xi$-independent terms. This amounts to working equivariantly with respect to the $\mathbb{Z}$-cover $\tSigma\rightarrow\Sigma$, and gives exactly \eqref{eq:localF0-monodromy-final}.

In the parametrisation coming from exponential networks, the lattice $\Gamma$ of physical cycles as introduced in Definition \ref{def:physical} was introduced in \cite{Banerjee:2020moh} from a basis of vanishing cycles on $\Sigma$ corresponding to the (mirror) of an exceptional collection for local $\mathbb{F}_0$. To each element of the basis is associated a saddle of the Stokes diagram,  shown in Figure \ref{fig:localF0-basis}, and a cluster variables $X_\gamma$ is labeled by lattice element $\gamma$. The monodromy in terms of cluster variables is given by~\eqref{eq:ExpNetTr} and~\eqref{eq:ExpNetGamma}: 
\be\label{eq:localF0-monodromy-gamma}
	\tr\,\mathbb{M}_{{\mathcal{P}}}=
	X_{\frac{1}{2}(\gamma_3+\gamma_4)}+X_{\frac{1}{2}(-\gamma_3-\gamma_4)}
	+X_{\frac{1}{2}(-\gamma_1+\gamma_2+\gamma_{D0})}+X_{\frac{1}{2}(\gamma_1-\gamma_2-\gamma_{D0})+(\gamma_2 + \gamma_4)}.
\ee
where 
\be\label{eq:total-charge-F0}
	\gamma_{D0} = \gamma_1+\gamma_2+\gamma_3+\gamma_4
\ee 
is the cycle corresponding to the (mirror) D0 brane charge.
One gets the following identification: 
\be\label{eq:cluster-vs-voros-F0}
\begin{split}
	X_{\gamma_1} & = 
	\frac{X_{\gamma_{D0}}}{\CU_{31}^2\CU_{42}^2\CU_{43}^2}\left(\frac{x_1}{x_2} \right)^{\frac{2\pi i}{\hbar}} \,,
	\\
	X_{\gamma_3} &= \frac{\CU_{31}^2\CU_{42}^2}{\CU_{21}^2}\,,	
\end{split}
\qquad
\begin{split}
	X_{\gamma_2} & = {\CU_{21}^2 }\left(\frac{x_1}{x_2} \right)^{-\frac{2\pi i}{\hbar}}\,,
	\\
	X_{\gamma_4} &= {\CU_{43}^2}\,,
\end{split}
\ee
so that indeed 
\be\label{eq:D0-in-F0}
	X_{\gamma_1+\gamma_2+\gamma_3+\gamma_4} = X_{\gamma_{D0}}\,.
\ee

An alternative, and more systematic way to obtain these identifications is as follows. The cluster coordinates $X_\gamma$ correspond to integrals of the quantum differential along cycles on $\Sigma$ that correspond to lifts of saddles connecting pairs of branch points on $\IC^*_x$.
The corresponding periods therefore corresponds to diagonal lifts of the transport between pairs of branch points. The diagonal lift is obtained by replacing Stokes matrices $S^{(\ell)}$ with just the branch cut matrix $\beta = \left(\begin{array}{cc}0&i\\i&0\end{array}\right)$.
\begin{itemize}
\item
To compute $X_{\gamma_4}$ we consider the path corresponding to the projection of $\gamma_4$ in Figure \ref{fig:localF0-basis}, which is a short path between branch points $3$ and $4$. The cycle closes up if the path is lifted to sheets $(+,n)$ and $(-,n+\Delta n)$ with $\Delta n=0$. Therefore the quantum period is
\be
	X_{\gamma_4} = \exp\left(\frac{1}{\hbar}\int_{x_3}^{x_4} (\log y_+ - \log y_-) \frac{dx}{x} + \dots\right) \equiv \CU_{43}^{2}
\ee
\item
For $\gamma_3$ we have a similar computation for a path connecting again $x_3$ and $x_4$, but passing between branch points $x_1$ and $x_2$
\be
	X_{\gamma_3} = \exp\left(\frac{1}{\hbar}\int_{x_4}^{x_3} (\log y_+ - \log y_-) \frac{dx}{x} + \dots\right) 
\ee
which should be identified with the diagonal transport
\be
\left(\begin{array}{cc}
	X_{\gamma_3}^{1/2} & 0\\
	0 & X_{\gamma_3}^{-1/2}
	\end{array}\right)
	=
	\Uij_{42}\cdot \beta\cdot \Uij_{21}\cdot \beta\cdot [\Uij_{31}]^{-1} \cdot \beta= 
	\left(\begin{array}{cc}
	\frac{\CU_{31}\CU_{42}}{\CU_{21}} & 0\\
	0 & \frac{\CU_{21}}{\CU_{31}\CU_{42}}
\end{array}\right)\,.
\ee

\item
For $\gamma_2$ the computation is again similar to that of $\gamma_4$ but with an important twist.
The cycle closes up at the branch points if the logarithmic shift of the lifts of $\log y_+$ and $\log y_-$ is zero $\Delta n=0$.
But in-between the branch points the integral features $\Delta n=-1$ because both branch points are surrounded by a logarithmic cut. Therefore
\be
	X_{\gamma_2} = \exp\left(\frac{1}{\hbar}\int_{x_2}^{x_1} (\log y_+ - \log y_- - 2\pi i ) \frac{dx}{x} + \dots\right) \equiv  \left(\frac{x_1}{x_2}\right)^{-\frac{2\pi i}{\hbar}} \CU_{21}^{2}\,,
\ee
where we used $\CU_{21}$ as the parallel transport without logarithmic shifts, consistently with our previous definitions.

\item
Finally, the most complicated case if that of $\gamma_1$. 
The relevant integration path is the projection of $\gamma_1$ in Figure \ref{fig:localF0-basis} is a path that connects branch points $1$ and $2$ while passing between $3$ and $4$. 
This combines the features of $\gamma_2$ (the logarithmic shift) with those of $\gamma_3$ (transport around $\IC^*$. Combining these features brings up a new phenomenon: the log-shift now picks up an additional overall factor $x_1/x_2\to x_1/x_2\, e^{2\pi i}$ due to the winding on $\IC^*_x$. Therefore
\be
\begin{split}
	X_{\gamma_1} 
	& = \exp\left(\frac{1}{\hbar}\int_{x_2}^{x_1} (\log y_+ - \log y_- + 2\pi i ) \frac{dx}{x} + \dots\right) 
	\\
	& =   \left(\frac{x_1}{x_2}\, e^{2\pi i}\right)^{\frac{2\pi i}{\hbar}} \,\frac{1}{\CU_{31}^2\CU_{43}^2\CU_{42}^2} 
	\\
	& = 
	\left(\frac{x_1}{x_2}\,\right)^{\frac{2\pi i}{\hbar}} \,\frac{X_{\gamma_{D0}}}{\CU_{31}^2\CU_{43}^2\CU_{42}^2}
\end{split}
\ee
where the monodromy without logarithmic contributions can be deduced by the diagonal transport matrix
\be
	\left(\begin{array}{cc}
	-X_{\gamma_1}^{1/2} & 0\\
	0 & -X_{\gamma_1}^{-1/2}
	\end{array}\right)
	=
	[\Uij_{31}]^{-1}\cdot \beta\cdot [\Uij_{43}]^{-1}\cdot \beta\cdot \Uij_{42} \cdot \beta= 
	\left(\begin{array}{cc}
	-\frac{1}{\CU_{31}\CU_{43}\CU_{42}} & 0\\
	0 & -\CU_{31}\CU_{43}\CU_{42}
	\end{array}\right)\,.
\ee
\end{itemize}

\begin{figure}[h!]
\begin{center}
\includegraphics[width=0.30\textwidth]{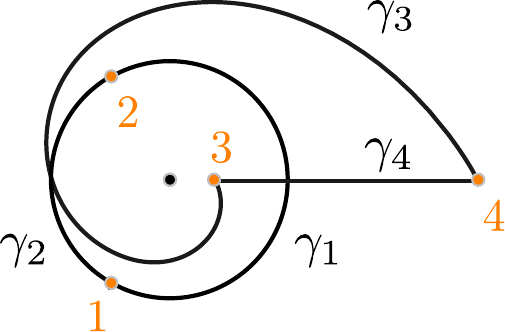}
\caption{Saddles for basis cycles for local $\IF_0$, at a point in moduli space studied in \cite{Banerjee:2020moh}. 
}
\label{fig:localF0-basis}
\end{center}
\end{figure}

\subsubsection{Comparison with String Theory, and D-brane charges}
Equation \eqref{eq:qP1P1} is the quantum mirror curve to local $\mathbb{P}^1\times\mathbb{P}^1$, geometrically engineering 5d pure $SU(2)$ Super Yang-Mills and related to the q-Painlev\'e $III_3$ equation \cite{Bonelli2017,Bershtein2017,Bershtein2018,DML2022}.
\begin{enumerate}
    \item Not all $\CU_{\gamma_i}$'s obtained from WKB approximation can be independent, since
    the $q$-difference equation has only three parameters: $\tau,\kappa,q$, so it would be impossible to have four independent cluster coordinates or Voros symbols.\footnote{As made explicit in \eqref{eq:D0-in-F0} their product (with appropriate signs) gives $X_{\gamma_{D0}} = e^{-\frac{4\pi^2}{\hbar}}$.
    However, this is not the issue since we are treating $q=e^\hbar$ as one of the variables. 
    In addition to this relation there is another one, which is not algebraic.} 
    A geometric consequence of this fact is that the space of periods of the curve \eqref{eq:P1P1Mirror} can't coincide with the moduli space of stability conditions for local $\mathbb{P}^1\times\mathbb{P}^1$, but only with a ``physical subspace''.\footnote{We thank T. Bridgeland for this comment.}
    
    \item The match between Voros symbols and Exponential networks allows us to make some further consideration of string theoretic nature. The identifications of D-brane charges with $\gamma_i$ from \cite{Banerjee:2020moh} adapted to our choice of moduli are\footnote{Compared to \cite{Banerjee:2020moh} we have shifted $D2_f\to D2_f\-{D0}$ and $D4\to D4\-D0$. 
    The shifts by $D0$ can be explained by the fact that (as explained in appendix) the two points in moduli space are related by a deformation that turns around a degeneration point for $\gamma_2,\gamma_1$, where a ``Seidel twist'' can occur. See \cite{Banerjee:2019apt} for a discussion of shifts of charges by Seidel twists in the context of exponential networks. This mapping of charges is compatible with~\eqref{eq:total-charge-F0}.}
\be\label{eq:D-brane-charges-gamma-F0}
	\gamma_1 : \ D0\-D4\,,
	\quad
	\gamma_2 : \ {D2}_f\-\overline{D4}\,,
	\quad
	\gamma_3 : \ \overline{D0}\-{D2}_b\-\overline{D2}_f\-\overline{D4}\,,
	\quad
	\gamma_4 : \ {D0}\-\overline{D2}_b\-{D4}\,.
\ee
Therefore the monodromy \eqref{eq:localF0-monodromy-gamma} can be rewritten as follows
\be\label{eq:F0-monodromy-D-branes}
	{\tr}\, \mathbb{M}_{{\mathcal{P}}}
	=
	X_{\frac{1}{2} \overline{D2}_f}
	+ X_{\frac{1}{2}D2_f}
	+ X_{\frac{1}{2} D2_f\- \overline{D4}}
	+ X_{D0\-\frac{1}{2} D2_f\-\overline{D2}_b\-{D4}}
\ee
Taking degeneration limits of local $\IF_0$ to conifold, to the half-geometry and to $\IC^3$ gives the monodromies found earlier, as summarized in Table \ref{tab:recap}.

\item
In the limit $\tau\to 0$ the WKB curve \eqref{eq:P1P1Mirror} reduces to the curve \eqref{eq:curve-half-geometry}, after a rescaling of $x$ by $\tau$.
In this limit branch points 2 and 3 fall into the puncture at $x=0$ and disappear. 
Due to this both $\CU_{21}, \CU_{43}\to 0$ but at the same time the `diagonal transports' (such as those used to identify $X_{\gamma_i}$) between branch points $1$ and $4$ remain finite. These are readily identified with $\CU_{21}\CU_{42}^{-1}$ for the path running above the origin, and with $\CU_{31}\CU_{43}$ for the path running below the origin. Therefore in \eqref{eq:localF0-monodromy-final} the terms $\CU_{31} \CU_{42} \CU_{43} \CU_{21} 
	\left(\frac{x_1}{x_2} \right)^{\frac{2\pi i}{\hbar}}
    +\frac{\CU_{43} \CU_{21}}{\CU_{31} \CU_{42}}$ vanish. On the other hand, the terms $\frac{\CU_{21}}{\CU_{31} \CU_{42} \CU_{43}}+\frac{\CU_{31} \CU_{42} \CU_{43}}{\CU_{21}}$ tend to \eqref{eq:trace-half-geom-D2} with the identification $\CU_1 = \CU_{31}\CU_{43}$ and $\CU_2 = \CU_{42} / \CU_{21}$. Identifying $D2$ in \eqref{eq:trace-half-geom-D2} with $D2_f$ also provides a consistency check on D-brane charge assignments in both models.

\item
In the 4d limit of the 5d $\CN=1$ Yang-Mills theory on $S^1\times \IR^4$, both $D0$ and $D2_b$ have central charges that grow infinite \cite{Banerjee:2020moh}. 
As a consequence the last term in \eqref{eq:localF0-monodromy-gamma} vanishes in the limit leaving only the first three terms. 
In terms of electric-magnetic charges $(m,e)$ of the low energy 4d $\CN=2$ Yang-Mills theory on the Coulomb branch, the $D2_f$ brane corresponds to the $W$-bosons and carries electric charge $(0,2)$, while the $D4$ brane corresponds to the magnetic monopole with charge $(1,0)$. The remaining terms in the 4d limit therefore have charge
\be
	\left({X_{\gamma_3+\gamma_4}}\right)^{1/2}
	+
	\left(\frac{1}{X_{\gamma_3+\gamma_4}}\right)^{1/2}
	+
	\left({X_{\gamma_2-\gamma_1+\gamma_{D0}}}\right)^{1/2}
	=
	\sqrt{X_{(0,-2)}}+ \sqrt{X_{(0,2)}}+\sqrt{X_{(-2,2)}} 
\ee
These agree with the prediction for the 4d theory computed in \cite[eq. (10.33)]{Gaiotto:2010be} with the identifications $X_{GMN}=X_{-\gamma_2}$ and $Y_{GMN}=X_{-\gamma_1+\gamma_{D0}}$. It also agrees with the limit from the q-Painlev\'e Hamiltonian to the Hamiltonian of the open relativistic Toda chain \cite{Bershtein2017}, which is related to the 4d theory in the work \cite{Cirafici:2017iju}.

\item Relation \eqref{eq:cluster-vs-voros-F0} shows that Voros symbols associated to neighboring branch points of type 1 in Section \ref{sec:qFG} are cluster variables, while those of type 2 are not. This is consistent with the triviality of the cross ratios of Wronskians \eqref{eq:qWronskiansCase2} in this latter case, and with the fact that only coordinates of type 1 are associated to a saddle of the Stokes diagram.
\end{enumerate}

\appendix

\section{Monodromy computation via exponential networks}\label{app:localF0}

In this appendix we provide an alternative derivation of the result \eqref{eq:localF0-monodromy-gamma} using the framework of nonabelianization in exponential networks \cite{Banerjee:2018syt}.
The exponential network for the mirror curve of local $\IF_0$ was studied in \cite{Banerjee:2020moh}, although in a different region of moduli space than the one considered here. 
For ease of comparison with \eqref{eq:localF0-monodromy-gamma} we briefly set up the computation from scratch for the same choice of moduli adopted in the present work. A connection to the analysis of \cite{Banerjee:2020moh} can be established naturally by moving between the two points in moduli space, as described in Section \ref{sec:local-F0}.

\begin{figure}[h!]
\begin{center}
\includegraphics[width=0.75\textwidth]{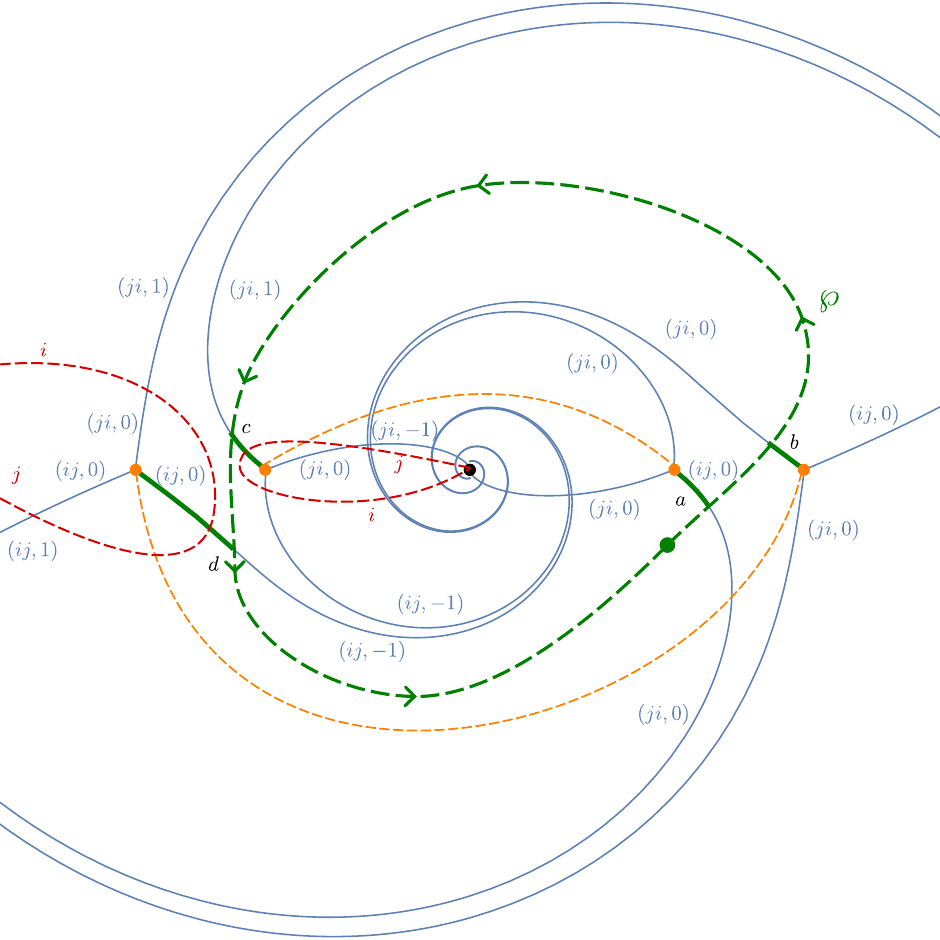}
\caption{Exponential networks for local $\IF_0$ with a choice of trivialization and soliton labels for trajectories.}
\label{fig:exp-net-computation}
\end{center}
\end{figure}

The exponential network is shown in Figure \ref{fig:exp-net-computation}, here we adopt a slightly different trivialization than in the main text (this will not affect the result).
Only primary trajectories are shown, and their labels correspond to the orientation that flows out from the branch points.
We follow the convention that a trajectory of type $(ij,n)$ corresponds to a solution of the ODE
\be
	(\log y_j - \log y_i + 2\pi i \, n) \frac{d\log x}{dt} = e^{i\vartheta}
\ee
for a network at phase $\vartheta$.
Sheets are labeled by $i,j$ and the match with labeling $+/-$ adopted in the main text can be carried out (although the choice of trivialization is different, as already stressed).

Logarithmic cuts are shown as red dashed lines beginning and ending at punctures (black dots), while square-root cuts are shown as orange dashed lines beginning and ending at branch points (orange dots). 

We consider a path ${\mathcal{P}}$ shown as a green dashed line, with basepoint denoted by a green dot.
The parallel transport along this path picks up contributions from detours at four Stokes lines, one from each of the branch points. Detour paths are highlighted by thick solid green segments of the network trajectories.
The path does not cross any branch cuts, and features diagonal matrices between detours, which we omit to lighten notation.
We shall work in conventions where a Stokes matrices are 2 by 2, with sheet $i$ representing the first basis vector in the space of solutions, and label $j$ the second basis vector.
The parallel transport is therefore
\be
	F({\mathcal{P}},\vartheta) = 
	\(\begin{array}{cc}
	1 & X_a \\
	0 & 1
	\end{array}\)
	\(\begin{array}{cc}
	1 & 0 \\
	X_b & 1
	\end{array}\)
	\(\begin{array}{cc}
	1 & 0 \\
	X_c & 1
	\end{array}\)
	\(\begin{array}{cc}
	1 & X_d \\
	0 & 1
	\end{array}\)
	=
	\(\begin{array}{cc}
	1 + X_a X_b + X_a X_c & \star \\
	\star & 1 + X_cX_d + X_b X_d
	\end{array}\)
\ee
Due to the fact that we suppressed the diagonal transport, it has been left understood that the terms $1$ in the diagonal elements are the lifts of ${\mathcal{P}}$ to sheets $i,j$ denoted ${\mathcal{P}}^{(i)},{\mathcal{P}}^{(j)}$ respectively.
The trace is therefore
\be
	{\rm Tr} \, F({\mathcal{P}},\vartheta) 
	=
	X_{{\mathcal{P}}^{(i)}} + X_{{\mathcal{P}}^{(i)} ab} + X_{{\mathcal{P}}^{(i)} ac}
	+ X_{{\mathcal{P}}^{(j)}} + X_{{\mathcal{P}}^{(j)} cd} + X_{{\mathcal{P}}^{(j)} bd}
\ee
Now observe that the soliton charges have the following types
\be
	a: \ (ij,0)\,,
	\qquad
	b: \ (ji,0)\,,
	\qquad
	c: \ (ji,1)\,,
	\qquad
	d: \ (ij,-1)\,.
\ee
In taking the trace over the infinite-dimensional vector space of vacua labeled both by polynomial and logarithmic indices, only paths that close up on the logarithmic branch should be retained.
The overall shift of logarithmic index is as follows for the six terms found in the polynomial trace
\be
	\underbrace{X_{{\mathcal{P}}^{(i)}}}_{0} + 
	\underbrace{X_{{\mathcal{P}}^{(i)} ab}}_{0} + 
	\underbrace{X_{{\mathcal{P}}^{(i)} ac}}_{1}+ 
	\underbrace{X_{{\mathcal{P}}^{(j)}}}_{0} + 
	\underbrace{X_{{\mathcal{P}}^{(j)} cd}}_{-1} + 
	\underbrace{X_{{\mathcal{P}}^{(j)} bd}}_{0}
\ee
Therefore the only terms that survive in the total trace are
\be\label{eq:ExpNetTr}
	X_{{\mathcal{P}}^{(i)}} + X_{{\mathcal{P}}^{(i)} ab} 
	+ X_{{\mathcal{P}}^{(j)}} + X_{{\mathcal{P}}^{(j)} cd} 
\ee
We next determine homology classes of the four paths.
Comparing from Figure \ref{fig:localF0-basis} we see that ${\mathcal{P}}^{(i)} - {\mathcal{P}}^{(j)}= \gamma_1+\gamma_2-D0$. The appearance of $D0$ can be understood by noting that both $\gamma_{1},\gamma_2$ are lifts of the paths in Figure \ref{fig:localF0-basis} to sheets $(i,n)$ and $(j,n+\Delta n)$ such that the lifts match at the endpoints, where $\Delta n=0$. However, since both cycles end on branch points encircled by a logarithmic cut, if $\Delta n=0$ at the endpoints it must be nonzero when crossing the logarithmic cuts. In the case at hand we have $\Delta n=1$ and therefore $\gamma_1+\gamma_2$ is a cycle that corresponds to lifts of $\mathcal{P}^{(i)}$ and $\mathcal{P}^{(j)}$ to different logarithmic sheets, giving 
\be\label{eq:gamma1-gamma2-D0-appendix}
	X_{\gamma_1+\gamma_2} = X_{\mathcal{P}^{(i,n)} - \mathcal{P}^{(j,n+1)}} = X_{\mathcal{P}^{(i,n)} - \mathcal{P}^{(j,n)}} \, e^{\frac{1}{\hbar}\oint (2\pi i)\,\frac{dx}{x}} = X_{\mathcal{P}^{(i,n)} - \mathcal{P}^{(j,n)}} \, e^{-4\pi^2} = X_{\mathcal{P}^{(i,n)} - \mathcal{P}^{(j,n)}} \, X_{\gamma_{D0}}
\ee
A simpler alternative is to use $\gamma_3,\gamma_4$ instead of $\gamma_1,\gamma_2$: since these cycles are attached to standard (non-logarithmic) branch points, we have simply
\be
	X_{-\gamma_3-\gamma_4} = X_{\mathcal{P}^{(i,n)} - \mathcal{P}^{(j,n)}} 
\ee
but using the relation \eqref{eq:D0-in-F0} recovers once again \eqref{eq:gamma1-gamma2-D0-appendix} since
\be
	X_{-\gamma_3-\gamma_4} = X_{\gamma_1+\gamma_2-\gamma_{D0}}\,.
\ee

By $\IZ_2$ symmetry of the covering we can furthermore deduce that
\be
	{\mathcal{P}}^{(i)} = -{\mathcal{P}}^{(j)} = -\frac{1}{2} (\gamma_3+\gamma_4)\,.
\ee
Then noting that 
\be
	ab = \gamma_4\,,\qquad cd = \gamma_2
\ee
where the identification follows from the deformation involved in going from the moduli in Figure~\ref{fig:localF0-basis} to the moduli considered here. (Branch points 1 and 2 collide and open up again along the real axis in the process.)
Therefore we can rewrite terms in \eqref{eq:ExpNetTr} as 
\be\label{eq:ExpNetGamma}
\begin{split}
	X_{{\mathcal{P}}^{(i)}} = X_{-\frac{1}{2} (\gamma_3+\gamma_4)}\,,
	& \quad
	X_{{\mathcal{P}}^{(i)} ab} =  X_{\frac{1}{2} (\gamma_1-\gamma_2-\gamma_{D0}) +(\gamma_2+\gamma_4)}\,,\quad
	\\
	X_{{\mathcal{P}}^{(j)}} = X_{\frac{1}{2} (\gamma_3+\gamma_4)}\,,\quad
	&
	X_{{\mathcal{P}}^{(j)} cd} = X_{\frac{1}{2} (-\gamma_1+\gamma_2+\gamma_{D0})}\,.
\end{split}
\ee
The sum of these terms reproduces \eqref{eq:localF0-monodromy-gamma}.

\bibliography{Biblio}
\bibliographystyle{alph}

\end{document}